\newcommand{\TITLE}{An Angle-Based Algorithmic Framework for the Interval Discretizable Distance Geometry Problem}
\newcommand{\TITLERUNNING}{An Angle-Based Algorithmic Framework for the \textit{i}DDGP}
\newcommand{\FUNDING}{This work has been supported by
	the Brazilian research agencies
	CAPES (Grant No.
	88887.574952/2020-00),
	CNPq (Grants No.
	156812/2019-3,
	305227/2022-0, 404616/2024-0
	and 302520/2025-2), and
	FAPESP (Grants No.
	2013/07375-0, 2023/08706-1,
	2024/12967-8,
	2024/21786-7,
	and 2025/07213-7),
	%
	as well as the French research agency
	ANR (Grant No.
	ANR-19-CE45-0019).
}
\title{\TITLE\thanks{\FUNDING}}
\author{%
	W. da Rocha%
	\thanks{Departamento de Matemática Aplicada, Universidade Estadual de Campinas, Campinas, SP, Brazil.
		{\tt wdarocha@ime.unicamp.br}}
	\and
	C. Lavor%
	\thanks{Departamento de Matemática Aplicada, Universidade Estadual de Campinas, Campinas, SP, Brazil.
		{\tt clavor@unicamp.br}}
	\and
	L. Liberti%
	\thanks{Laboratoire d'Informatique de l'École Polytechnique, Institut Polytechnique de Paris, Île-de-France, France.
		{\tt leo.liberti@polytechnique.edu}}
	\and
	L. de Melo Costa%
	\thanks{Instituto de Pesquisas Energéticas e Nucleares, São Paulo, SP, Brazil.
		{\tt letmelocosta@gmail.com}}
	\and
	L. D. Secchin%
	\thanks{Departamento de Matemática Aplicada, Universidade Federal do Espirito Santo, ES, Brazil.
		{\tt leonardo.secchin@ufes.br}}
		\and
	T.E. Malliavin%
	\thanks{Laboratoire de Physique et Chimie Théoriques, Université de Lorraine, CNRS, LPCT, 54000 Nancy, France.
		{\tt therese.malliavin@univ-lorraine.fr}}
}
\date{\today \ (updated August 31, 2025)}
\newcommand{\thickhline}{\noalign{\hrule height 0.85pt}}
\theoremstyle{thmstyleone}
\newtheorem{proposition}{Proposition}%
\newtheorem{lemma}{Lemma}
\newtheorem{corollary}{Corollary}
\theoremstyle{thmstyletwo}
\theoremstyle{thmstylethree}
\newtheorem{definition}{Definition}
\begin{document}
	
	\maketitle
	
	\begin{abstract}
		Distance Geometry plays a central role in determining protein structures from Nuclear Magnetic Resonance (NMR) data, a task known as the Molecular Distance Geometry Problem (MDGP). A subclass of this problem, the Discretizable Distance Geometry Problem (DDGP), allows a recursive solution via the combinatorial Branch-and-Prune (BP) algorithm by exploiting specific vertex orderings in protein backbones. To accommodate the inherent uncertainty in NMR data, the interval Branch-and-Prune (\textit{i}BP) algorithm was introduced, incorporating interval distance constraints through uniform sampling. In this work, we propose two new algorithmic frameworks for solving the three-dimensional interval DDGP (\textit{i}DDGP): the interval Angular Branch-and-Prune (\textit{i}ABP), and its extension, the interval Torsion-angle Branch-and-Prune (\textit{i}TBP). These methods convert interval distances into angular constraints, enabling structured sampling over circular arcs. The \textit{i}ABP method guarantees feasibility by construction and removes the need for explicit constraint checking. The \textit{i}TBP algorithm further incorporates known torsion angle intervals, enforcing local chirality and planarity conditions critical for protein geometry. We present formal mathematical foundations for both methods and a systematic strategy for generating biologically meaningful \textit{i}DDGP instances from the Protein Data Bank (PDB) structures. Computational experiments demonstrate that both \textit{i}ABP and \textit{i}TBP consistently outperform \textit{i}BP in terms of solution rate and computational efficiency. In particular, \textit{i}TBP yields solutions with lower RMSD variance relative to the original PDB structures, better reflecting biologically plausible conformations.
	\end{abstract}
	
	\noindent \textbf{keywords:} Distance Geometry; Branch-and-Prune; Nuclear Magnetic Resonance; Protein Structures Characterization
	
	\newpage
	
	\section{Introduction}\label{sec:1}
	
	Determining the three-dimensional structure of proteins is a central problem in structural biology, as protein function is intrinsically determined by molecular conformation~\cite{2011_donald_aismb}. Given a protein's amino acid sequence, its spatial structure can be inferred through a variety of experimental techniques~\cite{1997_brunger_xcanrcvosad, 2019_mitra_rev}. Among these, Nuclear Magnetic Resonance (NMR) spectroscopy is particularly valuable for structure determination in solution. NMR experiments yield interatomic distance restraints, primarily between hydrogen atoms, derived from the Nuclear Overhauser Effect (NOE), which arises from magnetization transfer via dipolar coupling. Although these distance estimates are affected by uncertainties stemming from magnetization pathways and molecular flexibility, they provide essential geometric information for the reconstruction of protein conformations~\cite{1998_guntert_scobmfnd, 2008_levitt_sdbonmr}.
	
	The task of computing a protein's three-dimensional conformation from such distance constraints is formalized as the Molecular Distance Geometry Problem (MDGP), a central application of Distance Geometry~\cite{2013_mucherino_dgtmaa}. Given a set of pairwise inter-atomic distances, the objective is to reconstruct the corresponding molecular structure in Euclidean space~\cite{2014_liberti_edgaa}. This problem is known to be NP-hard~\cite{1980_saxe_eowgikisnh}, making efficient algorithmic approaches essential for practical applications.
	
	The most general formulation of this class of problems is the interval Distance Geometry Problem (\textit{i}DGP), which involves embedding a weighted graph into a normed space $\mathbb{R}^K$, with $K \in \mathbb{N}$. Given a graph $G = (V, E, \mathbf{d})$, the goal is to find a function $x \ : \ V \rightarrow \mathbb{R}^K$ that maps each vertex to a point in $\mathbb{R}^K$ such that the distances between mapped points match the given values in $\mathbf{d}$. In the case of protein structure determination, the norm is Euclidean, and the dimensionality is $K = 3$, leading to the \textit{i}DGP in $\mathbb{R}^3$, formally described as~\cite{2021_lavor_itspitibpaftdmdgp}:
	\begin{definition}
		Let $G = (V, E, d)$ be a simple, weighted, undirected graph, where $V$ is its set of vertices, $E$ is its set of edges, and $\mathbf{d}: E \to \mathcal{I}(\mathbb{R}_+)$ is a function that assigns an interval from $\mathcal{I}(\mathbb{R}_+)$, a set of positive real intervals, to each edge. The interval Distance Geometry Problem in $\mathbb{R}^3$ consists of finding a function $x \ : \ V \rightarrow \mathbb{R}^3$, called the realization, such that:
		\begin{equation}
			\forall \{u, v\} \in E, \ \|x_u - x_v\| \in \mathbf{d}_{u,v} \equiv \big[\underline{d}_{u,v}, \overline{d}_{u,v}\big],
			\label{sec:1:eq:instance_iDGP}
		\end{equation}
		
		\noindent with $x_u = x(u),\ x_v = x(v),\ \mathbf{d}_{u,v} = \mathbf{d}\big(\{u, v\}\big) \ \forall u, v \in V$, and $\|\cdot\|$ is the Euclidean norm.
	\end{definition}
	
	The survey~\cite{2014_liberti_edgaa} provides a detailed discussion on the existence and uniqueness of solutions to~\eqref{sec:1:eq:instance_iDGP}. In this context, each interval of~\eqref{sec:1:eq:instance_iDGP} is defined as $\mathbf{d}_{u,v} = \{d \in \mathbb{R}_+ \ | \ \underline{d}_{u,v} \leq d \leq \overline{d}_{u,v}\}$ which allows the problem to be rewritten as a system of inequalities:
	\begin{equation*}
		\forall \{u, v\} \in E, \ \underline{d}_{u,v} \leq \|x_u - x_v\| \leq \overline{d}_{u,v}.
	\end{equation*}
	
	When the distance between two vertices $u, v \in V$ is known exactly, the interval $\mathbf{d}_{u,v}$ reduces to a single value, transforming the corresponding inequality in~\eqref{sec:1:eq:instance_iDGP} into an equation. Thus some vertex pairs have exact distances. The case where all distance constraints are precise is known simply as the Distance Geometry Problem (DGP). Early approaches to solving the DGP treated it as a global minimization problem~\cite{2014_liberti_edgaa, 1997_more_gcfdgp},
	\begin{equation}
		\min_{\{x_w \in \mathbb{R}^{3|V|} \ | \ w \in V\}} \sum_{\{u,v\}\in E} \Big(\|x_u - x_v\|^2 - d^2_{u,v}\Big)^2
		\label{sec:1:eq:minglob_DGP}
	\end{equation}
	
	\noindent where a solution corresponds to a global minimizer of~\eqref{sec:1:eq:minglob_DGP}. Over the last two decades, new research has employed specific properties of the underlying graph $G$ to develop more efficient methods for solving the DGP~\cite{2013_alipanahi_dpsndcbsp, 2002_dong_altafstmdgpweiad, 2012_lavor_tdmdgp}. Despite these advances, the problem remains challenging both from a computational complexity perspective and in practical applications~\cite{2014_liberti_edgaa, 2013_souza_smdgpwidd}.
	
	Recent developments in protein structure determination have adopted geometric strategies based on exact distances and specific vertex orderings~\cite{2015_cassioli_dvoidg}. These methods sequentially position each vertex using sphere intersections: given the known positions of three prior vertices and distances to the current vertex, its coordinates can be computed as the intersection of three spheres. This recursive construction defines a subclass of the DGP known as the Discretizable DGP (DDGP), comprising instances where, from the fourth vertex onward, each vertex is connected by at least three edges to previously embedded vertices, referred to as adjacent predecessors, in the prescribed order. In this setting, the search space becomes discrete and can be efficiently encoded as a binary tree, enabling the use of the Branch-and-Prune (BP) algorithm~\cite{2008_liberti_abpaftmdgp} to explore all non-congruent realizations.
	
	A special case, the Discretizable Molecular DGP (DMDGP), each vertex from the fourth onward must be connected to exactly three consecutive predecessors. This additional structure introduces symmetry properties that allow one to derive all feasible realizations from a known solution via partial reflections~\cite{2021_lavor_otoofds, 2013_liberti_ctnosokdi, 2013_nucci_stdmdgpbmrt}.
	
	To handle interval data, the interval Branch-and-Prune (\textit{i}BP) algorithm samples distances within given intervals to expand the search tree. This transforms the binary BP tree into a $2k$-ary tree, where $k$ is the number of sample points per interval~\cite{2015_cassioli_aateappcvasodc, 2013_lavor_tibpaftdmdgpwid, 2018_worley_tibpfpsd}. While higher sampling resolution improves accuracy, it also increases computational cost exponentially~\cite{2017_dambrosio_nemamfrpgfdd}. To mitigate this issue in the interval DMDGP (\textit{i}DMDGP), a previous study~\cite{2021_lavor_itspitibpaftdmdgp} proposed a novel algorithm that incorporates pruning distance information to reduce uncertainty prior to the sampling process. This approach significantly improves both computational efficiency and the success rate of structure reconstruction. Building on this idea, this paper extends the same strategy to the interval DDGP (\textit{i}DDGP), proposing a refined mathematical formulation to support this broader case.
	
	\textbf{Outline:} \autoref{sec:2} introduces the \textit{i}DDGP and reviews the \textit{i}BP algorithm. \autoref{sec:3} presents interval arc reduction techniques and our proposed methods. \autoref{sec:4} describes the generation of distance constraint instances from protein data. \autoref{sec:5} reports computational results, and \autoref{sec:6} concludes the paper.
		
	\section{The \textit{i}DDGP and the \textit{i}BP Algorithm}\label{sec:2}
	
	To formally define the \textit{i}DDGP, we begin by introducing the following preliminary definitions:
	\begin{definition}
		Let $G = (V, E, \mathbf{d})$ be a simple, weighted, and undirected graph, and let $\prec$ be an ordering on the elements of $V$. We define the following sets of vertices and edges:
		\begin{enumerate}
			\item The set of predecessors of $v \in V$: 
			\quad $
				\gamma(v) = \{ u \in V \mid u \prec v \};
			$
			
			\item The set of vertices adjacent to $v \in V$:
			\quad $
				N(v) = \big\{ u \in V \mid \{u,v\} \in E \big\};
			$
			
			\item The set of adjacent predecessors of $v \in V$:
			\quad $
				U_v = N(v) \cap \gamma(v).
			$
			We denote $\kappa_v = |U_v|$;
			
			\item The set of edges with precise distances:
			\quad $
				E_0 = \big\{ \{u,v\} \in E \ \mid \ \mathbf{d}_{u,v} = [\underline{d}_{u,v}, \overline{d}_{u,v}] \text{ with } \underline{d}_{u,v} = \overline{d}_{u,v} \big\};
				\label{sec:2:eq:E0}
			$
			
			\item The set of edges with interval distances:
			\quad $
				E_I = \big\{ \{u,v\} \in E \ \mid \ \mathbf{d}_{u,v} = [\underline{d}_{u,v}, \overline{d}_{u,v}] \text{ with } \underline{d}_{u,v} < \overline{d}_{u,v} \big\}.
				\label{sec:2:eq:EI}
			$
		\end{enumerate}
	\end{definition}
	
	Considering the definitions above, the \textit{i}DDGP in $\mathbb{R}^3$ can be expressed as~\cite{2021_lavor_itspitibpaftdmdgp}:
	\begin{definition}
		\label{sec:2:def:iddgp}
		Let $G = (V, E, \mathbf{d})$ be a simple, weighted, and undirected graph associated with an instance of the \textit{i}DGP in $\mathbb{R}^3$, where $E = E_0 \cup E_I$ and $v_1,\ \dots,\ v_n$ is a total order on the vertices of $V$. The \textit{i}DDGP consists of all instances of the \textit{i}DGP that satisfy the following assumptions:
		\begin{enumerate}[$S1:$]
			\item The subgraph induced by $\{v_1,\ v_2,\ v_3\}$ is a clique in $E_0$, and the associated weights $d_{1,2},\ d_{1,3}$, and $d_{2,3}$ satisfy the strict triangle inequalities;
			
			\item For every vertex $v_i$, with $i = 4,\ \dots,\ n$, there exists a subset $\{v_{i_3},\ v_{i_2},\ v_{i_1}\} \subset U_{v_i}$ such that:
			\begin{enumerate}[$1.$]
				\item $\{\{v_i, v_{i_2}\},\ \{v_i, v_{i_1}\}\} \subset E_0$;
				
				\item The subgraph induced by $\{v_{i_3},\ v_{i_2},\ v_{i_1}\}$ is a clique in $E_0$, and the associated weights satisfy the strict triangle inequalities.
			\end{enumerate}
		\end{enumerate}
	\end{definition}
	
	Assumption $S1$ guarantees that the \textit{i}DDGP has only incongruent solutions (except for a single reflection)~\cite{2017_goncalvez_raotidgp}, item $1$ of the assumption $S2$ guarantees that for each vertex $v_i \in V$, at least two of its adjacent predecessors have a precise distance to this vertex; and item $2$ of $S2$ eliminates the possibility of an uncountable number of solutions~\cite{2017_goncalvez_raotidgp}.
	
	We remark that in the definition of the \textit{i}DDGP, for each vertex $v_i \in V$, with $i = 3,\ \dots,\ n$, the following nonlinear system in the variable $x_i$ arises:
	\begin{equation}
		\left\{\begin{array}{ccccc}
			\underline{d}_{i,i_{\kappa_i}} & \leq & \| x_i - x_{i_{\kappa_i}} \| & \leq & \overline{d}_{i,i_{\kappa_i}},\\
			\vdots & \vdots & \multicolumn{1}{c}{\vdots} & \vdots & \vdots\\
			\underline{d}_{i,{i_3}} & \leq & \| x_i - x_{i_3} \| & \leq & \overline{d}_{i,{i_3}},\\
			& & \| x_i - x_{i_2} \| & = & d_{i,{i_2}},\\
			& & \| x_i - x_{i_1} \| & = & d_{i,{i_1}},
		\end{array}\right.
		\label{sec:2:eq:system_xiI_k>3}
	\end{equation}
	
	\noindent with $\kappa_i \geq 3$. By squaring both sides of the equalities and inequalities, we observe that each equation represents a sphere and each inequality represents a spherical shell. Thus, finding a solution to system \eqref{sec:2:eq:system_xiI_k>3} is equivalent to the geometric problem of finding the intersection of two spheres $S^i_1 = S\big(x_{i_1}, d_{i,i_1}\big)$ and $S^i_2 = S\big(x_{i_2}, d_{i,i_2}\big)$, with $\kappa_i - 2$ spherical shells $\mathbf{S}_3^i = S\big(x_{i_3}, \textbf{d}_{i,i_3}\big),\ \dots,\ \mathbf{S}_{\kappa_i}^i = S\big(x_{i_{\kappa_i}}, \textbf{d}_{i,i_{\kappa_i}}\big)$:
	\begin{equation}
		\mathcal{A}_i = S^i_1 \cap S^i_2 \cap \mathbf{S}^i_3 \cap \mathbf{S}^i_4 \cap \dots \cap \mathbf{S}^i_{\kappa_i}.
		\label{sec:2:eq:AiI_0}
	\end{equation}
	
	In the sequel, we begin by analyzing the intersection of two spheres with a spherical shell, denoted $S^i_1 \cap S^i_2 \cap \mathbf{S}^i_3$, and subsequently characterize the resulting set $\mathcal{A}_i$ as defined in~\eqref{sec:2:eq:AiI_0}.
	\begin{figure}[!htp]
		\centering
		\begin{subfigure}[b]{0.49\linewidth}
			\centering
			\includegraphics[width=0.7\linewidth]{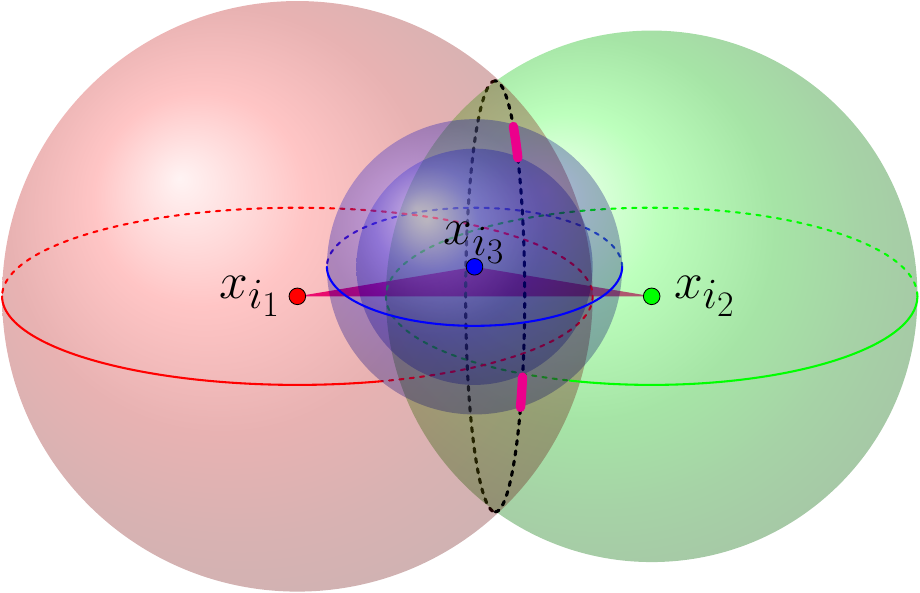}
			\caption{3D representation of $S_i^1 \cap S_i^2 \cap \mathbf{S}_i^3$}
		\end{subfigure}
		\begin{subfigure}[b]{0.49\linewidth}
			\centering
			\includegraphics[width=0.7\linewidth]{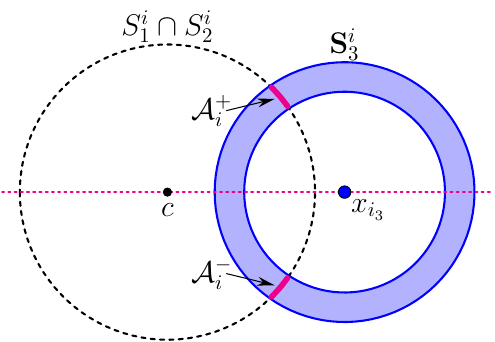}
			\caption{2D view of the plane containing the circle $S_i^1 \cap S_i^2$}
		\end{subfigure}
		
		\caption{Illustration of the intersection $S_i^1 \cap S_i^2 \cap \mathbf{S}_i^3$ and its resulting set ${^{(3,2,1)}\mathcal{A}}_i = \mathcal{A}_i^- \cup \mathcal{A}_i^+$.}
		\label{fig:S1S2SS3}
	\end{figure}
	
	If $S_i^1 \cap S_i^2 \cap \mathbf{S}_i^3 \neq \emptyset$, then this intersection intersection lies on the circle defined by $S_i^1 \cap S_i^2$, as illustrated in~\autoref{fig:S1S2SS3}. It can be parameterized by considering the rotation of a point around the axis passing through $x_{i_2}$ and $x_{i_1}$. Once this parameterization is established, the final intersection is obtained by further intersecting this circle with the spherical shell $\mathbf{S}_i^3$. This can be accomplished by considering the torsion angle determined by the points $x_{i_3},\ x_{i_2},\ x_{i_1}$, and $x_i \in {^{(3,2,1)}\mathcal{A}}_i$~\cite{2014_goncalves_aabsftbpatdg}. To compute this angle explicitly, we recall that the cosine of a torsion angle defined by four points $x_u,\ x_v,\ x_w$, and $x_z$ can be expressed in terms of the pairwise distances between these points as~\cite{1987_pogorelov_g}:
	\begin{equation}
		\cos \tau_{u,v,w,z} = \dfrac{p(u,v,w,z) - d_{z,u}^2}{2q(u,v,w,z)},
		\label{sec:2:eq:costau}
	\end{equation}
	
	\noindent with
	\begin{align}
		\begin{aligned}
				p(u,v,w,z) & = \Big(\lambda(v,w,z) - \lambda(v,w,u)\Big)^2 + \rho^2(v,w,z) + \rho^2(v,w,u),\\[0.1cm]
				q(u,v,w,z) & = \rho(v,w,z)\rho(v,w,u),\\[0.1cm]
				\lambda(v,w,z) & = \dfrac{d_{z,v}^2 + d_{w,v}^2 - d_{z,w}^2}{2d_{w,v}} = d_{z,v} \cos \theta_{z,v,w}, \qquad
				\rho(v,w,z) = \sqrt{d_{z,v}^2 - \lambda^2(v,w,z)} = d_{z,v} \sin \theta_{z,v,w}.
		\end{aligned}
		\label{sec:2:eq:p_q_lambda_rho}
	\end{align}
	
	\begin{figure}[!htp]
		\centering
		\includegraphics[width=11cm]{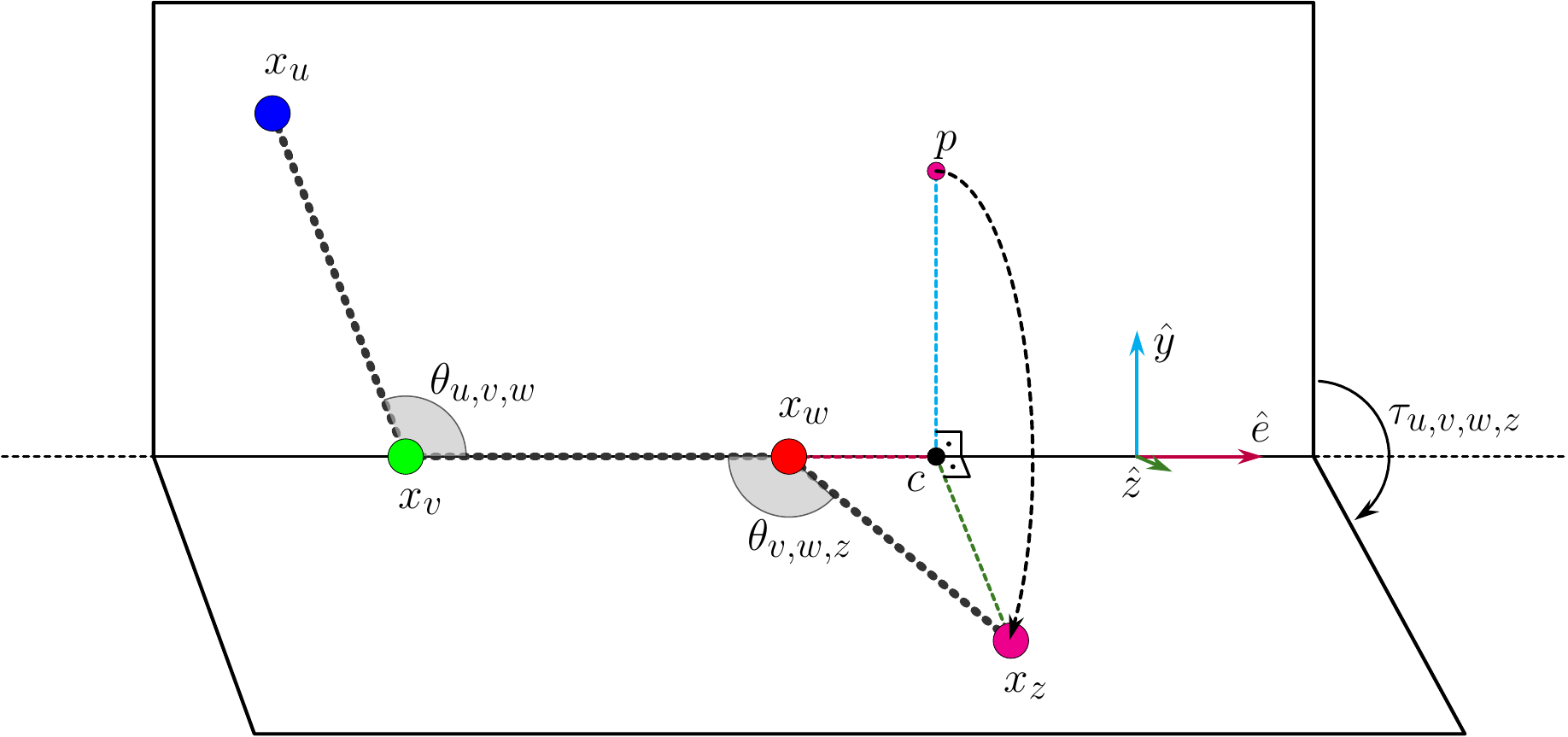}
		\caption{Geometric framework used to describe $x_z$ as a function of $x_u,\ x_v,\ x_w,\ d_{z,u},\ d_{z,v},\ d_{z,w}$, and $\tau_{u,v,w,z}$.}
		\label{fig:DDGP_framework}
	\end{figure}

	\autoref{fig:DDGP_framework} illustrates the framework for determining the position of $x_z$ based on the reference points $x_u,\ x_v,\ x_w$, the torsion angle $\tau_{u,v,w,z}$, and the distances $d_{z,u},\ d_{z,v},\ d_{z,w}$.
	The point $x_z$ results from the rotation of point $p$ about the axis $\hat{e}$, through an angle $\tau_{u,v,w,z}$, with center at point $c$. Thus, by applying the fundamental principles of analytic geometry, the following expression for $x_z = {^{(u,v,w)}x}_z(\tau_{u,v,w,z})$ can be derived:
	\begin{equation}
		{^{(u,v,w)}x}_z(\tau_{u,v,w,z}) = \textbf{a}(v,w,z) + \textbf{b}(u,v,w,z) \cos\tau_{u,v,w,z} + \textbf{c}(u,v,w,z) \sin\tau_{u,v,w,z}
		\label{sec:2:eq:x_z},
	\end{equation}
	
	\noindent with
	\begin{align}
		\textbf{a}(v,w,z) = x_v + \lambda(v,w,z)\hat{e},\quad \textbf{b}(u,v,w,z) = \rho(v,w,z)\hat{y},\quad \textbf{c}(u,v,w,z) = \rho(v,w,z)\hat{z},
		\label{sec:2:eq:a_b_c}\\[0.1cm]
		\hat{e} = \dfrac{x_w - x_v}{\|x_w - x_v\|},\quad \hat{v} = \dfrac{x_u - x_v}{\|x_u - x_v\|},\quad \hat{z} = \dfrac{\hat{e} \times \hat{v}}{\|\hat{e} \times \hat{v}\|}, \quad \hat{y} = \dfrac{\hat{z} \times \hat{e}}{\|\hat{z} \times \hat{e}\|}, \notag
	\end{align}
	
	\noindent $\lambda$ and $\rho$ are defined in~\eqref{sec:2:eq:p_q_lambda_rho}, $\cos\tau_{u,v,w,z}$ is defined by~\eqref{sec:2:eq:costau}, and $\sin\tau_{u,v,w,z} = \pm\sqrt{1 - \cos^2\tau_{u,v,w,z}}$. Since the sine function is odd, the sign in $\sin\tau_{u,v,w,z}$ reflects the sign of the torsion angle $\tau_{u,v,w,z}$. Consequently, there are two possible configurations for $x_z$, corresponding to the two orientations of the torsion angle. As this sign cannot be determined \textit{a priori}, both possibilities must be considered.
	This idea has previously appeared in the literature, including within the framework of Clifford algebra~\cite{2014_goncalves_aabsftbpatdg, 2017_goncalvez_raotidgp, 2015_lavor_caatdmdgp}. We also remark that the orthonormal basis $\{\hat{e}, \hat{y}, \hat{z}\}$ in $\mathbb{R}^3$ could also be constructed using the Gram-Schmidt process~\cite{2016_strang_itla}.
	
	The case in which $\mathbf{d}_{z,u}$ is given as an interval follows a similar treatment. In this setting, the torsion angle may span two disjoint subintervals: one contained in $[-\pi, 0]$ and the other in $[0, \pi]$. The following lemma establishes a mapping from interval distances to corresponding angular intervals.
	\begin{lemma}
		\label{sec:2:lemma:1}
		Let $p, q \in \mathbb{R}_+$ such that $0 < 2q \leq p$. Also, let the functions $\tau^+ : \big[\sqrt{p - 2 q}, \sqrt{p + 2 q}\big] \rightarrow [0, \pi]$ and $\tau^- : \big[\sqrt{p - 2 q}, \sqrt{p + 2 q}\big] \rightarrow [-\pi,0]$ defined by 
		\begin{equation*}
			\tau^\pm(d) = \pm\arccos\left(\dfrac{p - d^2}{2q}\right)
			\label{sec:2:eq:tau^+}.
		\end{equation*}
		
		\noindent Then $\tau^\pm$ are continuous, bijective and for an interval $\mathbf{d} = \big[\underline{d}, \overline{d}\big] \subset \big[\sqrt{p - 2 q}, \sqrt{p + 2 q}\big]$, we have
		$\tau^+(\mathbf{d}) = \big\{t \in [0, \pi] \ : \ \tau^+\big(\underline{d}\big) \leq t \leq \tau^+\big(\overline{d}\big)\big\} \equiv \mathcal{T}^+$ and
		$\tau^-(\mathbf{d}) = \big\{t \in [-\pi,0] \ : \ \tau^-\big(\underline{d}\big) \leq t \leq \tau^-\big(\overline{d}\big)\big\} \equiv \mathcal{T}^-$.
	\end{lemma}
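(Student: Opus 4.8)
The plan is to factor each map $\tau^\pm$ through the auxiliary function $g(d) = \frac{p - d^2}{2q}$ followed by $\arccos$, and then read off all three claimed properties — continuity, bijectivity, and the interval image — from the elementary behaviour of these two building blocks. First I would record that on the domain $\big[\sqrt{p-2q},\ \sqrt{p+2q}\big] \subset [0,\infty)$ the map $g$ is a strictly decreasing affine function of $d^2$; since $d \mapsto d^2$ is strictly increasing on $[0,\infty)$, $g$ is strictly decreasing there (this argument also covers the boundary case $p = 2q$, where the lower endpoint of the domain is $0$). Evaluating at the endpoints gives $g\big(\sqrt{p-2q}\big) = 1$ and $g\big(\sqrt{p+2q}\big) = -1$, so $g$ is a continuous strictly decreasing bijection of the domain onto $[-1,1]$. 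This is exactly the interval on which $\arccos$ is defined, continuous, strictly decreasing, and a bijection onto $[0,\pi]$.

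With this in hand, continuity and bijectivity of $\tau^+ = \arccos \circ\, g$ follow immediately: a composition of two continuous strictly decreasing bijections is continuous, strictly increasing, and bijective, so $\tau^+$ maps the domain onto $[0,\pi]$ with $\tau^+\big(\sqrt{p-2q}\big) = 0$ and $\tau^+\big(\sqrt{p+2q}\big) = \pi$. Since $\tau^- = -\tau^+$, the map $\tau^-$ is then continuous, strictly decreasing, and a bijection onto $[-\pi,0]$, with the endpoint values $0$ and $-\pi$ attained in the reversed order.

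For the image of a subinterval $\mathbf{d} = \big[\underline{d},\ \overline{d}\big]$, I would invoke the standard fact that a continuous strictly monotone function carries a closed interval onto the closed interval bounded by the images of its endpoints: monotonicity gives the inclusion of the image in that interval, and the intermediate value theorem gives surjectivity onto it. For the increasing map $\tau^+$ this yields precisely $\tau^+(\mathbf{d}) = \big[\tau^+(\underline{d}),\ \tau^+(\overline{d})\big] = \mathcal{T}^+$. For the decreasing map $\tau^-$ the same principle applies, but with the endpoints exchanged, so that $\tau^-(\mathbf{d})$ is the closed interval with lower bound $\tau^-(\overline{d})$ and upper bound $\tau^-(\underline{d})$ — that is, the set $\mathcal{T}^-$, once its two defining bounds are read in increasing order.

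The computation is otherwise routine, so the one point demanding genuine care is the monotonicity bookkeeping rather than any hard analytic estimate. The potential pitfall is that $\tau^+$, although built from the two \emph{decreasing} functions $g$ and $\arccos$, is itself \emph{increasing} by the double sign reversal, whereas $\tau^-$ is decreasing; consequently the endpoints of the image interval for $\tau^-$ appear in the order opposite to those of $\mathbf{d}$, and keeping the direction of each monotonicity straight is what must be handled correctly.
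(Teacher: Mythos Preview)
Your proof is correct and follows essentially the same approach as the paper: factor $\tau^\pm$ as the composition of $d \mapsto \frac{p-d^2}{2q}$ with $\arccos$ (and a sign), then read off continuity and bijectivity from the component maps. Your treatment is in fact more thorough than the paper's, which simply asserts that continuity and bijectivity suffice without explicitly invoking monotonicity and the intermediate value theorem for the interval image; you also correctly flag the endpoint-ordering subtlety for $\tau^-$, which the paper's statement glosses over.
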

	
	\begin{proof}
		It suffices to prove the continuity and bijectivity of $\tau^\pm$. To this end, let $\alpha : \big[\sqrt{p - 2q},\ \sqrt{p + 2q}\big] \rightarrow [-1,1]$, defined by $\alpha(d) = \frac{p - d^2}{2q}$, and let $g : [-1,1] \rightarrow [0, \pi]$ be defined by $g(t) = \arccos(t)$. Then, $\tau^\pm(d) = \pm g(\alpha(d))$, and since $\alpha$ and $g$ are both continuous and bijective, so are $\tau^\pm$.
	\end{proof}
	
	The previous lemma establishes a well-defined and continuous relationship between the torsion angle $\tau_{u,v,w,z}$ and the corresponding distance $d_{z,u}$.
	\begin{proposition}
		\label{sec:2:proposition:1}
		Let $d_{v,u},\ d_{w,u},\ d_{w,v},\ d_{z,v},\ d_{z,w}$ be exact distances, and let $\mathbf{d}_{z,u} = \big[\underline{d}_{z,u}, \overline{d}_{z,u}\big]$ denote an interval distance. Define the relation $\tau_{u,v,w,z} : \big[\sqrt{p - 2 q}, \sqrt{p + 2 q}\big] \rightarrow [-\pi, \pi]$ as $\tau_{u,v,w,z}(d) = \tau^-(d)\cup \tau^+(d)$, where $p = p(u,v,w,z)$ and $q = q(u,v,w,z)$ are given as in \eqref{sec:2:eq:p_q_lambda_rho}. Then,
		\begin{equation*}
			\tau_{u,v,w,z}(\mathbf{d}_{z,u}) = \mathcal{T}_{u,v,w,z} = \mathcal{T}^-_{u,v,w,z} \cup \mathcal{T}^+_{u,v,w,z} = \big[-\overline{\tau}_{u,v,w,z}, -\underline{\tau}_{u,v,w,z}\big]\cup \big[\underline{\tau}_{u,v,w,z}, \overline{\tau}_{u,v,w,z}\big],
			\label{sec:2:eq:tau_uvwz}
		\end{equation*}
		where
		\begin{equation*}
			{\underline{\tau}_{u,v,w,z} = \arccos\Bigg(\dfrac{p(u,v,w,z) - \underline{d}_{z,u}^2}{2q(u,v,w,z)}\Bigg) \quad \text{and} \quad \overline{\tau}_{u,v,w,z} = \arccos\Bigg(\dfrac{p(u,v,w,z) - \overline{d}_{z,u}^2}{2q(u,v,w,z)}\Bigg).}
			\label{sec:2:eq:underlineTau_uvwz_and_overlineTau_uvwz}	
		\end{equation*}
	\end{proposition}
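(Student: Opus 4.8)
The plan is to derive this statement directly from Lemma~\ref{sec:2:lemma:1}, applied with the particular choice $p = p(u,v,w,z)$ and $q = q(u,v,w,z)$. First I would check that the hypothesis $0 < 2q \leq p$ of that lemma holds here: this is exactly the inequality recorded in the remark preceding the proposition, which follows from~\eqref{sec:2:eq:p_q_lambda_rho} upon writing $p - 2q = \big(\lambda(v,w,z) - \lambda(v,w,u)\big)^2 + \big(\rho(v,w,z) - \rho(v,w,u)\big)^2 \geq 0$ and noting that $q > 0$ because the factors $\rho(v,w,z),\ \rho(v,w,u)$ are strictly positive. I would also make explicit the standing assumption $\mathbf{d}_{z,u} \subset \big[\sqrt{p-2q}, \sqrt{p+2q}\big]$, which is what makes $\tau^\pm$ well defined on the whole interval and which corresponds to geometric realizability of the distance $d_{z,u}$.

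With these in place, Lemma~\ref{sec:2:lemma:1} supplies that $\tau^+$ and $\tau^-$ are continuous bijections, so each sends the interval $\mathbf{d}_{z,u}$ onto an interval whose endpoints are the images of $\underline{d}_{z,u}$ and $\overline{d}_{z,u}$; the only thing left is to fix the orientation. Since $d \mapsto (p - d^2)/(2q)$ is strictly decreasing for $d > 0$ and $\arccos$ is strictly decreasing, the composite $\tau^+$ is strictly increasing, so $\tau^+(\mathbf{d}_{z,u}) = \big[\tau^+(\underline{d}_{z,u}), \tau^+(\overline{d}_{z,u})\big]$. Recognizing the endpoints as $\tau^+(\underline{d}_{z,u}) = \underline{\tau}_{u,v,w,z}$ and $\tau^+(\overline{d}_{z,u}) = \overline{\tau}_{u,v,w,z}$ then yields $\mathcal{T}^+_{u,v,w,z} = \big[\underline{\tau}_{u,v,w,z}, \overline{\tau}_{u,v,w,z}\big]$.

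For the negative branch I would use $\tau^- = -\tau^+$, so $\tau^-$ is strictly \emph{decreasing} and therefore reverses the endpoints: $\tau^-(\mathbf{d}_{z,u}) = \big[\tau^-(\overline{d}_{z,u}), \tau^-(\underline{d}_{z,u})\big] = \big[-\overline{\tau}_{u,v,w,z}, -\underline{\tau}_{u,v,w,z}\big] = \mathcal{T}^-_{u,v,w,z}$. Taking the union of the two branches gives $\tau_{u,v,w,z}(\mathbf{d}_{z,u}) = \mathcal{T}^-_{u,v,w,z} \cup \mathcal{T}^+_{u,v,w,z}$, which is the claimed identity.

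Once the lemma is in hand the computation is essentially routine, so I do not expect a deep obstacle. The one point that demands care is the orientation bookkeeping for the decreasing map $\tau^-$, where the endpoints of the image interval swap, so that the smaller distance $\underline{d}_{z,u}$ produces the \emph{upper} endpoint $-\underline{\tau}_{u,v,w,z}$ and the larger distance the lower endpoint $-\overline{\tau}_{u,v,w,z}$. A secondary point worth flagging is the inclusion $\mathbf{d}_{z,u} \subset \big[\sqrt{p-2q}, \sqrt{p+2q}\big]$, since without it $\tau^\pm$ would not be defined on all of $\mathbf{d}_{z,u}$ and the image could fail to be a single interval on each branch.
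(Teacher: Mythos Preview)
Your proposal is correct and matches the paper's approach: the paper does not give an explicit proof of this proposition, but presents it as an immediate consequence of Lemma~\ref{sec:2:lemma:1} after noting (in the sentence preceding the proposition) that $0 < 2q(u,v,w,z) \leq p(u,v,w,z)$ holds by~\eqref{sec:2:eq:p_q_lambda_rho}. Your write-up is in fact more careful than the paper's, since you make the monotonicity argument for the endpoint orientation explicit---particularly for the $\tau^-$ branch, where the Lemma's statement of $\mathcal{T}^-$ is slightly imprecise about which endpoint is which.
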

	
	\begin{proof}
		By~\eqref{sec:2:eq:p_q_lambda_rho}, we have $0 < 2q(u,v,w,z) = 2 \rho(v,w,z)\rho(v,w,u) < \rho^2(v,w,z) + \rho^2(v,w,u)  \leq p(u,v,w,z)$ for all $u,v,w,z \in V$. Therefore, the result of the proposition follows directly from \Cref{sec:2:lemma:1}.
	\end{proof}
	
	The following lemma establishes how the point $x_z$ can be expressed as a function of the angular interval defined in the preceding proposition.
	\begin{lemma}
		\label{sec:2:lemma:7}
		Let $x_u,\ x_v,\ x_w \in \mathbb{R}^3$, and let $d_{v,u}$, $d_{w,u}$, $d_{w,v}$, $d_{z,v}$, and $d_{z,w}$ be exact distances. Then, the function ${^{(u,v,w)}}x_z : (-\pi, \pi] \rightarrow \mathbb{R}^3$, defined by the expression in~\eqref{sec:2:eq:x_z}, is continuous and injective. Moreover, the image of any interval $\big[\underline{\tau},\ \overline{\tau}\big] \subset (-\pi, \pi]$ under ${^{(u,v,w)}}x_z$ is a circular arc centered at the point $A$, whose position vector is $\mathbf{a} = A - O$, where $O$ denotes the origin of $\mathbb{R}^3$, lying in the plane spanned by the direction vectors $\mathbf{b}$ and $\mathbf{c}$, and passing through $A$.
	\end{lemma}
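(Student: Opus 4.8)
The plan is to read the geometry directly off the closed-form expression~\eqref{sec:2:eq:x_z}, namely ${^{(u,v,w)}}x_z(\tau) = \mathbf{a} + \mathbf{b}\cos\tau + \mathbf{c}\sin\tau$, treating $\mathbf{a}$, $\mathbf{b}$, $\mathbf{c}$ as constant vectors (they depend only on the fixed points and the exact distances, not on $\tau$). The first fact I would establish is that $\{\hat{e}, \hat{y}, \hat{z}\}$ is an orthonormal frame: by~\eqref{sec:2:eq:a_b_c}, $\hat{z}$ is a normalized cross product, hence a unit vector orthogonal to $\hat{e}$, and $\hat{y} = \hat{z}\times\hat{e}$ is then automatically a unit vector orthogonal to both. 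Consequently $\mathbf{b} = \rho(v,w,z)\hat{y}$ and $\mathbf{c} = \rho(v,w,z)\hat{z}$ are orthogonal vectors of common length $\rho(v,w,z)$, which is strictly positive because $\rho(v,w,z) = d_{z,v}\sin\theta_{z,v,w}$ by~\eqref{sec:2:eq:p_q_lambda_rho} and the \textit{i}DDGP triangle-inequality hypotheses prevent $x_z, x_v, x_w$ from being collinear. In particular $\mathbf{b}$ and $\mathbf{c}$ are linearly independent.

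Continuity is then immediate, since $\tau\mapsto\cos\tau$ and $\tau\mapsto\sin\tau$ are continuous and each component of ${^{(u,v,w)}}x_z$ is a fixed linear combination of them plus a constant. For injectivity I would suppose ${^{(u,v,w)}}x_z(\tau_1) = {^{(u,v,w)}}x_z(\tau_2)$; subtracting gives $\mathbf{b}(\cos\tau_1 - \cos\tau_2) + \mathbf{c}(\sin\tau_1 - \sin\tau_2) = \mathbf{0}$, and the linear independence of $\mathbf{b}, \mathbf{c}$ forces $\cos\tau_1 = \cos\tau_2$ and $\sin\tau_1 = \sin\tau_2$, i.e.\ $\tau_1 \equiv \tau_2 \pmod{2\pi}$. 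Because the domain $(-\pi,\pi]$ is a half-open interval of length exactly $2\pi$, distinct arguments cannot be congruent mod $2\pi$, so $\tau_1 = \tau_2$.

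For the final claim I would compute $\|{^{(u,v,w)}}x_z(\tau) - \mathbf{a}\|^2 = \|\mathbf{b}\cos\tau + \mathbf{c}\sin\tau\|^2$; expanding and using $\mathbf{b}\cdot\mathbf{c} = 0$ together with $\|\mathbf{b}\| = \|\mathbf{c}\| = \rho(v,w,z)$ yields the constant value $\rho^2(v,w,z)$. Hence every image point lies at fixed distance $\rho(v,w,z)$ from $\mathbf{a}$ and inside the affine plane $\mathbf{a} + \mathrm{span}\{\mathbf{b},\mathbf{c}\}$, i.e.\ on the circle $C$ of radius $\rho(v,w,z)$ centered at $\mathbf{a}$ in that plane. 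Restricting to a subinterval $[\underline{\tau},\overline{\tau}]$, continuity and injectivity make ${^{(u,v,w)}}x_z$ a continuous bijection from a compact interval onto its image, hence a homeomorphism, and the image is therefore a connected sub-arc of $C$.

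I do not anticipate a genuinely hard step; the only two points requiring care are (i) justifying $\rho(v,w,z) > 0$, which is where the non-collinearity coming from the strict triangle inequalities in the \textit{i}DDGP definition must be invoked so that $\mathbf{b}, \mathbf{c}$ are nonzero and independent, and (ii) the half-open nature of the domain $(-\pi,\pi]$, which is exactly what rules out the coincidence of the endpoints $\pm\pi$ and so is essential for injectivity.
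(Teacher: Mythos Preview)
Your proposal is correct and follows essentially the same route as the paper's proof: both establish the orthogonality (hence linear independence) of $\mathbf{b}$ and $\mathbf{c}$ from the construction in~\eqref{sec:2:eq:a_b_c}, derive injectivity from the fact that a pair $(\cos\tau,\sin\tau)$ determines $\tau$ uniquely on $(-\pi,\pi]$, and read the circular-arc image directly off the parametric form~\eqref{sec:2:eq:x_z}. Your version is more explicit in two places---you actually compute $\|{^{(u,v,w)}}x_z(\tau)-\mathbf{a}\|^2=\rho^2(v,w,z)$ and you flag the need for $\rho(v,w,z)>0$ via the strict triangle inequalities---whereas the paper leaves these implicit, but the underlying argument is the same.
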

		
	\begin{proof}
		The continuity of $x_z$ follows immediately from the expression in~\eqref{sec:2:eq:x_z}. By~\eqref{sec:2:eq:a_b_c}, the vectors $\mathbf{b}(u,v,w,z)$ and $\mathbf{c}(u,v,w,z)$ are orthogonal since $\hat{e}$ and $\hat{z}$ are orthonormal vectors, and are therefore linearly independent for all $u,\ v,\ w,\ z \in V$. Together with the injectivity of $\theta \mapsto (\cos\theta,\sin\theta)$ on $(-\pi,\pi]$, i.e., each angle $\theta \in (-\pi, \pi]$ is uniquely determined by its cosine and sine components, it follows that $x_z$ is injective. Therefore, $x_z$ maps each angle to a unique point in $\mathbb{R}^3$, and from~\eqref{sec:2:eq:x_z} its image over any interval $[\underline{\tau}, \overline{\tau}]$ is a circular arc centered at the point with position vector $\mathbf{a}$, lying in the plane spanned by the direction vectors $\mathbf{b}$ and $\mathbf{c}$, and passing through that point.
	\end{proof}
	
	The following classical result, presented as \Cref{sec:2:theorem:1}, is well known; detailed proofs can be found, for instance, in~\cite{1974_apostol_ma}.
	\begin{lemma}
		\label{sec:2:theorem:1}
		Let $f : A \rightarrow B$ be a function. If $A_\ell \subset A$ for all $\ell = 1,\ \dots,\ N$, then $\displaystyle f\left(\bigcup_{\ell=1}^N A_\ell \right) = \bigcup_{\ell=1}^N f(A_\ell)$ and $\displaystyle f\left(\bigcap_{\ell=1}^N A_\ell \right) \subseteq \bigcap_{\ell=1}^N f(A_\ell)$, with equality if $f$ is injective.
	\end{lemma}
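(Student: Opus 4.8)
The plan is to prove the two set identities by standard element-chasing, treating the four inclusions (two for the union, two for the intersection) separately and isolating the single inclusion where injectivity is actually needed. No structural property of $f$ beyond being a function is required except in that one spot, so the bulk of the argument is routine.

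First I would establish the union identity $f\big(\bigcup_{\ell=1}^N A_\ell\big) = \bigcup_{\ell=1}^N f(A_\ell)$ by proving both inclusions. For the forward direction, I would take an arbitrary $y \in f\big(\bigcup_\ell A_\ell\big)$, choose a preimage $x \in \bigcup_\ell A_\ell$ with $f(x) = y$, note that $x \in A_{\ell_0}$ for some index $\ell_0$, and conclude $y = f(x) \in f(A_{\ell_0}) \subseteq \bigcup_\ell f(A_\ell)$. For the reverse direction, I would take $y \in \bigcup_\ell f(A_\ell)$, so $y \in f(A_{\ell_0})$ for some $\ell_0$, pick $x \in A_{\ell_0}$ with $f(x) = y$, and observe that $A_{\ell_0} \subseteq \bigcup_\ell A_\ell$ forces $x \in \bigcup_\ell A_\ell$, whence $y \in f\big(\bigcup_\ell A_\ell\big)$. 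This direction uses only monotonicity of the image operation and no special property of $f$.

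Next I would turn to the intersection. The inclusion $f\big(\bigcap_\ell A_\ell\big) \subseteq \bigcap_\ell f(A_\ell)$ holds for any function: given $y \in f\big(\bigcap_\ell A_\ell\big)$ with preimage $x \in \bigcap_\ell A_\ell$, the point $x$ lies in every $A_\ell$, so $y = f(x) \in f(A_\ell)$ for each $\ell$, giving $y \in \bigcap_\ell f(A_\ell)$.

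The main obstacle, and the only place the hypothesis matters, is the reverse inclusion $\bigcap_\ell f(A_\ell) \subseteq f\big(\bigcap_\ell A_\ell\big)$ under injectivity of $f$. Here I would take $y \in \bigcap_\ell f(A_\ell)$; for each $\ell$ there is some $x_\ell \in A_\ell$ with $f(x_\ell) = y$. Without injectivity these preimages could differ across indices and need not share a common $A_\ell$, which is precisely why the inclusion can fail in general. Injectivity resolves this: since $f(x_\ell) = y = f(x_{\ell'})$ for all $\ell, \ell'$, we must have $x_\ell = x_{\ell'}$, so a single point $x$ satisfies $x \in A_\ell$ for every $\ell$, i.e. $x \in \bigcap_\ell A_\ell$, and therefore $y = f(x) \in f\big(\bigcap_\ell A_\ell\big)$. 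I would optionally add a one-line remark exhibiting two disjoint sets whose images overlap to confirm that injectivity cannot be dropped, which verifies that this step is the genuine heart of the lemma.
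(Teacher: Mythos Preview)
Your proof is correct and is exactly the standard element-chasing argument one would expect; the paper itself does not supply a proof but simply cites Apostol's \emph{Mathematical Analysis} for it, so there is nothing to compare against beyond noting that your argument is the textbook one.
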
	
	
	With the above components in place, the intersection depicted in~\Cref{fig:S1S2SS3} can now be formally characterized.
	\begin{proposition}
		\label{sec:2:proposition:2}
		Let $S^w_z = S\big(x_w, d_{z,w}\big)$ and $S^v_z = S\big(x_v, d_{z,v}\big)$ be two spheres, and let $\mathbf{S}_z^u = S\big(x_u, \mathbf{d}_{z,u}\big)$ be a spherical shell. Then
		\begin{equation*}
			S^w_z \cap S^v_z \cap \mathbf{S}^u_z = {^{(u,v,w)}x}_z\big(\tau_{u,v,w,z}(\mathbf{d}_{z,u})\big) = {^{(u,v,w)}x}_z\big(\mathcal T_{u,v,w,z}\big) = {^{(u,v,w)}\mathcal{A}}_z^- \cup {^{(u,v,w)}\mathcal{A}}_z^+,
			\label{sec:2:eq:proposition2}
		\end{equation*}
		
		\noindent where ${^{(u,v,w)}x}_z$ is the function defined in the preceding lemma, $\tau_{u,v,w,z}$ is the relation introduced in \Cref{sec:2:proposition:1}, and ${^{(u,v,w)}\mathcal{A}}_z^\pm$ are two arcs of the same circle, symmetric with respect to the plane $\{x_u,x_v,x_w\}$.
	\end{proposition}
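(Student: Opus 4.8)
The plan is to reduce the three-way intersection $S^w_z \cap S^v_z \cap \mathbf{S}^u_z$ to a statement about angular intervals by composing the results already established. First I would observe that $S^w_z \cap S^v_z$ is exactly the circle traced out by the function ${^{(u,v,w)}x}_z$ as the torsion angle ranges over all of $(-\pi,\pi]$: the two sphere equations $\|x_z - x_w\| = d_{z,w}$ and $\|x_z - x_v\| = d_{z,v}$ fix the component of $x_z$ along the axis $\hat e$ and fix the radius of rotation, which is precisely the content of the parameterization~\eqref{sec:2:eq:x_z} together with the definitions of $\mathbf{a}$, $\mathbf{b}$, $\mathbf{c}$ in~\eqref{sec:2:eq:a_b_c}. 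Thus the effect of additionally intersecting with the spherical shell $\mathbf{S}^u_z$ is entirely captured by restricting the torsion angle to those values $\tau$ for which $\|{^{(u,v,w)}x}_z(\tau) - x_u\| \in \mathbf{d}_{z,u}$.

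Next I would make that restriction explicit through the torsion-angle/distance correspondence. By~\Cref{sec:2:proposition:1}, the set of admissible torsion angles is exactly $\tau_{u,v,w,z}(\mathbf{d}_{z,u}) = \mathcal{T}_{u,v,w,z}$, so that a point on the circle lies in the shell if and only if its angle lies in $\mathcal{T}_{u,v,w,z}$. This gives the first equality $S^w_z \cap S^v_z \cap \mathbf{S}^u_z = {^{(u,v,w)}x}_z\big(\tau_{u,v,w,z}(\mathbf{d}_{z,u})\big)$; the second equality is just the definition $\tau_{u,v,w,z}(\mathbf{d}_{z,u}) = \mathcal{T}_{u,v,w,z}$. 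To split the image into the two arcs, I would write $\mathcal{T}_{u,v,w,z} = \mathcal{T}^-_{u,v,w,z} \cup \mathcal{T}^+_{u,v,w,z}$ as in the proposition and apply~\Cref{sec:2:theorem:1}: since ${^{(u,v,w)}x}_z$ is injective by~\Cref{sec:2:lemma:7}, the image of a union is the union of the images, giving
\begin{equation*}
	{^{(u,v,w)}x}_z\big(\mathcal{T}_{u,v,w,z}\big) = {^{(u,v,w)}x}_z\big(\mathcal{T}^-_{u,v,w,z}\big) \cup {^{(u,v,w)}x}_z\big(\mathcal{T}^+_{u,v,w,z}\big) \equiv {^{(u,v,w)}\mathcal{A}}_z^- \cup {^{(u,v,w)}\mathcal{A}}_z^+.
\end{equation*}
By~\Cref{sec:2:lemma:7}, each of these images is a circular arc of the same circle $S^w_z \cap S^v_z$, since both $\mathcal{T}^+_{u,v,w,z}$ and $\mathcal{T}^-_{u,v,w,z}$ are intervals in $(-\pi,\pi]$.

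The symmetry claim is where the sign structure of the torsion angle enters, and I expect it to be the main (though modest) obstacle, since it is the one part not purely formal. The key fact is that $\mathcal{T}^-_{u,v,w,z} = -\mathcal{T}^+_{u,v,w,z}$, as is visible from the explicit endpoints $\big[-\overline{\tau}, -\underline{\tau}\big]$ and $\big[\underline{\tau}, \overline{\tau}\big]$ in~\Cref{sec:2:proposition:1}. I would then examine how ${^{(u,v,w)}x}_z(\tau)$ transforms under $\tau \mapsto -\tau$ in the expression~\eqref{sec:2:eq:x_z}: because cosine is even and sine is odd, negating $\tau$ fixes the $\mathbf{a} + \mathbf{b}\cos\tau$ part and negates the $\mathbf{c}\sin\tau$ part, i.e. it reflects the point across the plane spanned by $\hat e$ and $\hat y$ through $x_v$. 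Since $\mathbf{a}$ and $\mathbf{b}$ lie in the plane through $x_u, x_v, x_w$ while $\mathbf{c}$ is normal to it (recall $\hat z = \hat e \times \hat v / \|\hat e \times \hat v\|$ is orthogonal to that plane), this reflection is exactly reflection across the plane $\{x_u, x_v, x_w\}$. Hence ${^{(u,v,w)}\mathcal{A}}_z^-$ and ${^{(u,v,w)}\mathcal{A}}_z^+$ are mirror images of one another across that plane, completing the proof.
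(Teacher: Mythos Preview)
Your proof is correct and follows essentially the same route as the paper's: parameterize $S^w_z \cap S^v_z$ as the image of ${^{(u,v,w)}x}_z$, restrict via \Cref{sec:2:proposition:1} to the angular set $\mathcal{T}_{u,v,w,z} = \mathcal{T}^-_{u,v,w,z}\cup\mathcal{T}^+_{u,v,w,z}$, pass the union through the map using \Cref{sec:2:theorem:1}, and read off the symmetry from the parity of $\cos$ and $\sin$ in~\eqref{sec:2:eq:x_z} together with $\hat z \perp \{x_u,x_v,x_w\}$---which is exactly the paper's observation that the circle lies in a plane orthogonal to $\{x_u,x_v,x_w\}$ with ${^{(u,v,w)}x}_z(0)$ in that plane. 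One harmless slip: the image-of-a-union identity in \Cref{sec:2:theorem:1} holds for any function, so you need not invoke injectivity there; injectivity (via \Cref{sec:2:lemma:7}) is only used to conclude that each image is a genuine arc.
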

	\begin{proof}
		As previously noted, the intersection $S^w_z \cap S^v_z \cap \mathbf{S}^u_z$ corresponds to the solution of the following nonlinear system in the variable $x_z$:
		\begin{equation}
			\left\{\begin{array}{ccccc}
				\underline{d}^2_{z,u} & \leq & \| x_z - x_u \|^2 & \leq & \overline{d}^2_{z,u}\\[0.2cm]
				& & \| x_z - x_v \|^2 & = & d^2_{z,v}\\[0.2cm]
				& & \| x_z - x_w \|^2 & = & d^2_{z,w}
			\end{array}\right.
			\label{sec:2:eq:system_xzI_k=3}
		\end{equation}
		In this system, the precise $d_{z,v}$, $d_{z,w}$, and the interval $\mathbf{d}_{z,u} = \big[\underline{d}_{z,u}, \overline{d}_{z,u}\big]$ distances are given. Although the distances $d_{v,u}$, $d_{w,u}$, and $d_{w,v}$ do not appear explicitly in the system, they can be directly computed from the known positions of the points $x_u,\ x_v$, and $x_w$. Consequently, all parameters required to describe ${^{(u,v,w)}x}_z\big(\tau_{u,v,w,w}(\mathbf{d}_{z,u})\big)$ are available, in accordance with~\eqref{sec:2:eq:p_q_lambda_rho}--\eqref{sec:2:eq:a_b_c}. It is straightforward to verify that ${^{(u,v,w)}x}_z\big(\tau_{u,v,w,w}(\mathbf{d}_{z,u})\big)$ satisfies the system~\eqref{sec:2:eq:system_xzI_k=3}. Therefore, \Cref{sec:2:proposition:1} and \Cref{sec:2:theorem:1} imply that
		\begin{align*}
			{^{(u,v,w)}x}_z\big(\tau_{u,v,w,w}(\mathbf{d}_{z,u})\big) & = {^{(u,v,w)}x}_z\big(\mathcal{T}_{u,v,w,z}\big) = {^{(u,v,w)}x}_z\big(\mathcal{T}_{u,v,w,z}^- \cup \mathcal{T}_{u,v,w,z}^+\big)\\[0.1cm]
			& = {^{(u,v,w)}x}_z\big(\mathcal{T}_{u,v,w,z}^-\big) \cup {^{(u,v,w)}x}_z\big(\mathcal{T}_{u,v,w,z}^+\big) = {^{(u,v,w)}\mathcal{A}}^-_z \cup {^{(u,v,w)}\mathcal{A}}^+_z,
			\label{sec:2:eq:x_z_A-UA+}
		\end{align*}
		where ${^{(u,v,w)}\mathcal{A}}^\pm_z$ denote two circular arcs, a result that follows directly from \Cref{sec:2:lemma:7}. 
		
		Finally, since ${^{(u,v,w)}x}_z$ describes a circle lying in a plane orthogonal to the plane $\{x_u, x_v, x_w\}$ (see \autoref{fig:DDGP_framework}) with ${^{(u,v,w)}x}_z(0) = x_v + \lambda(v,w,z)\hat{e} + \rho(v,w,z)\hat{y} \in \{x_u, x_v, x_w\}$ (see equations \eqref{sec:2:eq:x_z} and \eqref{sec:2:eq:a_b_c}) and the angular intervals $\mathcal{T}_{u,v,w,z}^\pm$ being symmetric with respect to $0$, it follows that the arcs ${^{(u,v,w)}\mathcal{A}}^\pm_z$ are symmetric with respect to the plane $\{x_u, x_v, x_w\}$.
	\end{proof}
	
	Although we have just described the solution of system \eqref{sec:2:eq:system_xiI_k>3} for the particular case where $\kappa_i = 3$, this result is sufficient to characterize the \textit{i}BP algorithm. This algorithm employs a sampling procedure that requires only the intersection of two spheres with a spherical shell. Therefore, we start with a review of the \textit{i}BP algorithm, while the general solution of system~\eqref{sec:2:eq:system_xiI_k>3} will be presented in a subsequent section, as it forms the basis for the description of the newly proposed algorithm.
	
	To adapt the general notation of the previously established results to each vertex $i$ of the \textit{i}DDGP ordering, we identify $i_1 := u$, $i_2 := v$, $i_3 := w$, and $i := z$. Under this notation, and based on \Cref{sec:2:proposition:2}, the solution of $S_i^1 \cap S_i^2 \cap \mathbf{S}_i^3$ is given by
	\begin{equation*}
		S^1_i \cap S^2_i \cap \mathbf{S}^3_i = {^{(i_3,i_2,i_1)}x}_i\big(\mathcal{T}_i^0\big) = {^{(i_3,i_2,i_1)}\mathcal{A}}_i,
		\label{sec:2:eq:x_i123}
	\end{equation*}
	
	\noindent where
	\begin{equation}
		\mathcal{T}_i^0 = \tau_{i_3,i_2,i_1,i}(\mathbf{d}_{i,i_3}) = \mathcal{T}_{i_3,i_2,i_1,i}^- \cup \mathcal{T}_{i_3,i_2,i_1,i}^+,
		\label{sec:2:eq:T_i^0}
	\end{equation}
	
	\noindent with $\tau_{i_3,i_2,i_1,i}(\mathbf{d}_{i,i_3})$ defined in \Cref{sec:2:proposition:1}.
	
	To address system \eqref{sec:2:eq:system_xiI_k>3}, the \textit{i}BP algorithm constructs the symmetric arcs ${^{(i_3,i_2,i_1)}\mathcal{A}}_i$ using the equations derived from the exact distances $d_{i,i_1}$ and $d_{i,i_2}$, together with the inequality describing interval distance $\mathbf{d}_{i,i_3}$; note that ${^{(i_3,i_2,i_1)}\mathcal{A}}_i$ is always non-empty. It then samples these arcs using a fixed number $2N$ of points. Each sampled point is subsequently checked for feasibility with respect to the remaining $\kappa_i - 3$ inequality/equality constraints in \eqref{sec:2:eq:system_xiI_k>3}. For each feasible point, the algorithm selects one at a time and proceeds to the next vertex in the \textit{i}DDGP order. If all sampled points are deemed infeasible, the algorithm backtracks to a previous vertex to select another feasible candidate before resuming the forward search. This process is repeated until the entire search space has been explored.
	
	To obtain a sample of the arc ${^{(i_3,i_2,i_1)}\mathcal{A}}_i$, we observe that it can be generated by sampling the interval $\mathcal{T}_i^0$, as defined in \eqref{sec:2:eq:T_i^0}, and applying the mapping ${^{(i_3,i_2,i_1)}x}_i$, introduced in \Cref{sec:2:lemma:7}, to each sampled value. That is, letting $T_i^0 = \text{sample}(\mathcal{T}_i^0)$, we define the corresponding arc sample as
	\begin{equation*}
		{^{(i_3,i_2,i_1)}A}_i = \text{sample}\big({^{(i_3,i_2,i_1)}\mathcal{A}}_i\big) = \big\{ {^{(i_3,i_2,i_1)}x}_i(\tau) \ | \ \forall \tau \in T_i^0 \big\}.
		\label{sec:2:eq:321Ai}
	\end{equation*}
	
	The strategy employed by the \textit{i}BP algorithm consists in attempting to construct a feasible subset $A_i$ of \eqref{sec:2:eq:AiI_0} according to the following procedure:
	\begin{equation*}
		A_i \subset \mathcal{A}_i, \text{ where } A_i = {^{(i_3,i_2,i_1)}A}_i \cap \mathbf{S}_i^4 \cap \dots \cap \mathbf{S}_i^{\kappa_i}.
		\label{sec:2:eq:Ai_iBP}
	\end{equation*}
	
	\Cref{sec:2:fig:Ai_iBP} illustrates an example in which $A_i \neq \emptyset$, for a case where $\kappa_i = 4$ and $N = 5$.
	\begin{figure}[!htp]
		\centering
		\begin{subfigure}[b]{0.49\linewidth}
			\centering
			\includegraphics[width=0.65\linewidth]{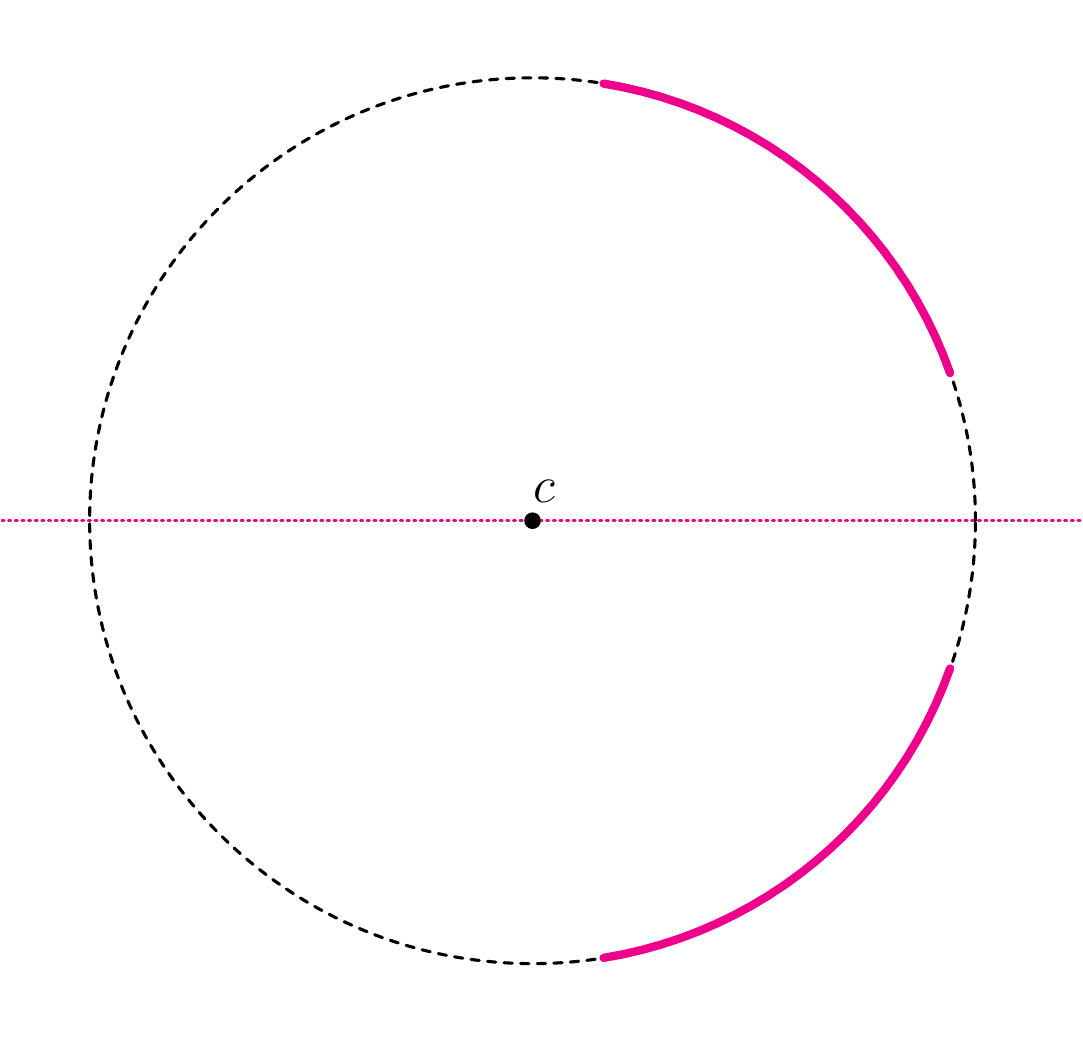}
			\caption{${^{(i_3,i_2,i_1)}\mathcal{A}}_i = S_i^1 \cap S_i^2 \cap \mathbf{S}_i^3 = {^{(i_3,i_2,i_1)}\mathcal{A}}_i^- \cup {^{(i_3,i_2,i_1)}\mathcal{A}}_i^+$}
		\end{subfigure}
		\begin{subfigure}[b]{0.49\linewidth}
			\centering
			\includegraphics[width=0.65\linewidth]{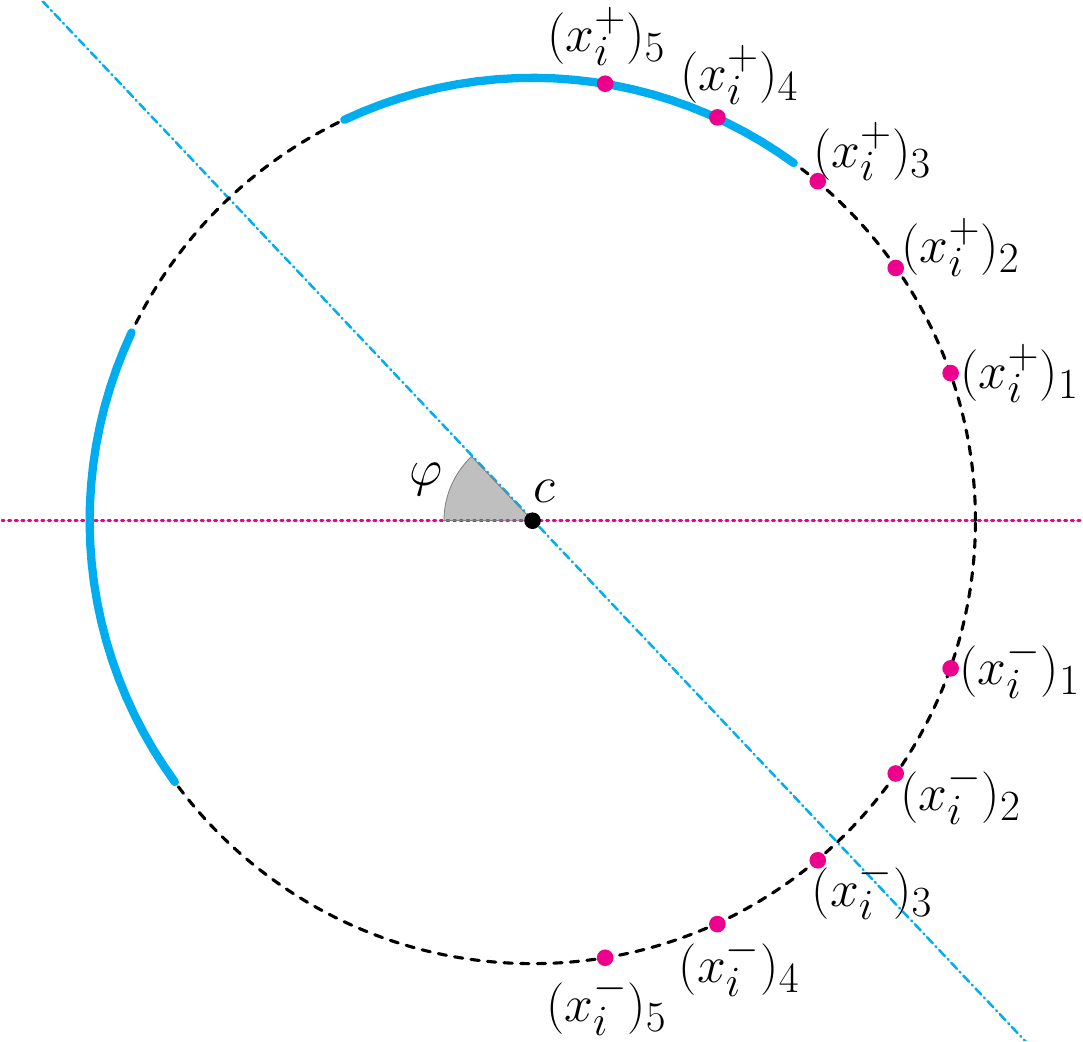}
			\caption{$A_i = {^{(i_3,i_2,i_1)}A}_i \cap \mathbf{S}_i^4$}
		\end{subfigure}
		
		\caption{View of the plane containing the intersection $S_i^1 \cap S_i^2$, represented as a dashed black circumference, where the angle $\varphi$ denotes the torsion between the planes $\{x_{i_4}, x_{i_2}, x_{i_1}\}$ and $\{x_{i_3}, x_{i_2}, x_{i_1}\}$. In (a), the magenta arcs correspond to the intersection of $\mathbf{S}_i^3$ with the fixed circumference, while in (b) the cyan arcs correspond to the intersection of $\mathbf{S}_i^4$ with the same circumference. A sample of $S_i^1 \cap S_i^2 \cap \mathbf{S}_i^3$ is represented by the set ${^{(i_3,i_2,i_1)}A}_i = \big\{(x_i^-\big)_1,\ \dots,\ \big(x_i^-\big)_5,\ \big(x_i^+\big)_1,\ \dots,\ \big(x_i^+\big)_5\big\}$. Its feasible subset is $A_i = {^{(i_3,i_2,i_1)}A}_i \cap \mathbf{S}_i^4 = \big\{\big(x_i^+\big)_4, \ \big(x_i^+\big)_5\big\}$.}
		\label{sec:2:fig:Ai_iBP}
	\end{figure}
	
	As illustrated in \Cref{sec:2:fig:Ai_iBP}, the strategy employed by the \textit{i}BP algorithm to identify points that satisfy \eqref{sec:2:eq:AiI_0} may yield a significant number of infeasible candidates, potentially causing the algorithm to return without a solution. It is also worth noting that the outcome of the algorithm is highly sensitive to the sampling strategy adopted. In the example shown, the sampling over $\mathcal{T}_i^0$ was performed uniformly with equally spaced points. On the positive side, the feasibility of each sampled point can be verified efficiently. A pseudocode summarizing the \textit{i}BP algorithm is provided in \Cref{sec:2:pc:iBP}, and \Cref{sec:5} presents computational results illustrating its performance.
	
	\begin{algorithm}[!htp]
		\caption{A main framework of the \textit{i}BP algorithm in $\mathbb{R}^3$.}
		\label{sec:2:pc:iBP}
		\begin{algorithmic}[1]\small
			\Require{$G = (V,E,\mathbf{d}),\ X = \{x_1,\ x_2,\ x_3\},\ i = 4,\ U_i = \{i_1,\ i_2,\ i_3,\ \dots,\ i_{\kappa_i}\},\ b_i = 1,\text{ and } N_i \geq 1 \ \forall \ i \in 4,\ \dots,\ |V|$}
			\Ensure{A realization $X = \{x_v \in \mathbb{R}^3 \ | \ v \in V\}$ or $X = \emptyset$} 
			
			\Function{\textit{i}BP}{$G,\ X,\ i,\ U_i,\ b_i,\ N_i$}
			\If{$\big(i = |V|\big)$}
			\State \Return $X$;
			\EndIf
			
			\If{$\big(b_i = 1\big)$}
			\State Compute the angular interval $\mathcal{T}_i^0 = \mathcal{T}_{i_3,i_2,i_1,i}^- \cup \mathcal{T}_{i_3,i_2,i_1,i}^+$ as defined in \eqref{sec:2:eq:T_i^0};
			\State Compute a sample $T_i^0 = \{\tau_1^-,\ \dots,\ \tau_{N_i}^-\} \cup \{\tau_1^+,\ \dots,\ \tau_{N_i}^+\} \subset \mathcal{T}_i^0$;
			\State Compute the vectors $\mathbf{a}(i_2,i_1,i),\ \mathbf{b}(i_3,i_2,i_1,i)$, and $\mathbf{c}(i_3,i_2,i_1,i)$ as defined in \eqref{sec:2:eq:a_b_c};
			\Else
			\State $b_i \leftarrow b_i + 1$;
			\EndIf
			
			\If{$\big(b_i \leq 2N_i\big)$}
			\State $\tau \leftarrow$ the $b_i$-th torsion angle in $T^0_i$;
			\State  $x_i \leftarrow \mathbf{a}(i_2,i_1,i) + \mathbf{b}(i_3,i_2,i_1,i)\cos\tau + \mathbf{c}(i_3,i_2,i_1,i)\sin \tau$;
			\Else
			\State Backtrack to the first $3 \leq j < i$ such that $b_j < 2N_j$;
			\If{$\big(j = 3\big)$}
			\State \Return $X = \emptyset$;
			\Else
			\State $X \leftarrow X \setminus \{x_j,\ x_{j+1},\ \dots,\ \ x_{i-1}\}$;
			\State Reset $b_k \leftarrow 1$ for all $k \in \{j+1,\ \dots,\ i-1\}$;
			\State $b_j \leftarrow b_j + 1$;
			\State \Return \Call{\textit{i}BP}{$G, X, j, U_j, b_j, N_j$};
			\EndIf
			\EndIf
			
			\ForAll{$\big(k \in U_i\big)$}
			\If{$\big(\| x_k - x_i\| \notin \textbf{d}_{i,k}\big)$}
			\State \Return \Call{\textit{i}BP}{$G,\ X,\ i,\ U_i,\ b_i,\ N_i$};
			\EndIf
			\EndFor
			\State $X \leftarrow X \cup \{x_i\}$;
			\State $i \leftarrow i + 1$;
			\State \Return \Call{\textit{i}BP}{$G,\ X,\ i,\ U_i,\ b_i,\ N_i$};
			\EndFunction
		\end{algorithmic}
	\end{algorithm}
	
	\section{A Torsion-Angle-Based Approach for the \textit{i}DDGP}\label{sec:3}
	
	The central idea of this new algorithm is to fully describe the set $\mathcal{A}_i$ defined in \eqref{sec:2:eq:AiI_0}, and subsequently sample from it, rather than attempting to construct a sample based on the intersection $S_i^1 \cap S_i^2 \cap \mathbf{S}_i^3$. Once this description is obtained, the algorithm proceeds analogously to the \textit{i}BP strategy: a sampled point is selected, and the exploration continues along the corresponding branch of the search tree. To this end, we rely on the following insightful reformulation of \eqref{sec:2:eq:AiI_0}, which serves as a key step in enabling this approach:
	\begin{equation*}
		\mathcal{A}_i =  S_i^1 \cap S_i^2 \cap \mathbf{S}_i^3 \cap \dots \cap \mathbf{S}_i^{\kappa_i}
		=
		\big(S_i^1 \cap S_i^2 \cap \mathbf{S}_i^3\big) \cap \dots \cap \big(S_i^1 \cap S_i^2 \cap \mathbf{S}_i^{\kappa_i}\big)
		=
		\displaystyle \bigcap_{\ell=3}^{\kappa_i} \big(S_i^1 \cap S_i^2 \cap \mathbf{S}_i^\ell\big).
		\label{sec:2:eq:AiI_1}
	\end{equation*}
	
	\noindent By applying \Cref{sec:2:proposition:2} to each intersection $S_i^1 \cap S_i^2 \cap \mathbf{S}_i^\ell$ in the previous expression, we obtain
	\begin{equation*}
		\mathcal{A}_i = \bigcap_{\ell=3}^{\kappa_i} \big(S^1_i \cap S^2_i \cap \mathbf{S}^\ell_i\big) = \bigcap_{\ell=3}^{\kappa_i} \left({^{(i_\ell,i_2,i_1)}x}_i\big(\mathcal{T}_{i_\ell,i_2,i_1,i}\big)\right) = \bigcap_{\ell=3}^{\kappa_i} \left({^{(i_\ell,i_2,i_1)}\mathcal{A}}_i^- \cup {^{(i_\ell,i_2,i_1)}\mathcal{A}}_i^+\right)
		\label{sec:3:eq:AiI_2}
	\end{equation*}
		
	From the preceding discussion, $\mathcal{A}_i$ is a solution to \eqref{sec:2:eq:system_xiI_k>3}. Thus, this set corresponds to the intersection of $\kappa_i - 2$ circular arcs. Next, we describe the procedure for computing this intersection.
	
	To ensure that angular addition remains within the interval $(-\pi, \pi]$, consider the binary operation $\oplus : \mathbb{R} \times \mathbb{R} \to \mathbb{R}$ defined by $\oplus(a,b) = a \oplus b = ((a + b + \pi) \bmod 2\pi) - \pi$, where $\bmod$ denotes the modulo operation extended to real numbers. It is straightforward to see that $(-\pi, \pi]$ is closed under $\oplus$, i.e., $\oplus\big((-\pi, \pi]\big) \subset (-\pi, \pi]$. With this operation in place, the following proposition can be stated:
	\begin{proposition}
		\label{sec:3:proposition:3}
		Let $d_{w,u_0}$, $d_{v,u_0}$, $d_{w,u_1}$, $d_{v,u_1}$, $d_{w,v}$, $d_{z,v}$, $d_{z,w}$ be exact distances, and let ${^{(u,v,w)}x}_z$ be the function defined in \Cref{sec:2:lemma:7}. Then,
		\begin{equation*}
			{^{(u_1,v,w)}x}_z(\tau) = {^{(u_0,v,w)}x}_z\big(\varphi_{u_0,v,w,u_1} \oplus \tau\big),
		\end{equation*}
		
		\noindent where $\varphi_{u_0,v,w,u_1}$ is the torsion angle between the planes $\{x_{u_0}, x_v, x_w\}$ and $\{x_{u_1}, x_v, x_w\}$.
	\end{proposition}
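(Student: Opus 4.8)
The plan is to exploit the fact that, in the expression~\eqref{sec:2:eq:x_z} for ${^{(u,v,w)}x}_z$, almost every ingredient is independent of the choice of the first reference vertex $u$. Inspecting~\eqref{sec:2:eq:a_b_c}, the center $\mathbf{a}(v,w,z) = x_v + \lambda(v,w,z)\hat{e}$ depends only on $v,w,z$; the axis direction $\hat{e} = (x_w - x_v)/\|x_w - x_v\|$ depends only on $v,w$; and the common radius $\rho(v,w,z)$ of the circle likewise depends only on $v,w,z$. Consequently, both ${^{(u_0,v,w)}x}_z$ and ${^{(u_1,v,w)}x}_z$ parameterize the \emph{same} circle: same center, same radius, and same supporting plane (the plane through $\mathbf{a}$ orthogonal to $\hat{e}$). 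The only quantity that changes with $u$ is the orthonormal in-plane frame $(\hat{y},\hat{z})$, which enters through $\mathbf{b} = \rho\hat{y}$ and $\mathbf{c} = \rho\hat{z}$. Thus the whole statement reduces to comparing the two frames $(\hat{y}_0,\hat{z}_0)$ and $(\hat{y}_1,\hat{z}_1)$ associated with $u_0$ and $u_1$.

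First I would verify that $(\hat{y}_0,\hat{z}_0)$ and $(\hat{y}_1,\hat{z}_1)$ are two orthonormal, equally oriented bases of the plane orthogonal to $\hat{e}$. From~\eqref{sec:2:eq:a_b_c}, $\hat{z} = (\hat{e}\times\hat{v})/\|\hat{e}\times\hat{v}\|$ and $\hat{y} = \hat{z}\times\hat{e}$ (the normalization being trivial since $\hat{z}\perp\hat{e}$), so in both cases $\{\hat{y},\hat{z},\hat{e}\}$ is a right-handed orthonormal triple with $\hat{e}\times\hat{y} = \hat{z}$ and $\hat{e}\times\hat{z} = -\hat{y}$. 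Two such frames must therefore differ by a planar rotation about $\hat{e}$, say by an angle $\varphi$:
\begin{align*}
\hat{y}_1 &= \cos\varphi\,\hat{y}_0 + \sin\varphi\,\hat{z}_0, & \hat{z}_1 &= -\sin\varphi\,\hat{y}_0 + \cos\varphi\,\hat{z}_0.
\end{align*}
The crucial geometric step is to identify this $\varphi$ with the torsion angle $\varphi_{u_0,v,w,u_1}$. Since $\hat{y}$ is, by construction, the unit projection of the bond direction $\hat{v} = (x_u - x_v)/\|x_u - x_v\|$ onto the plane orthogonal to $\hat{e}$, the signed angle from $\hat{y}_0$ to $\hat{y}_1$ measured about $\hat{e}$ is exactly the dihedral angle between the half-planes containing $x_{u_0}$ and $x_{u_1}$, that is, the torsion angle between $\{x_{u_0},x_v,x_w\}$ and $\{x_{u_1},x_v,x_w\}$.

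With the frame relation in hand, I would substitute $\hat{y}_1,\hat{z}_1$ into ${^{(u_1,v,w)}x}_z(\tau) = \mathbf{a} + \rho(\hat{y}_1\cos\tau + \hat{z}_1\sin\tau)$ and collect the coefficients of $\hat{y}_0$ and $\hat{z}_0$; the cosine and sine addition formulas then turn these into $\cos(\varphi+\tau)$ and $\sin(\varphi+\tau)$, yielding $\mathbf{a} + \rho(\hat{y}_0\cos(\varphi+\tau) + \hat{z}_0\sin(\varphi+\tau)) = {^{(u_0,v,w)}x}_z(\varphi+\tau)$. Finally, because $\cos$ and $\sin$ are $2\pi$-periodic and $\varphi\oplus\tau$ differs from $\varphi+\tau$ by an integer multiple of $2\pi$ (by the definition of $\oplus$), the point ${^{(u_0,v,w)}x}_z(\varphi+\tau)$ equals ${^{(u_0,v,w)}x}_z(\varphi_{u_0,v,w,u_1}\oplus\tau)$, which is the claim. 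The main obstacle I anticipate is pinning down the orientation convention: one must check that the rotation carrying $(\hat{y}_0,\hat{z}_0)$ to $(\hat{y}_1,\hat{z}_1)$ has the \emph{same} sign as the torsion angle $\varphi_{u_0,v,w,u_1}$ as defined in the paper (rather than $-\varphi$), since a sign error here would replace $\varphi\oplus\tau$ by $-\varphi\oplus\tau$; everything else is routine trigonometry together with the observation about $u$-independence.
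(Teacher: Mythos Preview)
Your proposal is correct and follows essentially the same route as the paper: both arguments observe that $\mathbf{a}$, $\hat{e}$, and $\rho$ are $u$-independent, reduce the claim to showing that the in-plane frames $(\mathbf{b}_0,\mathbf{c}_0)$ and $(\mathbf{b}_1,\mathbf{c}_1)$ differ by a rotation through the torsion angle $\varphi_{u_0,v,w,u_1}$, and then finish with the angle-addition formulas together with the $2\pi$-periodicity built into $\oplus$. The only stylistic difference is that where you invoke the geometric fact that $\hat{y}$ is the unit projection of $\hat{v}$ onto $\hat{e}^\perp$ to identify the rotation angle with the dihedral angle, the paper carries out this identification by explicit computation, using standard vector identities to evaluate $\cos\varphi_{0,1}$ and $\sin\varphi_{0,1}$ in terms of $\mathbf{b}_0\cdot\mathbf{b}_1$, $\mathbf{c}_0\cdot\mathbf{b}_1$, etc., thereby also nailing down the sign you flagged as the one delicate point.
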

	
	\begin{proof}
		Since we are dealing with two variations of the function ${^{(u,v,w)}x}_z$, which share many common variables, we introduce the following simplified notation:
		\begin{equation*}
			\varphi_{0,1} := \varphi_{u_0,v,w,u_1} ,\quad \mathbf{a} := \mathbf{a}(v,w,z),\quad \mathbf{b}_\ell := \mathbf{b}(u_\ell,v,w,z),\quad \mathbf{c}_\ell := \mathbf{c}(u_\ell,v,w,z),\quad \text{for } \ell = 0, 1.
			\label{sec:3:eq:proof_proposition3_1}
		\end{equation*}
		
		First, observe that if $\theta_1, \theta_2 \in (-\pi,\pi]$, then:
		\begin{equation*}
			\theta_1 \oplus \theta_2 = \left(\theta_1 + \theta_2 + \pi - 2 \pi \left\lfloor\dfrac{\theta_1 + \theta_2 + \pi}{2 \pi}\right\rfloor \right) - \pi = \theta_1 + \theta_2 + 2k\pi, \text{ for some } k \in \mathbb{Z}.
		\end{equation*}
		
		Therefore, since the sine and cosine functions are $2\pi$-periodic, it follows that $\cos(\theta_1 \oplus \theta_2) = \cos(\theta_1 + \theta_2)$ and $\sin(\theta_1 \oplus \theta_2) = \sin(\theta_1 + \theta_2)$. Hence,
		\begin{align*}
			{^{(u_0,v,w)}x}_z\big(\varphi_{0,1} \oplus \tau\big) & = \mathbf{a} + \mathbf{b}_0 \cos (\tau + \varphi_{0,1}) + \mathbf{c}_0 \sin (\tau + \varphi_{0,1})\\
			& = \mathbf{a} + \big(\mathbf{b}_0 \cos\varphi_{0,1} + \mathbf{c}_0 \sin\varphi_{0,1}\big)\cos\tau + \big(\mathbf{c}_0 \cos\varphi_{0,1} - \mathbf{b}_0 \sin\varphi_{0,1}\big)\sin\tau.
		\end{align*}
		
		As ${^{(u_1,v,w)}x}_z(\tau) = \mathbf{a} + \mathbf{b}_1 \cos\tau + \mathbf{c}_1 \sin\tau$, the proposition follows if we verify that:
		\begin{equation}
			\left\{\begin{array}{lcl}
				\mathbf{b}_1 & = & \mathbf{b}_0 \cos \varphi_{0,1} + \mathbf{c}_0 \sin \varphi_{0,1}, \\[0.2cm]
				\mathbf{c}_1 & = & \mathbf{c}_0 \cos \varphi_{0,1} - \mathbf{b}_0 \sin \varphi_{0,1}.
			\end{array}\right.
			\label{sec:3:eq:proof_proposition3_4}
		\end{equation}
		
		From the expressions of $\mathbf{b}(u,v,w,z)$ and $\mathbf{c}(u,v,w,z)$ in \eqref{sec:2:eq:a_b_c}, we note the following identities:
		\begin{multicols}{3}
			\begin{enumerate}
				\item $\|\mathbf{b}_\ell\| = \|\mathbf{c}_\ell\| = \rho$;
				\item $\hat{e} \perp \mathbf{b}_\ell$;
				\item $\mathbf{c}_\ell = \hat{e} \times \mathbf{b}_\ell$;
			\end{enumerate}
		\end{multicols}
		
		\noindent for $\ell = 0, 1$. 
		
		Since $\varphi_{0,1}$ represents the angle between the planes $\{x_{u_0}, x_v, x_w\}$ and $\{x_{u_1}, x_v, x_w\}$, and $\textbf{c}_0$ and $\textbf{c}_1$ are the normal vectors to these planes, it follows that $\varphi_{0,1}$ is the angle between the vectors $\textbf{c}_0$ and $\textbf{c}_1$. To express $\varphi_{0,1}$ in terms of $\textbf{b}_\ell$ and $\textbf{c}_\ell$, we apply the standard vector identities in $\mathbb{R}^3$ as outlined in \cite{2004_thornton_cdopas, 2011_zill_aem}.		
		\begin{align}
			& \vec{v}_1 \times \vec{v}_2 = -\vec{v}_2 \times \vec{v}_1 \tag{P1}\\[0.1cm]
			& \vec{v}_1 \cdot (\vec{v}_2 \times \vec{v}_3) = \vec{v}_3 \cdot (\vec{v}_1 \times \vec{v}_2) \tag{P2}\\[0.1cm]
			& (\vec{v}_1 \times \vec{v}_2)\times(\vec{v}_3 \times \vec{v}_4) = [(\vec{v}_1 \times \vec{v}_2)\cdot \vec{v}_4]\vec{v}_3 - [(\vec{v}_1 \times \vec{v}_2) \cdot \vec{v}_3]\vec{v}_4 \tag{P3}\\[0.1cm]
			& (\vec{v}_1 \times \vec{v}_2)\cdot(\vec{v}_3 \times \vec{v}_4) = (\vec{v}_1 \cdot \vec{v}_3)(\vec{v}_2 \cdot \vec{v}_4) - (\vec{v}_1 \cdot \vec{v}_4)(\vec{v}_2 \cdot \vec{v}_3) \tag{P4}\\
			& \vec{v}_1 \cdot \vec{v}_1 = \vec{v}_1 \cdot \vec{v}_2 \ \Leftrightarrow\ \vec{v}_1 = \vec{v}_2, \text{ whenever } \vec{v}_1\neq \vec 0 \tag{P5}
		\end{align}
		
		Thus, based on the definitions of the dot and cross products, for $\varphi_{0,1} \in (-\pi, \pi]$, we obtain:
		\begin{equation*}
			\begin{array}{rcl}
				\rho^2 \cos \varphi_{0,1} & = & \|\mathbf{c}_0\|\|\mathbf{c}_1\| \cos \varphi_{0,1} = \mathbf{c}_0 \cdot \mathbf{c}_1 = (\hat{e} \times \mathbf{b}_0) \cdot (\hat{e} \times \mathbf{b}_1) \overset{(P4)}{=} (\hat{e} \cdot \hat{e})(\mathbf{b}_0 \cdot \mathbf{b}_1) - (\hat{e} \cdot \mathbf{b}_1)(\mathbf{b}_0 \cdot \hat{e}) = \mathbf{b}_0 \cdot \mathbf{b}_1\\[0.3cm]
				\rho^2 \sin |\varphi_{0,1}| & = & \|\mathbf{c}_0\|\|\mathbf{c}_1\|\sin |\varphi_{0,1}| = \|\mathbf{c}_0 \times \mathbf{c}_1\| = \|(\hat{e} \times \mathbf{b}_0) \times (\hat{e} \times \mathbf{b}_1)\| \overset{(P3)}{=} \|(\hat{e} \times \mathbf{b}_0) \cdot \mathbf{b}_1\hat{e} - (\hat{e} \times \mathbf{b}_0) \cdot \hat{e} \mathbf{b}_1\|\\[0.3cm] 
				&  = & |(\hat{e} \times \mathbf{b}_0) \cdot \mathbf{b}_1|\|\hat{e}\| = |\mathbf{c}_0 \cdot \mathbf{b}_1|
			\end{array}
		\end{equation*}
		
		By definition of $\mathbf{c}_0$ and $\mathbf{b}_1$, the sign of $\mathbf{c}_0 \cdot \mathbf{b}_1$ encodes the orientation: 
		$\mathbf{c}_0 \cdot \mathbf{b}_1 \geq 0 \ \Leftrightarrow \ \varphi_{0,1} \in [0,\pi]$, whereas 
		$\mathbf{c}_0 \cdot \mathbf{b}_1 < 0 \ \Leftrightarrow \ \varphi_{0,1} \in (-\pi,0)$. 
		This geometric relation can be observed in \autoref{fig:DDGP_framework} by identifying $u \leftarrow u_0$, $v \leftarrow v$, $w \leftarrow w$, and $u_1 \leftarrow w$. Accordingly, for all $\varphi_{0,1} \in (-\pi,\pi]$, we can write
		\begin{equation*}
			\rho^2 \sin |\varphi_{0,1}| = |\mathbf{c}_0 \cdot \mathbf{b}_1| \Leftrightarrow \rho^2 \sin \varphi_{0,1} = \mathbf{c}_0 \cdot \mathbf{b}_1 = (\hat{e} \times \mathbf{b}_0) \cdot \mathbf{b}_1 \overset{(P2)}{=} (\mathbf{b}_1 \times \hat{e}) \cdot \mathbf{b}_0 \overset{(P1)}{=} -(\hat{e} \times \mathbf{b}_1) \cdot \mathbf{b}_0 = -\mathbf{c}_1 \cdot \mathbf{b}_0
		\end{equation*}
		
		Therefore, we conclude that $\mathbf{c}_0 \cdot \mathbf{c}_1 = \mathbf{b}_0 \cdot \mathbf{b}_1 = \rho^2 \cos \varphi_{0,1}$ and $\mathbf{c}_0 \cdot \mathbf{b}_1 = - \mathbf{c}_1 \cdot \mathbf{b}_0 = \rho^2 \sin \varphi_{0,1}$, with $\varphi_{0,1} \in (-\pi,\pi]$. Consequently,
		\begin{equation*}
			\begin{array}{rrl}
				\mathbf{b}_1 \cdot \big(\mathbf{b}_0 \cos \varphi_{0,1} + \mathbf{c}_0 \sin \varphi_{0,1}\big) = & \mathbf{b}_1 \cdot \mathbf{b}_0 \cos \varphi_{0,1} +  \mathbf{b}_1 \cdot \mathbf{c}_0 \sin \varphi_{0,1} = & \rho^2 \left(\cos^2 \varphi_{0,1} + \sin^2 \varphi_{0,1}\right) = \rho^2 = \mathbf{b}_1 \cdot \mathbf{b}_1,\\[0.3cm]
				\mathbf{c}_1 \cdot \big(\mathbf{c}_0 \cos \varphi_{0,1} - \mathbf{b}_0 \sin \varphi_{0,1}\big) = & \mathbf{c}_1 \cdot \mathbf{c}_0 \cos \varphi_{0,1} - \mathbf{c}_1 \cdot \mathbf{b}_0 \sin \varphi_{0,1} = & \rho^2 \left(\cos^2 \varphi_{0,1} + \sin^2 \varphi_{0,1}\right) = \rho^2 = \mathbf{c}_1 \cdot \mathbf{c}_1.
			\end{array}
		\end{equation*}
		Thus, by applying (P5) to the identities above, the equalities in \eqref{sec:3:eq:proof_proposition3_4} hold, and the proposition is proved.
	\end{proof}
	
	From the previous proposition, we may directly derive the following corollary:
	\begin{corollary}
		\label{sec:3:corollary:1}
		Let ${^{(u,v,w)}x}_z$, $\oplus$, and $\varphi_{u_0,v,w,u_1}$ denote, respectively, the function, the binary operation, and the torsion angle as employed in the previous proposition. If $\mathcal{T} = [\underline{\tau}, \overline{\tau}] \subset (-\pi, \pi]$, then
		\begin{equation*}
			{^{(u_1,v,w)}x}_z\big(\mathcal{T}\big) = {^{(u_0,v,w)}x}_z\big(\varphi_{u_0,v,w,u_1} \oplus \mathcal{T}\big),
			\label{sec:3:eq:corollary:1}
		\end{equation*}
		
		\noindent where $\varphi \oplus \mathcal{T} := \left\{\omega \in (-\pi, \pi] \ | \ \min\{\varphi \oplus \underline{\tau}, \varphi \oplus \overline{\tau}\} \leq \omega \leq \max\{\varphi \oplus \underline{\tau}, \varphi \oplus \overline{\tau}\} \right\}$.
	\end{corollary}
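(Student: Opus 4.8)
The plan is to bootstrap the pointwise identity of \Cref{sec:3:proposition:3} up to the level of sets, using the behaviour of the image-of-union operation and the injectivity of the parametrization. \Cref{sec:3:proposition:3} already gives, for every single angle $\tau \in (-\pi,\pi]$, the equality ${}^{(u_1,v,w)}x_z(\tau) = {}^{(u_0,v,w)}x_z(\varphi_{u_0,v,w,u_1} \oplus \tau)$. Writing $h(\tau) := \varphi_{u_0,v,w,u_1} \oplus \tau$ for the angular shift, I would first apply this identity to each $\tau \in \mathcal T$ to obtain
\[
{}^{(u_1,v,w)}x_z(\mathcal T) = \{ {}^{(u_0,v,w)}x_z(h(\tau)) : \tau \in \mathcal T \} = {}^{(u_0,v,w)}x_z\big(h(\mathcal T)\big),
\]
where $h(\mathcal T) = \{ \varphi_{u_0,v,w,u_1} \oplus \tau : \tau \in \mathcal T \}$ is the genuine image of $\mathcal T$ under the shift. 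The equality of these collections of points is immediate from the pointwise identity, and is exactly the $f(\bigcup)=\bigcup f$ content of \Cref{sec:2:theorem:1} applied to singletons. Thus the corollary reduces entirely to the purely angular claim that $h(\mathcal T)$ coincides with the interval $\varphi_{u_0,v,w,u_1} \oplus \mathcal T$ defined through the $\min/\max$ of the shifted endpoints.

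To settle that angular claim I would analyze $h$ explicitly. Unfolding the definition of $\oplus$ gives $h(\tau) = \varphi_{u_0,v,w,u_1} + \tau - 2\pi \big\lfloor (\varphi_{u_0,v,w,u_1} + \tau + \pi)/2\pi \big\rfloor$, so that on each region where the floor is constant $h$ is a translation by $\varphi_{u_0,v,w,u_1} + 2k\pi$, hence continuous and strictly increasing there, and it is a bijection of $(-\pi,\pi]$ onto itself. Since $\mathcal T \subset (-\pi,\pi]$ has length $\overline\tau - \underline\tau < 2\pi$, the argument $\varphi_{u_0,v,w,u_1} + \tau + \pi$ sweeps an interval of length $< 2\pi$, so the floor takes at most two consecutive integer values. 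When it takes a single value, $h$ is affine and increasing on all of $\mathcal T$, whence $h(\mathcal T) = [h(\underline\tau), h(\overline\tau)]$ with $h(\underline\tau) \le h(\overline\tau)$; this is precisely $\big[\min\{h(\underline\tau),h(\overline\tau)\}, \max\{h(\underline\tau),h(\overline\tau)\}\big] = \varphi_{u_0,v,w,u_1} \oplus \mathcal T$, which proves the identity.

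The step I expect to be the genuine obstacle is the remaining case, in which the shifted interval straddles the single jump of $h$, i.e.\ the branch cut of $\oplus$ at $\pm\pi$. There the image $h(\mathcal T)$ is not one arc but a union of two arcs abutting $\pm\pi$, while the naive endpoint interval $\varphi_{u_0,v,w,u_1} \oplus \mathcal T$ describes the complementary arc through $0$; so the stated equality can fail unless one is careful. I would therefore make precise that the identity is asserted in the regime where $\mathcal T$ does not cross this branch cut, equivalently where $h|_{\mathcal T}$ is monotone, which is the situation in which the corollary is actually invoked when reparametrizing each arc ${}^{(i_\ell,i_2,i_1)}x_i(\mathcal T_{i_\ell,i_2,i_1,i})$ onto the common reference circle for the intersection in \Cref{sec:3:eq:AiI_2}. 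Under that hypothesis the $\min/\max$ convention correctly records the two endpoints in increasing order, and applying the common parametrization ${}^{(u_0,v,w)}x_z$ (injective on $(-\pi,\pi]$ by \Cref{sec:2:lemma:7}) to the identified angular set yields exactly the claimed arc, completing the argument.
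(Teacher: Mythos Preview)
The paper gives no explicit proof of this corollary: it simply states that the result ``may directly [be] derive[d]'' from \Cref{sec:3:proposition:3}. Your approach is exactly this direct derivation, carried out in full: apply the pointwise identity of \Cref{sec:3:proposition:3} across $\mathcal T$, then identify the resulting angular set with the $\min/\max$ interval $\varphi_{u_0,v,w,u_1}\oplus\mathcal T$. So at the level of strategy you are aligned with the paper.

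Where you go further than the paper is in your third paragraph, and this is worth highlighting. You correctly observe that the set-level identity $h(\mathcal T)=\varphi_{u_0,v,w,u_1}\oplus\mathcal T$ can fail when the shifted interval crosses the branch cut at $\pm\pi$: in that case $h(\mathcal T)$ is two arcs near the ends of $(-\pi,\pi]$, while the $\min/\max$ description picks out the complementary arc. The paper does not flag this, treating the corollary as an unconditional consequence of the proposition. Your resolution---restricting to the regime where $h|_{\mathcal T}$ is monotone, which is the only regime needed when the corollary is invoked in the proof of \Cref{sec:3:proposition:4}---is the right way to make the statement rigorous. In short, your proof is the paper's intended one-line argument expanded properly, together with a caveat the paper omits; there is no gap on your side.
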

	
	Taking into account all the results developed above, we establish the following characterization of the solution to the nonlinear system \eqref{sec:2:eq:system_xiI_k>3}:
	\begin{proposition}
		\label{sec:3:proposition:4}
		The solution to the nonlinear system \eqref{sec:2:eq:system_xiI_k>3}, whose equivalent geometric formulation involves intersecting the spheres \(S_1^i = S(x_{i_1}, d_{i,i_1})\), \(S_2^i = S(x_{i_2}, d_{i,i_2})\), and the spherical shells \(\mathbf{S}_3^i = S(x_{i_3}, \mathbf{d}_{i,i_3}),\ \dots,\ \mathbf{S}_{\kappa_i}^i = S(x_{i_{\kappa_i}}, \mathbf{d}_{i,i_{\kappa_i}})\), is given by
		\begin{equation}
			\mathcal{A}_i = {^{\left(i_3,i_2,i_1\right)}x}_i\big(\mathcal{T}_i\big), \text{ such that }
			\mathcal{T}_i = \mathcal{T}_i^0 \cap \left(\bigcap_{\ell = 4}^{\kappa_i} \varphi_{i_3,i_2,i_1,i_\ell} \oplus \mathcal{T}_{i_\ell,i_2,i_1,i}\right)
			\label{sec:3:eq:proposition:4:T_i},
		\end{equation}
		
		\noindent where
		\begin{itemize}
			\item ${^{(u,v,w)}}x_z$ denotes the mapping introduced in \Cref{sec:2:lemma:7};
			\item $\mathcal{T}_i^0$ refers to the angular interval defined in \eqref{sec:2:eq:T_i^0};
			\item $\mathcal{T}_{u,v,w,z}$ is the angular interval characterized in \Cref{sec:2:proposition:1};
			\item $\varphi_{u,v,w,z}$ denotes the torsion angle between the planes $\{x_u, x_v, x_w\}$ and $\{x_v, x_w, x_z\}$.
		\end{itemize}
	\end{proposition}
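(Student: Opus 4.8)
The plan is to begin from the set‑theoretic reformulation already recorded at the start of this section, namely
\[
\mathcal{A}_i = \bigcap_{\ell=3}^{\kappa_i}\big(S_i^1 \cap S_i^2 \cap \mathbf{S}_i^\ell\big),
\]
and to rewrite each of the $\kappa_i-2$ factors as a circular arc. Applying \Cref{sec:2:proposition:2} to the factor $S_i^1 \cap S_i^2 \cap \mathbf{S}_i^\ell$ — with the interval sphere $\mathbf{S}_i^\ell$ in the role of the shell and the exact spheres $S_i^1,S_i^2$ in the roles of the two spheres, so that $u=i_\ell$, $v=i_2$, $w=i_1$, $z=i$ — turns that factor into ${^{(i_\ell,i_2,i_1)}x}_i\big(\mathcal{T}_{i_\ell,i_2,i_1,i}\big)$. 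This yields $\mathcal{A}_i = \bigcap_{\ell=3}^{\kappa_i} {^{(i_\ell,i_2,i_1)}x}_i\big(\mathcal{T}_{i_\ell,i_2,i_1,i}\big)$.

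The next step, which I regard as the crux, is to align all these arcs into a single parameterization. The factors are governed by $\kappa_i-2$ \emph{distinct} functions ${^{(i_\ell,i_2,i_1)}x}_i$, each carrying its own zero of the torsion angle fixed by the reference plane $\{x_{i_\ell},x_{i_2},x_{i_1}\}$, and an intersection of sets can only be computed once they share a common coordinate. I would therefore re‑express every factor through the single reference function ${^{(i_3,i_2,i_1)}x}_i$ using \Cref{sec:3:corollary:1} with $u_0=i_3$, $u_1=i_\ell$, $v=i_2$, $w=i_1$, $z=i$, which gives ${^{(i_\ell,i_2,i_1)}x}_i(\mathcal{T}) = {^{(i_3,i_2,i_1)}x}_i\big(\varphi_{i_3,i_2,i_1,i_\ell} \oplus \mathcal{T}\big)$. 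Since $\mathcal{T}_{i_\ell,i_2,i_1,i}$ is a union of two intervals by \Cref{sec:2:proposition:1}, I would first distribute the image over this union via \Cref{sec:2:theorem:1}, apply the corollary to each subinterval, and recombine, extending the notation $\varphi \oplus \mathcal{T}$ to unions componentwise. For $\ell=3$ the offset $\varphi_{i_3,i_2,i_1,i_3}$ is the angle between the plane $\{x_{i_3},x_{i_2},x_{i_1}\}$ and itself, hence $0$, so that factor reduces to $\mathcal{T}_{i_3,i_2,i_1,i}=\mathcal{T}_i^0$ as defined in \eqref{sec:2:eq:T_i^0}.

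After this alignment every factor is an image of the one function ${^{(i_3,i_2,i_1)}x}_i$, so I would invoke \Cref{sec:2:theorem:1} once more — this time the intersection identity — to commute the intersection with the image. Equality (rather than mere inclusion) is licensed by the injectivity of ${^{(i_3,i_2,i_1)}x}_i$ on $(-\pi,\pi]$ from \Cref{sec:2:lemma:7}; the hypotheses hold because each aligned angular set lies in $(-\pi,\pi]$, $\oplus$ being valued in $(-\pi,\pi]$. Pulling the function outside yields
\[
\mathcal{A}_i = {^{(i_3,i_2,i_1)}x}_i\Bigg(\bigcap_{\ell=3}^{\kappa_i} \varphi_{i_3,i_2,i_1,i_\ell} \oplus \mathcal{T}_{i_\ell,i_2,i_1,i}\Bigg),
\]
and isolating the $\ell=3$ term as $\mathcal{T}_i^0$ produces exactly the set $\mathcal{T}_i$ of \eqref{sec:3:eq:proposition:4:T_i}; since $\mathcal{A}_i$ is by definition the solution set of \eqref{sec:2:eq:system_xiI_k>3}, the proposition follows.

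I expect the main obstacle to be precisely the reference‑frame alignment: one must check that \Cref{sec:3:corollary:1}, stated for a single interval $[\underline{\tau},\overline{\tau}]\subset(-\pi,\pi]$, applies consistently to the two‑component sets $\mathcal{T}_{i_\ell,i_2,i_1,i}$; that the shifted sets remain inside the domain $(-\pi,\pi]$ on which injectivity is available; and that the $\oplus$‑offset $\varphi_{i_3,i_2,i_1,i_\ell}$ genuinely reparameterizes all $\kappa_i-2$ arcs over a common domain rather than matching them only pointwise. Once this bookkeeping is handled carefully, the two appeals to \Cref{sec:2:theorem:1} (for unions, and for intersections under injectivity) close the argument.
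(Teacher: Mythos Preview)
Your proposal is correct and follows essentially the same route as the paper: decompose $\mathcal{A}_i$ as $\bigcap_{\ell=3}^{\kappa_i}(S_i^1\cap S_i^2\cap \mathbf{S}_i^\ell)$, rewrite each factor via \Cref{sec:2:proposition:2}, align all arcs to the common reference frame ${^{(i_3,i_2,i_1)}x}_i$ through \Cref{sec:3:corollary:1}, and then pull the intersection inside using the injectivity clause of \Cref{sec:2:theorem:1}. Your explicit attention to extending \Cref{sec:3:corollary:1} from a single interval to the two-component sets $\mathcal{T}_{i_\ell,i_2,i_1,i}$ and to verifying that the $\oplus$-shifted sets stay in $(-\pi,\pi]$ is a point the paper's proof passes over silently, so your treatment is in fact slightly more careful.
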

	\begin{proof}
		We have
		\begin{align*}
			\mathcal{A}_i & = \displaystyle \bigcap_{\ell = 3}^{\kappa_i} \left( S_i^1 \cap S_i^2 \cap \mathbf{S}_i^\ell \right) 
			\overset{\text{Prop. \ref{sec:2:proposition:2}}}{=} \bigcap_{\ell = 3}^{\kappa_i} {^{(i_\ell,i_2,i_1)}x}_i\left( \mathcal{T}_{i_\ell,i_2,i_1,i} \right) \\
			& \overset{\text{Cor. \ref{sec:3:corollary:1}}}{=} \displaystyle \left({^{(i_3,i_2,i_1)}x}_i\left(\mathcal{T}_{i_3,i_2,i_1,i} \right)\right) \cap \left(\bigcap_{\ell = 4}^{\kappa_i} {^{(i_3,i_2,i_1)}x}_i\left( \varphi_{i_3,i_2,i_1,i_\ell} \oplus \mathcal{T}_{i_\ell,i_2,i_1,i} \right)\right) \\
			& \overset{{\text{Lem. \ref{sec:2:theorem:1}}}}{=} \displaystyle {^{(i_3,i_2,i_1)}x}_i \left(\mathcal{T}_i^0 \cap \left(\bigcap_{\ell = 4}^{\kappa_i} \varphi_{i_3,i_2,i_1,i_\ell} \oplus \mathcal{T}_{i_\ell,i_2,i_1,i}\right)\right).
			\label{sec:3:eq:proof_proposition4_1}
		\end{align*}
		Hence, the solution is given by $\mathcal{A}_i = {^{(i_3,i_2,i_1)}x}_i(\mathcal{T}_i)$, where $\mathcal{T}_i$ is defined as in \eqref{sec:3:eq:proposition:4:T_i}.
	\end{proof}
	
	Regarding the computation of $\mathcal{T}_i$, the torsion angle $\varphi_{i_3,i_2,i_1,i_\ell}$ can be interpreted as a transformation from the reference frame $\{x_{i_\ell}, x_{i_2}, x_{i_1}\}$ to the frame $\{x_{i_3}, x_{i_2}, x_{i_1}\}$. In this context, the orientation of the angle $\varphi_{i_3,i_2,i_1,i_\ell}$ plays a critical role in defining the direction of this transformation. The expression for the cosine of the torsion angle, provided in \eqref{sec:2:eq:costau}, enables the computation of the absolute value of $\varphi_{i_3,i_2,i_1,i_\ell}$ for all $\ell = 4,\ \dots,\ \kappa_i$; however, it does not yield the orientation. The following lemma addresses this by providing a method to determine the sign of the torsion angles.
	
	\begin{lemma}
		\label{sec:3:lemma:9}
		Let $x_u,\ x_v,\ x_w,\ x_z \in \mathbb{R}^3$ be non-collinear points. The sign of the torsion angle $\varphi_{u,v,w,z}$ is given by 
		\begin{equation*}
			\textnormal{sign}\big(\varphi_{u,v,w,z}\big) = \textnormal{sign}\big([(x_w - x_v) \times (x_u - x_v)] \cdot (x_z - x_v)\big).
			\label{sec:3:eq:lemma:9}
		\end{equation*}
	\end{lemma}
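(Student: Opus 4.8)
```latex
\textbf{Proof proposal.}
The plan is to show that the sign of the torsion angle $\varphi_{u,v,w,z}$, which measures the rotation carrying the half-plane bounded by the line through $x_v,x_w$ and containing $x_u$ to the analogous half-plane containing $x_z$, is encoded by the scalar triple product of the three displacement vectors involved. Throughout I will reuse the geometric setup of \Cref{sec:2:lemma:7} and \eqref{sec:2:eq:a_b_c}: the axis of rotation is $\hat e = (x_w - x_v)/\|x_w - x_v\|$, and the torsion angle is measured in the plane orthogonal to $\hat e$.

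First I would recall that the magnitude $|\varphi_{u,v,w,z}|$ is exactly the angle between the two normal vectors $\mathbf c_0$ and $\mathbf c_1$ associated with the points $x_u$ and $x_z$ (as in the proof of \Cref{sec:3:proposition:3}), or equivalently the angle between the projections of $x_u - x_v$ and $x_z - x_v$ onto the plane perpendicular to $\hat e$. The cosine of this angle is already fixed by \eqref{sec:2:eq:costau}, so the only remaining ambiguity is the orientation, i.e.\ whether the rotation from the $x_u$-side to the $x_z$-side about the oriented axis $\hat e$ is positive or negative. By the right-hand-rule convention implicit in the construction of the orthonormal frame $(\hat e, \hat y, \hat z)$ in \eqref{sec:2:eq:a_b_c}, a positive torsion angle corresponds to a rotation whose axis agrees with $\hat e$.

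Next I would make the orientation explicit. Write $\mathbf{p}_u$ and $\mathbf{p}_z$ for the components of $x_u - x_v$ and $x_z - x_v$ orthogonal to $\hat e$. The signed angle from $\mathbf p_u$ to $\mathbf p_z$ about the oriented axis $\hat e$ has the same sign as $(\mathbf p_u \times \mathbf p_z)\cdot \hat e$, by the standard characterization of signed planar angles. The key computation is then to check that replacing the orthogonal projections $\mathbf p_u,\mathbf p_z$ by the full vectors $x_u - x_v,\ x_z - x_v$ does not change this sign: since the cross product with any vector annihilates that vector's component along $\hat e$, one has $(\mathbf p_u \times \mathbf p_z)\cdot \hat e = \big((x_u - x_v)\times (x_z - x_v)\big)\cdot \hat e$. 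Finally, using $\hat e \parallel (x_w - x_v)$ and the cyclic/antisymmetry identities (P1)--(P2) already listed in the excerpt, I would rewrite this triple product to match the claimed expression, observing that the positive multiplicative constant $\|x_w - x_v\|^{-1}$ and the particular cyclic arrangement chosen give precisely $\textnormal{sign}\big([(x_w - x_v)\times(x_u - x_v)]\cdot (x_z - x_v)\big)$.

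The main obstacle I anticipate is purely bookkeeping of orientation: one must verify that the ordering of factors in the stated triple product is consistent with the right-hand convention baked into \eqref{sec:2:eq:a_b_c}, rather than off by an overall sign. Concretely, I would confirm that $[(x_w - x_v)\times(x_u - x_v)]\cdot(x_z - x_v)$ and $\big((x_u - x_v)\times(x_z - x_v)\big)\cdot(x_w - x_v)$ carry the same sign via the scalar-triple-product invariance under cyclic permutation (P2), so that the asserted formula indeed reproduces the orientation fixed by $\hat e$. The non-collinearity hypothesis on $x_u,x_v,x_w,x_z$ guarantees that the relevant projected vectors are nonzero and the triple product does not vanish identically, so the sign is well defined.
```
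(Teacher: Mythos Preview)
Your proposal is correct and reaches the right conclusion, but the route differs from the paper's. The paper proceeds by writing $\vec v_1 = x_z - x_v$, $\vec v_2 = x_w - x_v$, $\vec v_3 = x_u - x_v$ and computing $\sin\varphi_{u,v,w,z}$ directly from the dihedral-angle formula for the normals $\vec v_1\times\vec v_2$ and $\vec v_2\times\vec v_3$: it takes $(\vec v_1\times\vec v_2)\times(\vec v_2\times\vec v_3)$, applies the double cross-product identity (P3) together with (P2), and reads off that $\sin\varphi_{u,v,w,z}$ is a positive multiple of $(\vec v_2\times\vec v_3)\cdot\vec v_1$. The oddness of sine finishes. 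Your argument instead projects $x_u-x_v$ and $x_z-x_v$ onto the plane orthogonal to $\hat e$, invokes the standard signed-planar-angle criterion $(\mathbf p_u\times\mathbf p_z)\cdot\hat e$, and then argues that dropping the projections does not change the triple product. The paper's computation is shorter and avoids the projection step entirely; your approach is more explicitly geometric and makes the role of the axis $\hat e$ visible. The orientation check you flag as the ``main obstacle'' is exactly the content hidden in the paper's first displayed identity (the one asserting that the cross product of the two normals points along $\hat v_2$ with coefficient $\sin\varphi$), so both proofs ultimately rely on the same sign convention, just packaged differently.
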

	\begin{proof}
		The orientation of the torsion angle $\varphi_{u,v,w,z}$ is defined with respect to the plane 
		$\Pi = \{x_u, x_v, x_w\}$ and the point $x_z$. Since $\Pi$ divides $\mathbb{R}^3$ into two half-spaces, the sign of $\varphi_{u,v,w,z}$ depends on which side of $\Pi$ the point $x_z$ lies.  
		
		Let $\hat{v} = (x_u - x_v)/\|x_u - x_v\|$ and $\hat{r} = (x_w - x_v)/\|x_w - x_v\|$. Denote by $x_z^\Pi$ the orthogonal projection of $x_z$ onto $\Pi$, which is computed by \cite{2016_strang_itla}:
		\begin{equation*}
			x_z^\Pi = \text{proj}_\Pi(x_z) = x_v + \hat{s}\cdot(x_z-x_v)\hat{s} + \hat{r}\cdot(x_z-x_v)\hat{r}, \text{ where } \vec{s} = \hat{v} - (\hat{v}\cdot \hat{r})\hat{r}; \text{ remark that } \hat{r}\perp \hat{s}.  
		\end{equation*}
		
		Then, $\vec{w} = x_z - x_z^\Pi$ is normal to $\Pi$, whereas $\hat{r}\times \hat{s}$ is the unit normal defining the orientation of $\Pi$. The sign of the torsion angle $\varphi_{u,v,w,z}$ is therefore given by
		\begin{equation*}
			\begin{array}{lcl}
				\text{sign}(\varphi_{u,v,w,z}) & = & (\hat{r}\times \hat{s}) \cdot \hat{w}\\[0.3cm]
				& = & \hat{r} \times \hat{s} \cdot \dfrac{x_z - x_v - \hat{s}\cdot(x_z-x_v)\hat{s} - \hat{r}\cdot(x_z-x_v)\hat{r}}{\|\vec{w}\|} = \dfrac{1}{\|\vec{w}\|}\left(\hat{r} \times \hat{s} \right) \cdot \left(x_z - x_v\right) \\[0.5cm]
				
				& = & \dfrac{1}{\|\vec{r}\|\|\vec{s}\|\|\vec{w}\|}\left(\vec{r} \times \vec{s} \right) \cdot \left(x_z - x_v\right) = \dfrac{1}{\|\vec{r}\|\|\vec{s}\|\|\vec{w}\|}\left(\vec{r} \times \vec{v} \right) \cdot \left(x_z - x_v\right)\\[0.5cm]
				
				& = & \text{sign}\big([(x_w-x_v)\times(x_u-x_v)]\cdot(x_z-x_v)\big)
			\end{array}
		\end{equation*}
	\end{proof}
		
	The preceding sequence of results developed has led to a systematic construction of the solution $\mathcal{A}_i$ to the nonlinear system \eqref{sec:2:eq:system_xiI_k>3}. As established by the construction, this solution consists of arcs on a circle in $\mathbb{R}^3$, parameterized by an angular interval $\mathcal{T}_i$. More generally, we can characterize the solution $\mathcal{A}_i$ of \eqref{sec:2:eq:system_xiI_k>3} as:
	\begin{multicols}{2}
		\begin{enumerate}
			\item $\mathcal{A}_i = \emptyset$ if $\mathcal{T}_i = \emptyset$;
			\item $\mathcal{A}_i$ is a discrete set if $\mathcal{T}_i$ is a discrete set;
			\item $\mathcal{A}_i$ is uncountable if $\mathcal{T}_i$ is uncountable.
		\end{enumerate}
	\end{multicols}
	
	Based on the interval distances $\mathbf{d}_{i,i_\ell}$ for $\ell = 3,\ \dots,\ \kappa_i$, the expression for $\mathcal{T}_i$ in \eqref{sec:3:eq:proposition:4:T_i}, and the three cases above, we conclude:
	\begin{itemize}
		\item $\mathcal{A}_i$ is either empty or a discrete set if there exists $3 \leq \ell \leq \kappa_i$ such that $\mathbf{d}_{i,i_\ell} = d_{i,i_\ell}$;
		\item If $\kappa_i = 3$ and $\mathbf{d}_{i,i_3} \neq d_{i,i_3}$, then $\mathcal{A}_i$ is the union of two circular arcs symmetric with respect to the plane $\{x_{i_3}, x_{i_2}, x_{i_1}\}$ (\Cref{sec:2:proposition:2}). If $\mathbf{d}_{i,i_3} = d_{i,i_3}$, each arc reduces to a single point.
	\end{itemize}
	
	With the result of \Cref{sec:3:proposition:4} in hand, the proposition of a new \textit{i}BP-based algorithm that fully accounts for the characterization of $\mathcal{A}_i$ becomes a natural step. Since the construction of $\mathcal{A}_i$ is based on angular intervals, we refer to the newly proposed algorithm as the interval Angular Branch-and-Prune (\textit{i}ABP) method. In this new approach, for each vertex, the set $\mathcal{A}_i$ is computed; if it is nonempty, a sample $A_i$ is drawn from it, and one point at a time is selected to proceed in the Branch step. Conversely, if $\mathcal{A}_i$ is empty, we are in the Prune phase, meaning that backtracking is necessary to explore an alternative path in the search tree. The sample $A_i$ can be constructed by first selecting a finite subset $T_i = \text{sample}(\mathcal{T}_i)$, where $\mathcal{T}_i$ is defined in \eqref{sec:3:eq:proposition:4:T_i}, and then letting:
	\begin{equation*}
		A_i = \text{sample}\left(\mathcal{A}_i\right) = \left\{{^{(i_3,i_2,i_1)}x}_i(\tau) \ | \ \forall \tau \in T_i \right\}.
		\label{sec:3:eq:A_i}
	\end{equation*}
	
	\noindent where ${^{(u,v,w)}x}_z$ denotes the parametric mapping introduced in \Cref{sec:2:lemma:7}, which assigns to each torsion angle a unique point in $\mathbb{R}^3$. Under this construction, every element of $A_i$ is feasible by definition, as it directly satisfies the angular constraints encoded in $\mathcal{T}_i$. This implies that the branching procedure explores only geometrically valid configurations, thereby circumventing the need for feasibility verification typically required in the standard \textit{i}BP algorithm.
	
	\autoref{sec:3:fig:Ai_iABP} presents an example illustrating the construction of a nonempty set $A_i$ within the \textit{i}ABP algorithm. In this example, $\kappa_i = 4$ and $N = 5$, and the sampling over $\mathcal{T}_i$ was performed uniformly with equally spaced points. This corresponds to the same configuration depicted in \autoref{sec:2:fig:Ai_iBP}, and a comparison reveals that the number of feasible sample points at this level is greater under the \textit{i}ABP approach. When comparing the two methods, it is important to emphasize that the computational cost of placing a vertex with \textit{i}ABP is generally higher than with \textit{i}BP, since \textit{i}ABP requires the explicit construction of the entire set $\mathcal{A}_i$, a step not necessary in \textit{i}BP. Nevertheless, in \textit{i}ABP, all sampled points are feasible by construction, whereas in \textit{i}BP, each candidate point must undergo an explicit feasibility check.
	\begin{figure}[!htp]
		\centering
		\begin{subfigure}[b]{0.49\linewidth}
			\centering
			\includegraphics[width=0.65\linewidth]{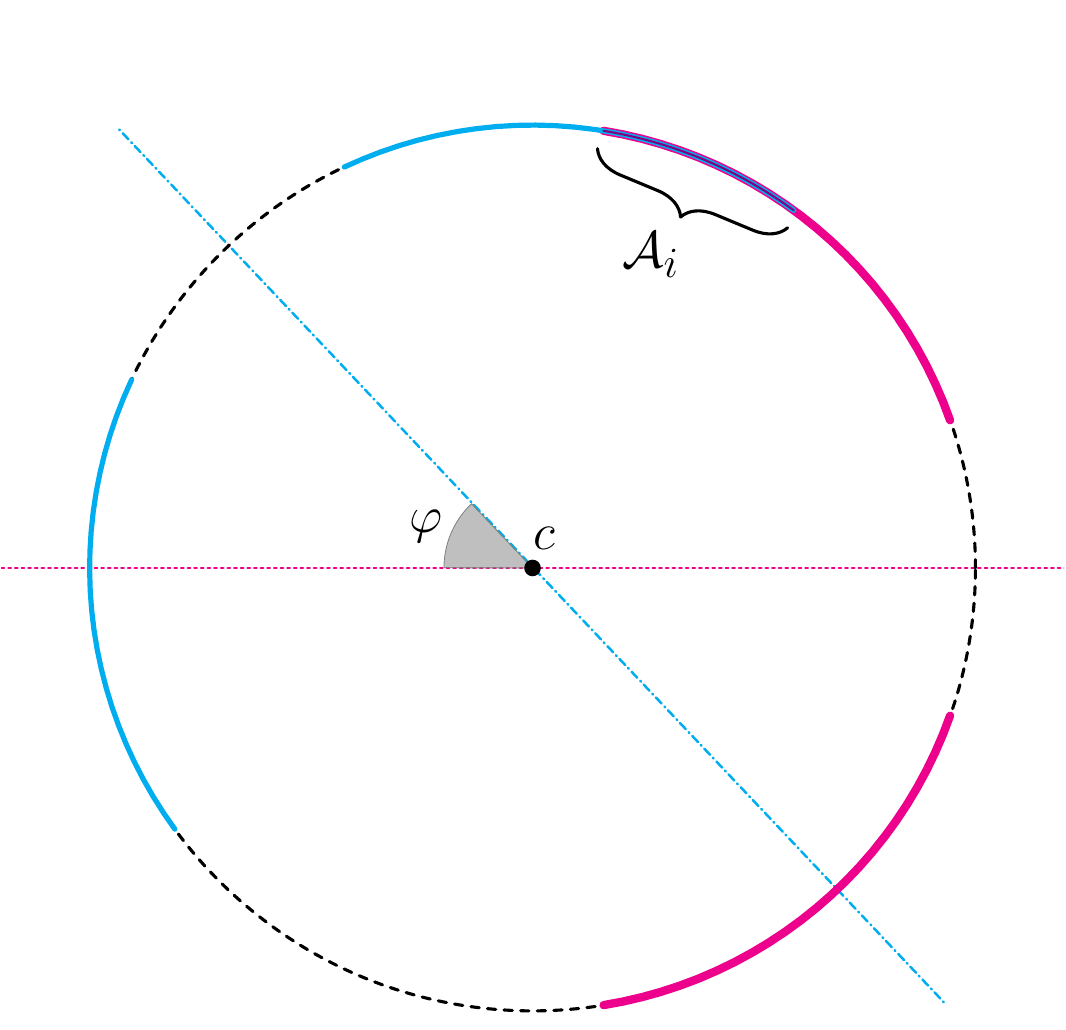}
			\caption{$\mathcal{A}_i = {^{(3,2,1)}\mathcal{A}}_i \cap {^{(4,2,1)}\mathcal{A}}_i$}
		\end{subfigure}
		\begin{subfigure}[b]{0.49\linewidth}
			\centering
			\includegraphics[width=0.65\linewidth]{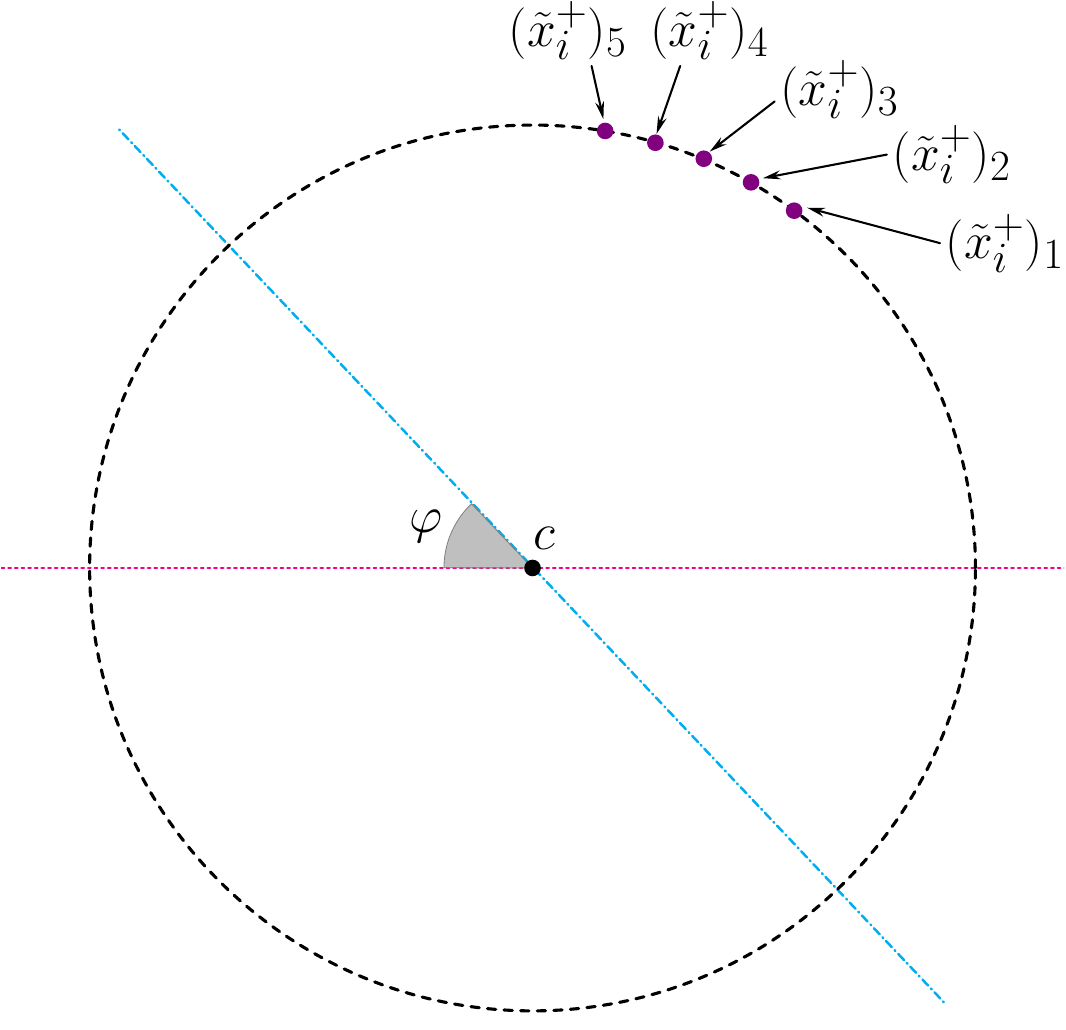}
			\caption{$A_i = \text{sample}(\mathcal{A}_i)$}
		\end{subfigure}
		
		\caption{View of the plane containing the intersection $S_i^1 \cap S_i^2$, represented as a dashed black circumference, where the angle $\varphi$ denotes the torsion between the planes $\{x_{i_4}, x_{i_2}, x_{i_1}\}$ and $\{x_{i_3}, x_{i_2}, x_{i_1}\}$. In (a), the magenta arcs correspond to the intersection of $\mathbf{S}_i^3$ with this fixed circumference, yielding the set ${^{(3,2,1)}\mathcal{A}}_i$, while the cyan arcs correspond to the intersection of $\mathbf{S}_i^4$ with the same circumference, resulting in ${^{(4,2,1)}\mathcal{A}}_i$. The purple arc represents the refined set $\mathcal{A}_i = {^{(3,2,1)}\mathcal{A}}_i \cap {^{(4,2,1)}\mathcal{A}}_i$. In (b), the sampled points from $\mathcal{A}_i$ are $A_i = \big\{\big(\tilde{x}_i^+\big)_1, \ \big(\tilde{x}_i^+\big)_2, \ \big(\tilde{x}_i^+\big)_3, \ \big(\tilde{x}_i^+\big)_4, \ \big(\tilde{x}_i^+\big)_5\big\}$.}
		
		\label{sec:3:fig:Ai_iABP}
	\end{figure}
	
	For all torsion angles constructed solely from distance information, for example with \eqref{sec:2:eq:costau}, it is important to recall that such information alone is insufficient to determine their sign. However, in applications involving molecular bio-structures, the orientation of certain torsion angles may be known \textit{a priori} for specific atoms \cite{LPB_2021}. To accommodate this additional information and enhance the robustness of the algorithm, we introduce the interval Torsion Angle Branch-and-Prune (\textit{i}TBP) method. This algorithm retains the same structure as \textit{i}ABP but extends it by allowing the specification of prescribed torsion angle intervals $\mathcal{T}_i^0$ for all $i$ as part of the input. In this framework, the \textit{i}ABP method can be recovered as a particular case of \textit{i}TBP by setting $\mathcal{T}_i^0 = \text{NULL}$ for all $i$, i.e., $\text{\textit{i}ABP} = \text{\textit{i}TBP}\big(\mathcal{T}_i^0 = \text{NULL},\ \forall i\big)$. A pseudocode summarizing the \textit{i}TBP algorithm is provided in \Cref{sec:3:pc:iTBP}, and \autoref{sec:5} presents computational results comparing the performance of \textit{i}ABP and \textit{i}TBP.
	
	\begin{algorithm}[!htp] 
		\caption{A main framework of the \textit{i}TBP algorithm in $\mathbb{R}^3$.}
		\label{sec:3:pc:iTBP}
		\begin{algorithmic}[1]\small
			\Require{$
				G = (V,E,\mathbf{d}),\ X = \{x_1,\ x_2,\ x_3\},\ i = 4,\ U_i = \{i_1,\ i_2,\ i_3,\ \dots,\ i_{\kappa_i}\},$ $ 
				b_i = 1,\ N_i \geq 1,\text{ and } \mathcal{T}_i^0 \ \forall \ i \in 4,\ \dots,\ |V|
				$}
			\Ensure{A realization $X = \{x_v \in \mathbb{R}^3 \ | \ v \in V\}$ or $X = \emptyset$} 
			
			\Function{\textit{i}TBP}{$G,\ X,\ i,\ U_i,\ b_i,\ N_i,\ \mathcal{T}_i^0$}
			\If{$\big(i = |V|\big)$}
			\State \Return $X$;
			\EndIf
			
			\If{$\big(b_i = 1\big)$}
			\If{$\big(\mathcal{T}_i^0 = \text{NULL}\big)$}
			\State Compute the angular interval $\mathcal{T}_i^0 = \mathcal{T}_{i_3,i_2,i_1,i}^- \cup \mathcal{T}_{i_3,i_2,i_1,i}^+$ as defined in \eqref{sec:2:eq:T_i^0};
			\EndIf
			\State Compute the angular interval $\mathcal{T}_i$ as defined in \eqref{sec:3:eq:proposition:4:T_i};
			\If{$\big(\mathcal{T}_i \neq \emptyset\big)$}
			\State Compute a sample $T_i = \{\tau_1^-,\ \dots,\ \tau_{N_i}^-\} \cup \{\tau_1^+,\ \dots,\ \tau_{N_i}^+\} \subset \mathcal{T}_i$;
			\State Compute the vectors $\mathbf{a}(i_2,i_1,i),\ \mathbf{b}(i_3,i_2,i_1,i)$, and $\mathbf{c}(i_3,i_2,i_1,i)$ as defined in \eqref{sec:2:eq:a_b_c};
			\Else
			\State $T_i \leftarrow \emptyset$;
			\EndIf
			\Else
			\State $b_i \leftarrow b_i + 1$;
			\EndIf
			
			\If{$\big(T_i \neq \emptyset\big)$}
			\State $\tau \leftarrow$ the $b_i$-th torsion angle in $T_i$;
			\State  $x_i \leftarrow \mathbf{a}(i_2,i_1,i) + \mathbf{b}(i_3,i_2,i_1,i)\cos\tau + \mathbf{c}(i_3,i_2,i_1,i)\sin \tau$;
			\State $X \leftarrow X \cup \{x_i\}$;
			\State $i \leftarrow i + 1$;
			\State \Return \Call{\textit{i}TBP}{$G,\ X,\ i,\ U_i,\ b_i,\ N_i,\ \mathcal{T}_i^0$};
			\Else
			\State Backtrack to the first $3 \leq j < i$ such that $b_j < 2N_j$;
			\If{$\big(j = 3\big)$}
			\State \Return $X = \emptyset$;
			\Else
			\State $X \leftarrow X \setminus \{x_j,\ x_{j+1},\ \dots,\ \ x_{i-1}\}$;
			\State Reset $b_k \leftarrow 1$ for all $k \in \{j+1,\ \dots,\ i-1\}$;
			\State $b_j \leftarrow b_j + 1$;
			\State \Return \Call{\textit{i}TBP}{$G,\ X,\ j,\ U_j,\ b_j,\ N_j,\ \mathcal{T}_j^0$};
			\EndIf
			\EndIf
			\EndFunction
		\end{algorithmic}
	\end{algorithm}
	
	\section{Generating \textit{i}DDGP Instances from Protein Structures}\label{sec:4}
	
	\subsection{Protein Chain Modeling}
	\label{sec:4:1}
	
	In a simplified framework, proteins can be described as linear chains of amino acid residues linked by peptide bonds \cite{TBCoNAaP_2010}. The protein backbone is represented by a sequence of atom triplets $\{N^i,\ C_\alpha^i,\ C^i\}$, where $i = 1,\ \dots,\ N_{\text{aa}}$, and $N_{\text{aa}}$ denotes the number of amino acids in the protein. The spatial configuration of the backbone is determined by torsion angles defined along the chain. \Cref{sec:4:fig:protein_overview} illustrates the bonds associated with the $i$-th residue ($R_i$) and its immediately adjacent residues, highlighting also the backbone torsion angles $\omega_i,\ \phi_i$, and $\psi_i$, defined by~\cite{LPB_2021}:
	\begin{equation*}
		\phi_i := C^{i-1} - N^i - C_\alpha^i - C^i, \quad \psi_i := N^i - C_\alpha^i - C^i - N^{i+1}, \quad \text{and} \quad \omega_i := C_\alpha^{i-1} - C^{i-1} - N^i - C_\alpha^i,
		\label{sec:4:eq:PhiPsiOmega}
	\end{equation*}
	where the $\omega$ angle typically assumes two characteristic values: $\omega \approx 0^{\circ}$ for a \textit{cis} peptide bond, and $\omega \approx 180^{\circ}$ for a \textit{trans} peptide bond.
	
	\begin{figure}[!htp]
		\centering
		\includegraphics[width=0.65\textwidth]{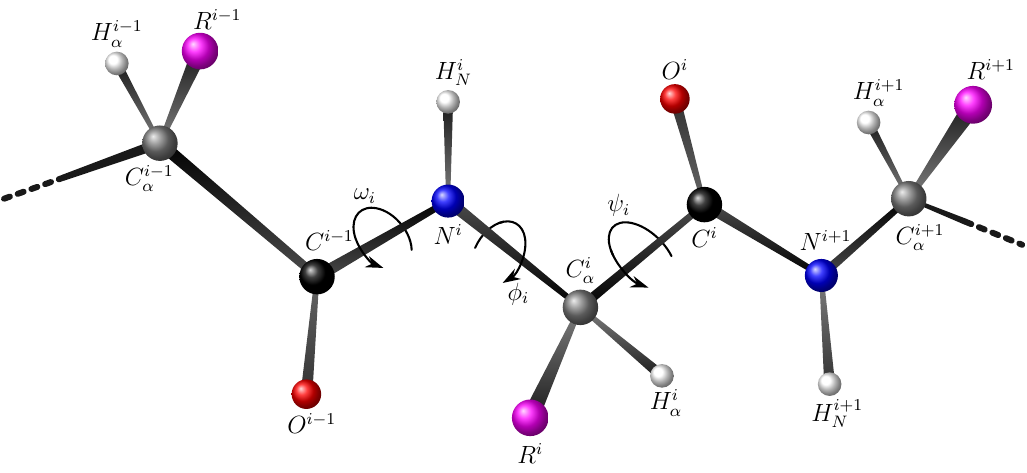}
		\caption{Representation of the $i$-th residue of a protein and its local adjacency, highlighting all relevant atoms and the torsion angles $\omega_i,\ \phi_i$, and $\psi_i$.}
		\label{sec:4:fig:protein_overview}
	\end{figure}
	
	Our focus will be restricted to determining the positions of backbone atoms. Once these atoms are fixed, the side chain atoms can subsequently be placed using strategies such as the one proposed in \cite{DOfPSC_2014}. Thus, our primary objective is to determine the spatial configuration of the backbone atoms.
	
	NMR experiments provide estimations of distances between atoms close in space within protein molecules. To incorporate such distance information and enhance triangulation accuracy, we include hydrogen atoms attached to the backbone in the set of considered atoms. Thus, for a protein chain comprising ${N_{\text{aa}}}$ amino acid residues, the relevant atom set consists of the backbone atoms $\{N^1,\ C_\alpha^1,\ C^1,\ \dots,\ N^i,\ C_\alpha^i,\ C^i,\ \dots,\ N^{N_{\text{aa}}},\ C_\alpha^{N_{\text{aa}}},\ C^{N_{\text{aa}}}\}$ along with selected hydrogen atoms $\{H_3,\ H_2,\ H_N^1,\ H_\rho^1,\ H_\varrho^2,\ H_\rho^2,\ \dots,\ H_\varrho^i,\ H_\rho^i,\ \dots,\ H_\varrho^{N_{\text{aa}}},\ H_\rho^{N_{\text{aa}}}\}$.
	
	Since proline residues lack a hydrogen atom ($H_N$) attached to the nitrogen ($N$) atom \cite{LPB_2021}, this missing atom is replaced by one of the hydrogens attached to the $\delta$-carbon ($C_\delta$), as done in \cite{2023_PDAiPPaNDotRM}. Thus, all hydrogen atoms attached to nitrogen atoms are labeled as $H_\varrho^i$, where $\varrho = \delta_2$ if $R_i$ is proline, and $\varrho = N$ otherwise. For glycine residues, which possess two $H_\alpha$ atoms attached to $C_\alpha$ \cite{LPB_2021}, we address only one, denoted as $\alpha_2$, i.e., $\rho = \alpha_2$ if $R_i$ is glycine, and $\rho = \alpha$ otherwise.
	
	The following assumptions are adopted to define the set of distance restraints:
	\begin{enumerate}
		\item The amino acid sequence of the protein is assumed to be known \textit{a priori}~\cite{2011_donald_aismb}.
		
		\item All covalent bond lengths and bond angles are experimentally determined, with the latter strictly ranging between $0^{\circ}$ and $180^{\circ}$~\cite{ABaAPfXrPSR_1991}. As a consequence, the following pairwise distances are assumed to be exactly known: all distances among atoms $H_3$, $H_2$, $H_N^1$, and $N_1$; all distances among $N^i$, $C_\alpha^i$, $C^i$, and $H^i_\rho$ for $i = 1,\ \dots,\ {N_{\text{aa}}}$; all distances from $C^{i-1}$ to $N^i$ and $H_N^i$ for $i = 2,\ \dots,\ {N_{\text{aa}}}$; and distances from $N^i$ to $C_\alpha^{i-1}$, $C^{i-1}$, and $H_{\delta_2}^i$ for $i = 2,\ \dots,\ {N_{\text{aa}}}$ (see~\autoref{sec:4:fig:DDGP_hcOrder}).
		
		\item Under the idealized planarity assumption~\cite{LPB_2021}, all distances between atoms within the same extended peptide group ($C_\alpha^{i}$, $C^{i}$, $O^{i}$, $N^{i+1}$, $H_\varrho^{i+1}$, $C_\alpha^{i+1}$) are considered known (see~\autoref{sec:4:fig:protein_overview}). Nevertheless, a recent study~\cite{2024_darocha_IoSiaLAfCPC} has shown that even small deviations from the ideal planarity of the peptide bond can significantly affect protein conformations computed using \textit{i}BP. As a consequence of these two considerations, the distances from $C^i_\alpha$ to $C^{i+1}_\alpha$, $H_N^{i+1}$, and $H_{\delta_2}^{i+1}$, as well as from $H_{\delta_2}^{i+1}$ to $C^i$ and $C_\alpha^{i+1}$, are assumed to be known for all $i = 1,\ \dots,\ {N_{\text{aa}}}-1$, along with all torsion angles defined by any quadruple of atoms within the extended peptide group, including not only their absolute values but also their orientation.
		\label{sec:4:pgdistances}
		
		\item Interatomic distances between hydrogen atoms separated by less than $5\ \text{\AA}$ can be estimated from NOESY NMR experiments~\cite{wuthrich1986}. The corresponding interval restraints provided by such measurements include: $\mathbf{d}(H_2, H_\rho^1)$, $\mathbf{d}(H_3, H_\rho^1)$, $\mathbf{d}(H_N^1, H_\rho^1)$, $\mathbf{d}(H_\rho^{i-1}, H_\varrho^i)$, and $\mathbf{d}(H_\varrho^i, H_\rho^i)$ for $i = 2,\ \dots,\ {N_{\text{aa}}}$.
		
		\item Estimates for the torsion angles $\phi$ and $\psi$ are available, for example, from TALOS-N, a neural network-based method trained on chemical shift data~\cite{shen2015_talosn}. Accordingly, we assume the interval distances $\mathbf{d}(N^{i-1}, N^i)$ and $\mathbf{d}(C^{i-1}, C^i)$ are available for all $i = 2,\ \dots,\ {N_{\text{aa}}}$ (see~\autoref{sec:4:fig:protein_overview}).
		\label{sec:4:torsion}
		
		\item The chirality, i.e., the spatial orientation of a tetrahedron, is known for the set of atoms $\{N,\ C_\alpha,\ C,\ H_\alpha\}$ and for any tetrahedron defined by four atoms in $\{H^3,\ H^2,\ H_N^1,\ N^1,\ C_\alpha^1\}$~\cite{LPB_2021}.
		\label{sec:4:chirality}
		
		\item The minimum allowed distance between any two atoms is given by the sum of their van der Waals radii~\cite{1964_bondi_vdwvar}, and is represented as an interval constraint of the form $\big[\underline{d}, \infty\big)$.
	\end{enumerate}
		
	\subsection{Construction of a Vertex Order for the \textit{i}DDGP}
	
	The work in \cite{OtCoPBbUABoH_2011} first modeled protein structures using a DDGP order, anticipating the hand-crafted (\textit{hc}) order later introduced in \cite{2013_lavor_tibpaftdmdgpwid}. Subsequent works \cite{2017_goncalvez_raotidgp, 2019_lavor_mndifropg, APDBGRfDOfDG_2015} proposed different vertex orderings. In \cite{2019_lavor_mndifropg}, the oxygen atom was also included among the considered atoms, and the order followed an \textit{i}DMDGP scheme with repeated atoms. Here, we remove the oxygen atom and repetitions to define a new \textit{hc} order, which is illustrated in \autoref{sec:4:fig:DDGP_hcOrder} for ${N_{\text{aa}}} = 4$, and presented below:
	\begin{align*}
		hc = \, &\big\{H_3,\ H_2,\ H_N^1,\ N^1,\ C_\alpha^1,\ H_\rho^1,\ C^1,\ H_\varrho^2,\ C_\alpha^2,\ N^2,\ H_\alpha^2,\ C^2,\ \dots,\\
		& \phantom{\big\{}H_\varrho^i,\ C_\alpha^i,\ N^i,\ H_\alpha^i,\ C^i,\ \dots, H_\varrho^{N_{\text{aa}}},\ C_\alpha^{N_{\text{aa}}},\ N^{N_{\text{aa}}},\ H_\alpha^{N_{\text{aa}}},\ C^{N_{\text{aa}}}\big\}.
	\end{align*}
	
	\begin{figure}[!htp]
		\centering
		\includegraphics[width=0.8\linewidth]{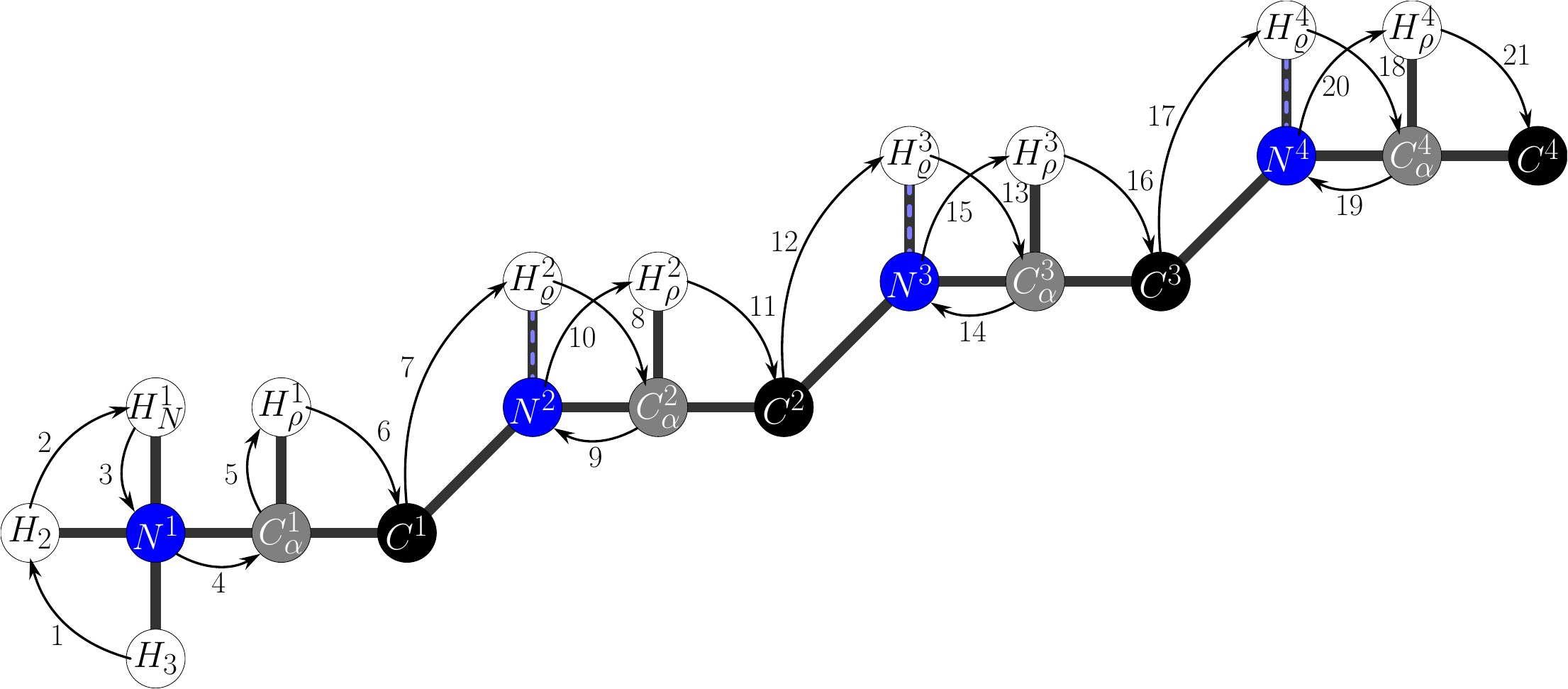}
		\caption{\textit{hc} order for the considered atoms of a protein composed of 4 amino acid residues.}
		\label{sec:4:fig:DDGP_hcOrder}
	\end{figure}
	
	Let us define a protein graph $G = (V, E, \mathbf{d})$ associated with the considered protein atoms. The vertices in $V$ are labeled according to the following procedure:
	\begin{equation*}
		\begin{array}{ccl}
			1^{st}\text{ res.} & : & 
			v_1 \leftarrow H^{3},\
			v_2 \leftarrow H^{2},\
			v_3 \leftarrow H_N^1,\
			v_4 \leftarrow N^1,\
			v_5 \leftarrow C_\alpha^1,\
			v_6 \leftarrow H_\rho^1,\
			v_7 \leftarrow C^1,\\
			\vdots & & \hspace{4cm}\vdots\\
			i^{th}\text{ res.} & : &
			v_{5i-2} \leftarrow H_\varrho^i,\
			v_{5i-1} \leftarrow C_\alpha^i,\
			v_{5i} \leftarrow N^i,\
			v_{5i+1} \leftarrow H_\rho^i,\
			v_{5i+2} \leftarrow C^i,\ 2\leq i \leq {N_{\text{aa}}}.
		\end{array}
		\label{sec:4:eq:order_hc}
	\end{equation*}
	
	The edges in $E$ are established based on the existence of the distances described in \Cref{sec:4:1}, and the weight $\mathbf{d}$ of each edge corresponds to the distance associated with the respective pair of atoms. Based on this, we state the following proposition:
	\begin{proposition}
		\label{sec:4:proposition:5}
		The \textit{hc} order is an \textit{i}DDGP order.
	\end{proposition}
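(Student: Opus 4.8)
The plan is to verify directly that the \textit{hc} order satisfies the two defining requirements of an \textit{i}DDGP order, namely assumptions $S1$ and $S2$ of \Cref{sec:2:def:iddgp}. Since the order is periodic with a five-atom block per residue (after an initial seven-vertex segment), the verification reduces to a finite case analysis: one base case for the opening vertices, one representative case for each of the five atom types $H_\varrho^i,\ C_\alpha^i,\ N^i,\ H_\rho^i,\ C^i$ appearing in a generic residue $i \geq 2$, and the residue-boundary transitions. Throughout, the exact distances needed to build cliques in $E_0$ are supplied by the covalent geometry (assumption 2: known bond lengths and bond angles fix all distances between atoms separated by at most two bonds) and by the planarity of the peptide group (assumption 3, cf.\ \Cref{sec:4:pgdistances}), whereas the NOE- and torsion-derived distances (assumptions 4, 5, and the van der Waals bounds of assumption 7) are precisely the interval edges in $E_I$ that must be avoided when selecting triangulating triples.

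For $S1$, the first three vertices are $v_1 = H^3$, $v_2 = H^2$, $v_3 = H_N^1$; by assumption 2 all three pairwise distances among them are exact, so the induced subgraph is a clique in $E_0$. The strict triangle inequalities then follow from the non-collinearity of these three atoms in any physically realized backbone geometry, and $S1$ holds.

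For $S2$, I would process the vertices $v_4,\ \dots,\ v_n$ in order and, for each, exhibit an explicit triple $\{v_{i_3},\ v_{i_2},\ v_{i_1}\} \subset U_{v_i}$ meeting both conditions. Reading the adjacencies off \autoref{sec:4:fig:DDGP_hcOrder}, every such vertex admits three predecessors that are mutually within two covalent bonds (or lie in a common peptide plane) and hence form an $E_0$-clique, with at least two of them joined to $v_i$ by exact edges. For instance $v_4 = N^1$ is triangulated by $\{H^3,\ H^2,\ H_N^1\}$, all of whose distances to $N^1$ are exact by assumption 2; the backbone atoms $C_\alpha^i,\ N^i,\ C^i$ are each triangulated by two bonded neighbours together with a third atom sharing a peptide plane or a common bonded atom; and the hydrogens $H_\varrho^i,\ H_\rho^i$ are triangulated using the exact peptide-plane distances of assumption 3. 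In each case one checks (i) that two of the chosen predecessors lie in $E_0$ with $v_i$ (condition $S2.1$), (ii) that the triple is a clique in $E_0$ (condition $S2.2$), and (iii) that its three vertices are non-collinear, yielding the strict triangle inequalities.

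The main obstacle is not any single deep step but the systematic bookkeeping: one must confirm, for every atom type, at the residue boundaries, and for the special proline/glycine labelling of $H_\varrho$ and $H_\rho$, that the three chosen predecessors genuinely precede $v_i$ in the \textit{hc} order and that each required distance is exact rather than an interval. The crux is thus establishing membership in $E_0$ for every edge of each triangulating triple and for the two edges linking the triple to $v_i$; this is where assumptions 2 and 3 must be applied with care, since a single misclassified interval edge (arising from assumptions 4, 5, or 7) would violate $S2.1$ or $S2.2$. The strict-inequality requirements, by contrast, reduce routinely to geometric non-degeneracy of the atoms involved.
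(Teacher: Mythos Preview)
Your proposal is correct and follows essentially the same approach as the paper: both proofs verify $S1$ and $S2$ of \Cref{sec:2:def:iddgp} by exhibiting, for each vertex in the periodic \textit{hc} order, an explicit triple of adjacent predecessors forming an $E_0$-clique (the paper collects these in a table), with exactness supplied by assumptions~2 and~3. The only notable difference is that where you appeal to ``geometric non-degeneracy'' for the strict triangle inequalities, the paper makes this concrete via the law of cosines and the hypothesis that bond angles lie strictly in $(0,\pi)$, which gives $d_{j_1,j_3}^2 = (d_{j_1,j_2}+d_{j_2,j_3})^2 - 2d_{j_1,j_2}d_{j_2,j_3}(1+\cos\theta_{j_3,j_2,j_1}) < (d_{j_1,j_2}+d_{j_2,j_3})^2$; you may wish to incorporate this explicit step rather than rely on a physical non-collinearity claim.
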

	
	\begin{proof}
		To prove this proposition, we need to verify the following three conditions: (I) the subgraph induced by $\{v_1,\ v_2,\ v_3\}$ is a clique in $E_0$, and the associated weights satisfy the strict triangle inequalities; (II) for each vertex $v_j$, there exists a subset $U_{v_j}^0 := \{v_{j_3},\ v_{j_2},\ v_{j_1}\} \subset U_{v_j}$ such that $\big\{\{v_j, v_{j_2}\},\ \{v_j, v_{j_1}\}\big\} \subset E_0$; and (III) the strict triangle inequality holds for the edge weights among the vertices of $U_{v_j}^0$.
		
		\autoref{sec:4:tab:1} presents a possible clique for each vertex in the \textit{hc} order, ensuring that each vertex has at least three vertices adjacent predecessors.
		
		\begin{table}[!htp]
			\def\arraystretch{1.3}
			\centering
			\caption{\textit{i}DDGP cliques $\{v_{j_3}, v_{j_2}, v_{j_1}, v_j\}$, with $3 \leq j \leq |V|$, constructed for all atoms of each residue $i$ in the \textit{hc} order, for $i = 1,\ 2,\ \dots,\ {N_{\text{aa}}}$.}
			\label{sec:4:tab:1}
			\begin{tabular}{c|c|c|c|c}
				&
				{$v_{j_3}$} & 
				{$v_{j_2}$} & 
				{$v_{j_1}$} & 
				{$v_j$}\\
				\thickhline
				\multirow{5}{*}{\rotatebox{90}{Residue $1$}} & {$-$} & {$H^3$} & {$H^2$} & {$H_N^1$}\\
				\cline{2-5}
				& {$H_3$} & {$H_2$} & {$H_N^1$} & {$N^1$}\\
				\cline{2-5}
				& {$H_2$} & {$H_N^1$} & {$N^1$} & {$C_\alpha^1$}\\
				\cline{2-5}
				& {$H_N^1$} & {$N^1$} & {$C_\alpha^1$} & {$H_\alpha^1$}\\
				\cline{2-5}
				& {$N^1$} & {$C_\alpha^1$} & {$H_\alpha^1$} & {$C^1$}\\
				\thickhline
				\multirow{5}{*}{\rotatebox{90}{Residue $i$}} & {$H_\rho^{i-1}$} & {$C_\alpha^{i-1}$} & {$C^{i-1}$} & {$H_\varrho^i$}\\
				\cline{2-5}
				& {$C_\alpha^{i-1}$} & {$C^{i-1}$} & {$H_\varrho^i$} & {$C_\alpha^i$}\\
				\cline{2-5}
				& {$C_\alpha^{i-1}$} & {$C^{i-1}$} & {$C_\alpha^{i}$} & {$N^i$}\\
				\cline{2-5}
				& {$H_\varrho^i$} & {$N^i$} & {$C_\alpha^i$} & {$H_\rho^i$}\\
				\cline{2-5}
				& {$N^{i}$} & {$C_\alpha^i$} & {$H_\rho^i$} & {$C^i$}\\
				\thickhline
			\end{tabular}
		\end{table}	
		
		Analyzing \autoref{sec:4:tab:1}, we conclude that the subgraph induced by $\{v_1,\ v_2,\ v_3\}$ is a clique in $E_0$, as shown in its first line. It is also shown that $v_{j_3},\ v_{j_2},\ v_{j_1} \in U_{v_j}$, with $d_{j,j_2}$ and $d_{j,j_1}$ being exact distances, thereby verifying condition (II). To prove conditions (I) and (III), we must verify the strict triangle inequality for each clique presented in the table, which can be readily done by applying the law of cosines \cite{2005_strang_laaia} to the corresponding edge weights. Since the proof is analogous in all cases, we detail it only for the clique corresponding to the $6^{\text{th}}$ line. 
		
		Considering the relative positions of the atoms along the protein chain (as illustrated in \autoref{sec:4:fig:DDGP_hcOrder}), we observe that the atom represented by $v_{j_2}$ is connected by bond lengths to the atoms represented by $v_{j_1}$ and $v_{j_3}$. Therefore, the angle $\theta_{j_3,j_2,j_1}$ encodes distance information associated with bond angles, and the corresponding distance is $d_{j_1,j_3}$. However, since bond angles cannot be flat, as stated in Assumption 3 of the previous subsection, it follows that $\theta_{j_3,j_2,j_1} \in (0,\ \pi)$. Thus, we can write
		\begin{align*}
			d_{j_1,j_3}^2 & = d_{j_1,j_2}^2 + d_{j_2,j_3}^2 - 2d_{j_2,j_3}d_{j_1,j_2}\cos\theta_{j_3,j_2,j_1} = (d_{j_1,j_2} + d_{j_2,j_3})^2 - 2d_{j_2,j_3}d_{j_1,j_2}(1 + \cos\theta_{j_3,j_2,j_1})\\[0.1cm]
			& < (d_{j_1,j_2} + d_{j_2,j_3})^2 \quad\Leftrightarrow \quad d_{j_1,j_3} < d_{j_1,j_2} + d_{j_2,j_3}.
			\label{sec:4:eq:sti}
		\end{align*}
		
		\noindent Since this result proves that the distances $d_{j_1,j_2},\ d_{j_2,j_3}$, and $d_{j_1,j_3}$ form a non-degenerate triangle, the strict triangle inequalities hold for all permutations of the three distances. Consequently, by applying the same reasoning to the remaining cliques in \autoref{sec:4:tab:1}, we conclude that conditions (I) and (III) are fulfilled.
	\end{proof}
		
	Based on the result described in the previous proposition and using all distance constraint information provided by our set of assumptions, \autoref{sec:4:fig:full_tree} illustrates half of the solution space search tree corresponding to the $i$-th amino acid residue in the \textit{hc} order.
	\begin{figure}[!htp]
		\centering
		\includegraphics[width=0.9\textwidth]{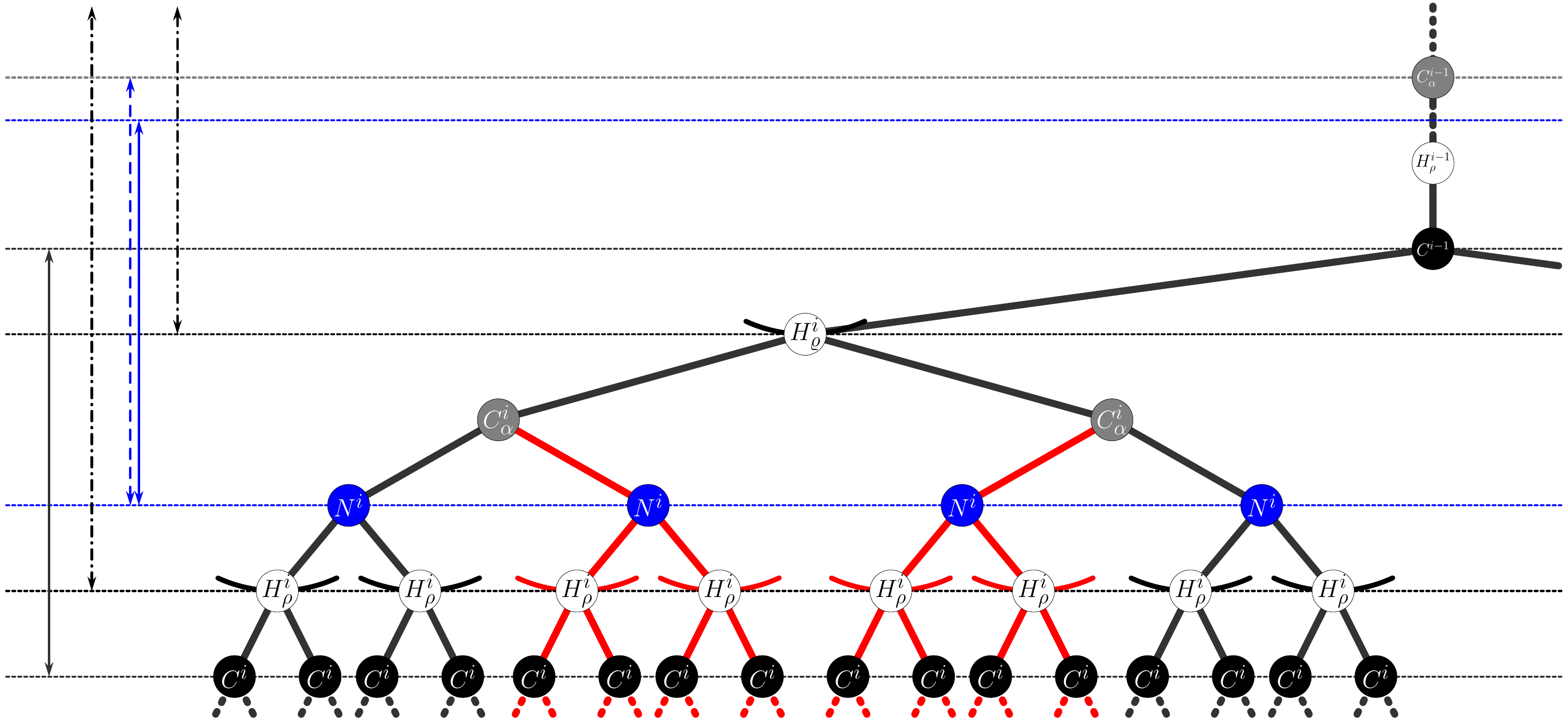}
		\caption{Half of the solution space search tree corresponding to the $i$-th amino acid residue in the \textit{hc} order. The circular arcs represent infinitely many possible placements for atoms located along them. Arrows indicate the presence of distance constraints not used during the discretization process of the \textit{i}DDGP order. Dashed lines correspond to exact distances; solid lines indicate interval distances; and dash-dotted lines represent possible interval distances inferred from NMR experiments. Branches marked in red correspond to portions of the tree where a pruning operation will necessarily occur, due to the existence of an exact distance constraint involving the vertex representing atom $N^i$ (the chosen prune direction is illustrative).}
		\label{sec:4:fig:full_tree}
	\end{figure}
	
	As described in \cite{2017_goncalvez_raotidgp, 2019_lavor_mndifropg}, information about local chirality and the planarity condition imposed by the peptide bond, corresponding to Assumptions \ref{sec:4:chirality} and \ref{sec:4:pgdistances} of the previous subsection, can be used to reduce the search space. However, this information is not available as distance constraints but rather as torsion angles. Since the \textit{i}TBP algorithm is able to integrate torsion angle information, its search space is equivalent to that presented in \autoref{sec:4:fig:tree_LCI}. Thus, for each amino acid residue, it exhibits one-fourth of the number of possible paths in the tree compared to the \textit{i}BP and \textit{i}ABP algorithms, whose search space is characterized in \autoref{sec:4:fig:full_tree}. As is intuitively expected, this reduction in the search space should impact the algorithm's performance in terms of both computational time and reconstructed structure quality. These aspects are discussed in \Cref{sec:5}.
	\begin{figure}[!htp]
		\centering
		\includegraphics[width=0.55\textwidth]{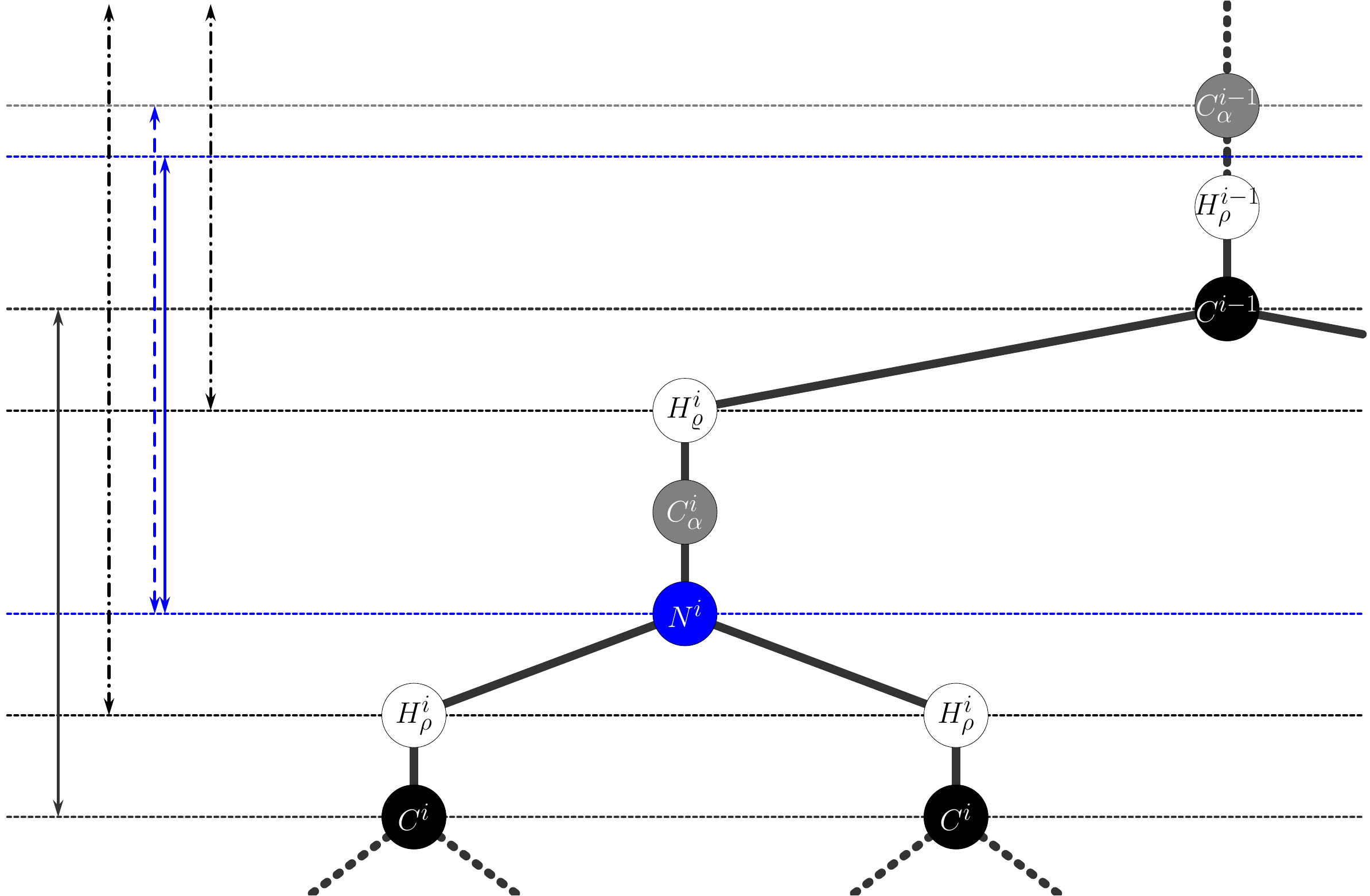}
		\caption{Search tree from \autoref{sec:4:fig:full_tree} incorporating local chirality and peptide bond planarity information.}
		\label{sec:4:fig:tree_LCI}
	\end{figure}
	
	\subsection{Instance Generation}
	
	The problem instances was derived from protein structures available in the RCSB Protein Data Bank (PDB) \cite{TPDB_2000}. For structures containing multiple models, only the first model was considered; for proteins composed of multiple chains, only chain A was used. Let $G = (V, E, \mathbf{d})$ be the protein graph associated with the considered protein atoms. For each PDB structure, we assume that all distance constraints described in \Cref{sec:4:1} are known. Interval distances were defined as $\mathbf{d}_{i,j} = \big[\underline{d}_{i,j}, \overline{d}_{i,j}\big]$, where
	\begin{equation*}
		\underline{d}_{i,j} = \max\big\{d_{i,j}^* - \mathcal{E}_{d_{i,j}}/2,\ \text{vdWt}\big\}, \qquad \overline{d}_{i,j} = \min\big\{d_{i,j}^* + \mathcal{E}_{d_{i,j}}/2,\ 5 \ \text{\AA}\big\},
	\end{equation*}
	$d_{i,j}^* \sim \mathcal{N}\big(d_{i,j}, (\mathcal{E}_{d_{i,j}}/8)^2\big)$ is sampled around the true distance $d_{i,j}$ computed from the PDB structure, vdWt is a ``van der Waals threshold'' depending on the atoms involved and $\mathcal{E}_{d_{i,j}} > 0$.
	Here, $\mathcal N(\mu,\sigma^2)$ stands for the normal distribution with mean $\mu$ and variance $\sigma^2$.
	The values of $\mathcal{E}_{d_{i,j}}$ were set based on \cite{2017_goncalvez_raotidgp, 2017_PSbN} as
	\begin{equation*}
		\mathcal{E}_{d_{i,j}} = \left\{\begin{array}{ll}
			1.0\ \text{\AA} & \text{if atoms } i \text{ and } j \text{ belong to the same or adjacent residues},\\[0.2cm]
			2.0\ \text{\AA} & \text{otherwise}.
		\end{array}\right.
		\label{sec:4:eq:Epsd}
	\end{equation*}
	
	\noindent Therefore, unless from the bounds limited by the van der Walls threshold and the maximum allowed distance, $\mathbf{d}_{i,j}$ is an interval centred at $d_{i,j}^*$ with width $\mathcal E_{d_{i,j}}$. This strategy was adapted from the method proposed in \cite{ADSAfLSNAFGRwAtMC_2008}.
	
	We adopted the following van der Waals radii (vdWr): $N_{\text{vdWr}} = 1.55\ \text{\AA}$, $C_{\text{vdWr}} = 1.70\ \text{\AA}$, and $H_{\text{vdWr}} = 1.20\ \text{\AA}$ \cite{1964_bondi_vdwvar}. The natural choice for vdWt, i.e., the minimum allowed distance between two atoms, is the sum of their respective van der Waals radii \cite{1964_bondi_vdwvar}. However, for certain atom pairs in the PDB structures, we observed that the actual distance $d$ between the atoms is smaller than the natural choice. To prevent these configurations from being erroneously excluded, we adopted an adaptive rule to define a relaxed lower bound. Specifically, for two hydrogen atoms (and analogously for other pairs), we set:
	\begin{equation*}
		\text{vdWt} = \alpha \cdot 2H_\text{vdWr}, \quad \text{where } \alpha = \max\big\{k \in \{1.0,\ 0.9,\ \dots,\ 0.1\} \ | \ d > k \cdot 2H_\text{vdWr} \big\}.
		\label{sec:4:eq:vdWT}
	\end{equation*}
	
	Following Assumption \ref{sec:4:torsion} in \Cref{sec:4:1}, torsion angle intervals $\mathcal{T}^i = \big[\underline{\tau}_i,\overline{\tau}_i\big]$ for $\phi_i$ and $\psi_i$ were generated similarly. Based on \cite{SEoPCSUaDGA_2019}, we take $\mathcal{E}_\tau = 50^\circ$. The bounds were then defined as $\underline{\tau}_i = \tau_i^* - \mathcal{E}_{\tau_i}/2$ and $\overline{\tau}_i = \tau_i^* + \mathcal{E}_{\tau_i}/2$ where $\tau_i^* \sim \mathcal{N}\big(\tau_i, (\mathcal{E}_{\tau_i}/8)^2\big)$ and $\tau_i$ is the torsion angle measured from the atomic coordinates of the PDB structure. Since the terms in equation \eqref{sec:2:eq:costau} can be rearranged to express a distance as a function of the torsion angle, each angular interval $\mathcal{T}^i$ also induces a corresponding interval distance between the atoms that define the torsion. This approach was adopted to convert $\mathcal{T}^i$ into an interval distance constraint.
	
	Whenever chirality was known for certain atom quadruples, the corresponding torsion angle was assumed to be known for the associated clique. The torsion angles defined within extended peptide groups were also treated as known.
	
	\section{Computational Experiments and Analysis}\label{sec:5}
	
	Let $G = (V, E, \mathbf{d})$ be the graph associated with the \textit{i}DDGP instance, with $n = |V|$. $X$ denotes a computed solution and $X^*$ a reference structure. Both $X$ and $X^*$ are assumed to be centered at the same center of mass. To assess the quality of the obtained solutions, we used three standard metrics: the Mean Distance Error (MDE), the Largest Distance Error (LDE) \cite{MFSoDGC_2021}, and the Root Mean Square Deviation (RMSD) \cite{2014_liberti_edgaa,SoRMSDiCTDSoGP_1994}, expressed as:
	\begin{align*}
		\text{MDE}(G,X) & = \displaystyle \dfrac{1}{|E|} \sum_{\{v_i,v_j\} \in E} \max\big\{0,\ \underline{d}_{i,j} - \|x_i-x_j\|,\ \|x_i-x_j\| - \overline{d}_{i,j}\big\},\\[0.2cm]
		\text{LDE}(G,X) & = \displaystyle \max_{\{v_i,v_j\} \in E} \Big\{\max\big\{0,\ \underline{d}_{i,j} - \|x_i-x_j\|,\ \|x_i-x_j\| - \overline{d}_{i,j}\big\}\Big\},\\[0.2cm]
		\text{RMSD}(X,X^*) & = \displaystyle \dfrac{1}{\sqrt{n}} \min_{Q \in O(3)} \|X^*-XQ\|_F, \text{ where } \|\cdot\|_F \text{ denotes the Frobenius norm and } O(3) \text{ is the}\\
		& \quad \ \text{group of } 3 \times 3 \text{ orthogonal matrices}.
		\label{sec:5:eq:metrics}
	\end{align*}
	
	The MDE and LDE metrics evaluate how well the solution satisfies the input instance constraints. On the other hand, RMSD measures the structural similarity between the computed conformation and the reference structure $X^*$. According to \cite{WitPoaCPoaPSwaRMSDof6A_1998}, two protein structures are considered similar if their RMSD is less than $3\ \text{\AA}$.
	
	Algorithms~\ref{sec:2:pc:iBP} and~\ref{sec:3:pc:iTBP} were implemented in C and compiled using GCC version 13.3.0 with the \texttt{-O3} optimization flag. Experiments were conducted on a computer equipped with an Intel Xeon Silver 4114 CPU (10 cores, 20 threads, max frequency $3.0\ GHz$) and $156\ GB$ of RAM, running Ubuntu~24.04.3 LTS. Each algorithm instance was executed using a single thread, with up to seven instances running in parallel. The CPU time limit for each run was set to twelve hours. The source code, its documentation, and the dataset of instances are available at \hyperref{https://github.com/wdarocha/benchmarks/tree/main/ibp-iabp-itbp_2025}{}{}{\texttt{https://github.com/wdarocha/benchmarks/tree/main/ibp-iabp-itbp\_2025}}.
	
	The PDB structures were selected based on the number of amino acid residues (${N_{\text{aa}}}$) they contain. We also define $E_H \subset E_I \subset E$ as the subset of edges in $G$ whose weights correspond to interval distances between hydrogen atoms, with both bounds known, unlike van der Waals distances, which have only a lower bound. This set serves as an indicator of how spatially folded the protein structure is. \autoref{sec:5:tab:instances} presents the dataset used.
	
	\begin{table}[!htp]
		\centering
		\caption{Summary of the protein dataset used in the experiments.}
		\label{sec:5:tab:instances}
		\begin{tabular}{@{}lccccc||lccccc@{}}
			\thickhline
			{PDB id} & ${N_{\text{aa}}}$ & {$|V|$} & {$|E_0|$} & {$|E_I|$} & {$|E_H|$} & {PDB id} & ${N_{\text{aa}}}$ & {$|V|$} & {$|E_0|$} & {$|E_I|$} & {$|E_H|$}\\
			\thickhline
			1tos & & & & & $56$ & 1ho0 & & & & & $167$\\ 
			1uao & $10$ & $52$ & $141$ & $1{,}185$ & $70$ & 1mmc & $30$ & $152$ & $421$ & $11{,}055$ & $223$ \\ 
			1kuw & & & & & $76$ & 1d0r & & & & & $229$\\ 
			\hline
			1id6 & & & & & $106$ & 1zwd & & & & & $240$\\ 
			1dng & $15$ & $77$ & $211$ & $2{,}715$ & $114$ & 1d1h & $35$ & $177$ & $491$ & $15{,}085$ & $245$ \\ 
			1o53 & & & & & $116$ & 1spf & & & & & $277$\\ 
			\hline
			1du1 & & & & & $128$ & 1aml & & & & & $267$\\ 
			1dpk & $20$ & $102$ & $281$ & $4{,}870$ & $138$ & 1ba4 & $40$ & $202$ & $561$ & $19{,}740$ & $301$\\ 
			1ho7 & & & & & $166$ & 1c56 & & & & & $303$\\ 
			\hline
			1ckz & & & & & $148$ & & & & & & \\ 
			1lfc & $25$ & $127$ & $351$ & $7{,}650$ & $163$ & & & & & & \\ 
			1a11 & & & & & $207$ & & & & & & \\
			\thickhline
		\end{tabular}
	\end{table}
	
	For the numerical experiments, we uniformly sampled 10 equally spaced points within each angular interval for the \textit{i}ABP and \textit{i}TBP algorithms, allocating 5 points to each subinterval (i.e., $|T_i^\pm| = N_i = 5$). This value corresponds to the minimum number of samples for which these algorithms successfully solved a reasonable fraction of the proposed test instances. In the case of the \textit{i}BP algorithm, 26 sampling points were used, with 13 points selected on each side of the interval. Similarly, this was the minimum number of samples that yielded satisfactory success rates across the tested cases. For all three algorithms, a minimum spacing of $5^\circ$ between consecutive samples was enforced. Whenever this spacing constraint could not be satisfied due to the width of the angular interval, the number of sampling points was reduced accordingly. In all cases, the sample points were selected in increasing order from the lower to the upper bound of the interval, starting with the negative angular subinterval ($\mathcal{T}^-$ in \Cref{sec:2:proposition:1}). This convention is motivated by the observation that negative values of torsion angles are more frequently populated in protein structures~\cite{2023_PDAiPPaNDotRM}.
	
	The results in \autoref{sec:5:tab:run_output} report only the instances that could not be solved within the CPU time limit by at least one of the three methods. All remaining instances were successfully solved within the prescribed time bound by all methods. Instances not listed in the table also ran for twelve hours but yielded at least one valid solution before termination. These results show that, among the instances unsolved by both \textit{i}BP and \textit{i}ABP, the \textit{i}ABP algorithm generally reached deeper levels in the tree that encodes the search space. Furthermore, when considering \textit{i}TBP, it was able to descend even further in the two instances that neither of the other methods could solve.
	
	\begin{table}[!htp]
		\centering
		\caption{Numerical results for each run, reporting the fraction of the last embedded vertex over the total number of vertices. A ``$\checkmark$'' indicates that the instance was completely solved by the corresponding method.}
		\label{sec:5:tab:run_output}
		\begin{tabular}{@{\extracolsep\fill}lcccccc}
			\toprule
			{PDB id} & & \multicolumn{1}{c}{\textit{i}BP ($N_i = 13$)}
			& & \multicolumn{1}{c}{\textit{i}ABP ($N_i = 5$)}
			& & \multicolumn{1}{c}{\textit{i}TBP ($N_i = 5$)}
			\\
			\thickhline
			1lfc & & $102/127$ & & $115/127$ & & \checkmark \\
			\hline
			1mmc & & $\phantom{0}95/152$ & & $\phantom{0}96/152$ & & $100/152$ \\
			\hline
			1zwd & & \checkmark & & \checkmark & & $169/177$ \\
			1d1h & & $\phantom{0}72/177$ & & $\phantom{0}77/177$ & & \checkmark \\
			\hline
			1ba4 & & $100/202$ & & $100/202$ & & \checkmark \\
			1c56 & & $155/202$ & & $170/202$ & & $177/202$ \\
			\thickhline
		\end{tabular}
	\end{table}
	
	The results reported in \autoref{sec:5:tab:run_output_2} correspond to instances for which at least one valid solution was found. The acronyms n.o.e.v., n.o.s.f., and n.o.c.s. respectively denote the number of embedded vertices in $\mathbb{R}^3$, the number of feasible solutions found, and the number of considered solutions. Whereas n.o.s.f. enumerates all feasible conformations, n.o.c.s. enumerates only those that are pairwise distinct under a RMSD of at least $3\ \text{\AA}$. This comparison highlights that, whenever all three algorithms succeed in producing solutions, they typically return a large number of them. However, the vast majority of these conformations differ very little in terms of RMSD and are therefore essentially equivalent from a molecular biology perspective.
	
	\begin{table}[!htp]
		\centering
		\footnotesize
		\caption{Numerical results for the number of embedded vertices (n.o.e.v.), feasible solutions found (n.o.s.f.), and considered solutions (n.o.c.s.) obtained by \textit{i}BP, \textit{i}ABP, and \textit{i}TBP across all instances. Instances marked with * indicate that the entire search space was explored by the \textit{i}TBP algorithm before reaching the time limit.}
		\label{sec:5:tab:run_output_2}
		\begin{tabular}{@{\extracolsep\fill}lcc|c|ccc|c|ccc|c|c}
			\toprule
			\multicolumn{1}{c}{} & 
			\multicolumn{1}{c}{} & \multicolumn{3}{c}{\textit{i}BP ($N_i = 13$)} & 
			\multicolumn{1}{c}{} & \multicolumn{3}{c}{\textit{i}ABP ($N_i = 5$)} & 
			\multicolumn{1}{c}{} & \multicolumn{3}{c}{\textit{i}TBP ($N_i = 5$)}
			\\
			\cmidrule{3-5} \cmidrule{7-9} \cmidrule{11-13}
			PDB id 
			& & n.o.e.v. & n.o.s.f. & n.o.c.s.
			& & n.o.e.v. & n.o.s.f. & n.o.c.s.
			& & n.o.e.v. & n.o.s.f. & n.o.c.s.\\
			\thickhline
			1tos* 
			& & $1.1 \times 10^{10}$ & $6.1 \times 10^{08}$ & $3$ 
			& & $3.3 \times 10^{09}$ & $6.1 \times 10^{08}$ & $\mathbf{4}$ 
			& & $7.3 \times 10^{08}$ & $1.3 \times 10^{08}$ & $\mathbf{4}$ \\
			\hline
			1uao* 
			& & $3.7 \times 10^{10}$ & $6.1 \times 10^{08}$ & $1$ 
			& & $6.4 \times 10^{09}$ & $5.8 \times 10^{08}$ & $1$ 
			& & $3.1 \times 10^{06}$ & $4.5 \times 10^{04}$ & $1$ \\
			\hline
			1kuw* 
			& & $5.4 \times 10^{09}$ & $6.6 \times 10^{08}$ & $1$ 
			& & $1.7 \times 10^{09}$ & $6.4 \times 10^{08}$ & $1$ 
			& & $7.2 \times 10^{07}$ & $1.3 \times 10^{07}$ & $1$ \\
			\thickhline
			1id6* 
			& & $4.5 \times 10^{09}$ & $4.2 \times 10^{08}$ & $1$ 
			& & $1.7 \times 10^{09}$ & $4.1 \times 10^{08}$ & $1$ 
			& & $1.0 \times 10^{09}$ & $8.5 \times 10^{07}$ & $1$ \\
			\hline
			1dng* 
			& & $4.3 \times 10^{09}$ & $4.2 \times 10^{08}$ & $1$ 
			& & $1.8 \times 10^{09}$ & $4.1 \times 10^{08}$ & $1$ 
			& & $2.3 \times 10^{09}$ & $2.1 \times 10^{08}$ & $1$ \\
			\hline
			1o53 
			& & $1.9 \times 10^{09}$ & $4.2 \times 10^{08}$ & $2$ 
			& & $1.1 \times 10^{09}$ & $4.1 \times 10^{08}$ & $\mathbf{3}$ 
			& & $2.0 \times 10^{09}$ & $4.1 \times 10^{08}$ & $\mathbf{3}$ \\
			\thickhline
			1du1 
			& & $1.6 \times 10^{09}$ & $2.6 \times 10^{08}$ & $2$ 
			& & $7.7 \times 10^{08}$ & $2.4 \times 10^{08}$ & $3$ 
			& & $1.6 \times 10^{09}$ & $2.5 \times 10^{08}$ & $\mathbf{6}$ \\
			\hline
			1dpk 
			& & $1.4 \times 10^{09}$ & $2.6 \times 10^{08}$ & $2$ 
			& & $6.2 \times 10^{08}$ & $2.7 \times 10^{08}$ & $\mathbf{3}$ 
			& & $1.1 \times 10^{09}$ & $2.6 \times 10^{08}$ & $2$ \\
			\hline
			1ho7 
			& & $1.4 \times 10^{09}$ & $2.9 \times 10^{08}$ & $1$ 
			& & $9.9 \times 10^{08}$ & $2.8 \times 10^{08}$ & $1$ 
			& & $2.0 \times 10^{09}$ & $2.8 \times 10^{08}$ & $1$ \\
			\thickhline
			1ckz 
			& & $2.8 \times 10^{09}$ & $2.1 \times 10^{08}$ & $1$ 
			& & $9.1 \times 10^{08}$ & $2.0 \times 10^{08}$ & $2$ 
			& & $9.6 \times 10^{08}$ & $1.7 \times 10^{08}$ & $\mathbf{6}$ \\
			\hline
			1lfc 
			& & $2.8 \times 10^{11}$ & $0$ & $0$ 
			& & $3.4 \times 10^{10}$ & $0$ & $0$ 
			& & $2.9 \times 10^{10}$ & $1.4 \times 10^{07}$ & $\mathbf{2}$ \\
			\hline
			1a11 
			& & $1.1 \times 10^{09}$ & $2.1 \times 10^{08}$ & $1$ 
			& & $6.0 \times 10^{08}$ & $2.1 \times 10^{08}$ & $1$ 
			& & $8.4 \times 10^{08}$ & $2.1 \times 10^{08}$ & $1$ \\
			\thickhline
			1ho0 
			& & $1.1 \times 10^{09}$ & $1.5 \times 10^{08}$ & $2$ 
			& & $4.5 \times 10^{08}$ & $1.3 \times 10^{08}$ & $4$ 
			& & $1.1 \times 10^{09}$ & $1.3 \times 10^{08}$ & $\mathbf{6}$ \\
			\hline
			1mmc 
			& & $3.3 \times 10^{11}$ & $0$ & $0$ 
			& & $2.7 \times 10^{10}$ & $0$ & $0$ 
			& & $3.3 \times 10^{10}$ & $0$ & $0$ \\
			\hline
			1d0r 
			& & $2.6 \times 10^{09}$ & $1.5 \times 10^{08}$ & $1$ 
			& & $9.3 \times 10^{08}$ & $1.5 \times 10^{08}$ & $1$ 
			& & $1.1 \times 10^{09}$ & $1.5 \times 10^{08}$ & $\mathbf{2}$ \\
			\thickhline
			1zwd 
			& & $9.7 \times 10^{08}$ & $1.2 \times 10^{08}$ & $1$ 
			& & $3.7 \times 10^{08}$ & $1.2 \times 10^{08}$ & $\mathbf{3}$ 
			& & $2.0 \times 10^{10}$ & $0$ & $0$ \\
			\hline
			1d1h 
			& & $4.5 \times 10^{11}$ & $0$ & $0$ 
			& & $4.1 \times 10^{10}$ & $0$ & $0$ 
			& & $1.4 \times 10^{10}$ & $5.2 \times 10^{07}$ & $\mathbf{2}$ \\
			\hline
			1spf 
			& & $4.6 \times 10^{09}$ & $1.2 \times 10^{08}$ & $1$ 
			& & $4.6 \times 10^{08}$ & $1.2 \times 10^{08}$ & $1$ 
			& & $2.1 \times 10^{09}$ & $9.6 \times 10^{07}$ & $\mathbf{4}$ \\
			\thickhline
			1aml 
			& & $2.8 \times 10^{09}$ & $1.0 \times 10^{08}$ & $1$ 
			& & $3.5 \times 10^{08}$ & $9.6 \times 10^{07}$ & $1$ 
			& & $5.5 \times 10^{08}$ & $9.6 \times 10^{07}$ & $\mathbf{2}$ \\
			\hline
			1ba4 
			& & $4.3 \times 10^{11}$ & $0$ & $0$ 
			& & $2.8 \times 10^{10}$ & $0$ & $0$ 
			& & $5.3 \times 10^{08}$ & $9.6 \times 10^{07}$ & $\mathbf{1}$ \\
			\hline
			1c56 
			& & $3.9 \times 10^{11}$ & $0$ & $0$ 
			& & $1.8 \times 10^{10}$ & $0$ & $0$ 
			& & $1.8 \times 10^{10}$ & $0$ & $0$ \\
			\thickhline
		\end{tabular}
	\end{table}
	
	The n.o.e.v. values reinforce the expected theoretical behavior: the \textit{i}BP algorithm incurs a significantly lower computational cost per vertex embedding compared to \textit{i}ABP and \textit{i}TBP. In most instances, \textit{i}BP embeds over ten times more vertices than the other two methods. This advantage stems both from the lower per-vertex embedding cost and from the higher number of angular samples per interval in \textit{i}BP, more than twice that of \textit{i}ABP and \textit{i}TBP. Regarding n.o.f.s., \textit{i}BP again produces more feasible solutions, which is intuitive given that the angular samples are more densely distributed, leading to more configurations that satisfy the problem constraints, including slight structural variations. This phenomenon is also observed, though to a lesser extent, with the other two algorithms.
	
	However, the metric of greater practical relevance is n.o.c.s., which reflects truly distinct solutions (in \autoref{sec:5:tab:run_output_2} the best values are highlighted in bold). According to this metric, both \textit{i}ABP and \textit{i}TBP produce a larger number of structurally diverse solutions than \textit{i}BP. Among the three algorithms, the most meaningful comparison is between \textit{i}BP and \textit{i}ABP, as they rely on the same amount of input information. Both solved the same number of instances, $16$ out of $21$. Among these, \textit{i}ABP found more distinct solutions in half of the cases, fewer in one, and the same amount in seven instances. These results indicate that, for the same execution time, \textit{i}ABP is more robust than \textit{i}BP in terms of solution diversity and overall effectiveness. Across the 16 commonly solved instances, \textit{i}ABP found a total of $31$ distinct solutions versus $22$ by \textit{i}BP, representing an increase of $40.9\%$.
	
	The \textit{i}TBP algorithm, in contrast, leverages additional input information, the signs of certain torsion angles, which leads to a more constrained search tree. This enhanced pruning allowed \textit{i}TBP to solve more instances ($18$ out of $21$) than the other two methods. Nonetheless, in one instance where both \textit{i}BP and \textit{i}ABP found solutions, \textit{i}TBP did not, indicating that a reduced search space is not, by itself, a guarantee of successful solution finding within limited time. This observation underscores the exponential nature of the \textit{i}DDGP search space. Still, such pruning strategies are, in most cases, advantageous: \textit{i}TBP yielded $46$ distinct solutions across the $18$ solved instances, representing an average improvement of $85.9\%$ over $i$BP and $31.9\%$ over $i$ABP. In addition, when the goal is to obtain biologically meaningful structures, post-processing is required for the solutions produced by \textit{i}BP and \textit{i}ABP to ensure consistency with known local atom chirality; something that is inherently enforced by \textit{i}TBP during the solution process.
	
	To conduct a more detailed assessment of the individual solutions, we analyzed the metrics MDE, LDE, and RMSD. In this context, the RMSD was computed with respect to the reference structure from the original PDB file used to generate each instance. The corresponding results are presented in \autoref{sec:5:tab:solution_metrics}. In this table, the reported metric values are computed across all feasible solutions found: $\overline{\text{MDE}}$ denotes the maximum MDE, $\overline{\text{LDE}}$ the maximum LDE, and $\underline{\text{RMSD}}$ the minimum RMSD.
	
	\begin{table}[!htp]
		\centering
		\footnotesize
		\caption{Numerical results for the solution quality metrics: maximum MDE ($\overline{\text{MDE}}$), maximum LDE ($\overline{\text{LDE}}$), and minimum RMSD ($\underline{\text{RMSD}}$) across the ensemble of all solutions obtained by the three methods. All values are reported in $\text{\AA}$.}
		\label{sec:5:tab:solution_metrics}
		\begin{tabular}{@{\extracolsep\fill}lcc|c|ccc|c|ccc|c|c}
			\toprule
			\multicolumn{1}{c}{} & 
			\multicolumn{1}{c}{} & \multicolumn{3}{c}{\textit{i}BP ($N_i = 13$)} & 
			\multicolumn{1}{c}{} & \multicolumn{3}{c}{\textit{i}ABP ($N_i = 5$)} & 
			\multicolumn{1}{c}{} & \multicolumn{3}{c}{\textit{i}TBP ($N_i = 5$)}
			\\
			\cmidrule{3-5} \cmidrule{7-9} \cmidrule{11-13}
			PDB id  
			& & $\overline{\text{MDE}}$ & $\overline{\text{LDE}}$ & $\underline{\text{RMSD}}$
			& & $\overline{\text{MDE}}$ & $\overline{\text{LDE}}$ & $\underline{\text{RMSD}}$
			& & $\overline{\text{MDE}}$ & $\overline{\text{LDE}}$ & $\underline{\text{RMSD}}$
			\\
			\thickhline
			1tos
			& & $4.1 \times 10^{-6}$ & $2.1 \times 10^{-3}$ & $1.40$ 
			& & $4.6 \times 10^{-5}$ & $4.4 \times 10^{-2}$ & $1.26$ 
			& & $5.1 \times 10^{-5}$ & $4.2 \times 10^{-2}$ & $0.24$ \\ 
			\hline
			1uao
			& & $4.6 \times 10^{-6}$ & $2.8 \times 10^{-3}$ & $0.32$ 
			& & $5.7 \times 10^{-5}$ & $6.7 \times 10^{-2}$ & $0.37$ 
			& & $2.6 \times 10^{-5}$ & $2.5 \times 10^{-2}$ & $0.28$ \\ 
			\hline
			1kuw
			& & $2.6 \times 10^{-6}$ & $1.6 \times 10^{-3}$ & $0.42$ 
			& & $5.1 \times 10^{-5}$ & $3.2 \times 10^{-2}$ & $0.21$ 
			& & $3.9 \times 10^{-5}$ & $3.0 \times 10^{-2}$ & $0.12$ \\ 
			\thickhline
			1id6
			& & $1.8 \times 10^{-6}$ & $3.2 \times 10^{-3}$ & $0.43$ 
			& & $2.9 \times 10^{-5}$ & $3.7 \times 10^{-2}$ & $0.21$ 
			& & $2.6 \times 10^{-5}$ & $2.9 \times 10^{-2}$ & $0.24$ \\ 
			\hline
			1dng
			& & $1.4 \times 10^{-6}$ & $1.7 \times 10^{-3}$ & $0.52$ 
			& & $2.3 \times 10^{-5}$ & $2.9 \times 10^{-2}$ & $0.57$ 
			& & $2.2 \times 10^{-5}$ & $3.1 \times 10^{-2}$ & $0.22$ \\ 
			\hline
			1o53
			& & $1.6 \times 10^{-6}$ & $1.9 \times 10^{-3}$ & $1.21$ 
			& & $1.8 \times 10^{-5}$ & $1.3 \times 10^{-2}$ & $0.85$ 
			& & $2.0 \times 10^{-5}$ & $1.5 \times 10^{-2}$ & $0.50$ \\ 
			\thickhline
			1du1
			& & $4.9 \times 10^{-7}$ & $2.1 \times 10^{-3}$ & $2.41$ 
			& & $1.1 \times 10^{-5}$ & $3.1 \times 10^{-2}$ & $2.43$ 
			& & $1.0 \times 10^{-5}$ & $1.2 \times 10^{-2}$ & $1.45$ \\ 
			\hline
			1dpk
			& & $3.8 \times 10^{-7}$ & $1.8 \times 10^{-3}$ & $1.58$ 
			& & $1.3 \times 10^{-5}$ & $2.7 \times 10^{-2}$ & $1.31$ 
			& & $1.3 \times 10^{-5}$ & $1.7 \times 10^{-2}$ & $1.28$ \\ 
			\hline
			1ho7
			& & $1.5 \times 10^{-6}$ & $1.7 \times 10^{-3}$ & $0.99$ 
			& & $1.9 \times 10^{-5}$ & $3.9 \times 10^{-2}$ & $1.68$ 
			& & $1.4 \times 10^{-5}$ & $3.0 \times 10^{-2}$ & $0.70$ \\ 
			\thickhline
			1ckz
			& & $7.7 \times 10^{-7}$ & $2.1 \times 10^{-3}$ & $10.26$ 
			& & $3.9 \times 10^{-6}$ & $2.4 \times 10^{-2}$ & $5.15$ 
			& & $3.6 \times 10^{-6}$ & $9.4 \times 10^{-3}$ & $4.82$ \\ 
			\hline
			1lfc
			& & --- & --- & --- 
			& & --- & --- & --- 
			& & $9.1 \times 10^{-6}$ & $3.2 \times 10^{-2}$ & $0.47$ \\ 
			\hline
			1a11
			& & $5.5 \times 10^{-7}$ & $1.7 \times 10^{-3}$ & $1.43$ 
			& & $1.1 \times 10^{-5}$ & $4.5 \times 10^{-2}$ & $0.65$ 
			& & $9.9 \times 10^{-6}$ & $4.2 \times 10^{-2}$ & $0.75$ \\ 
			\thickhline
			1ho0
			& & $6.1 \times 10^{-7}$ & $1.8 \times 10^{-3}$ & $9.46$ 
			& & $8.9 \times 10^{-6}$ & $4.1 \times 10^{-2}$ & $7.68$ 
			& & $7.7 \times 10^{-6}$ & $4.2 \times 10^{-2}$ & $5.71$ \\ 
			\hline
			1mmc
			& & --- & --- & --- 
			& & --- & --- & --- 
			& & --- & --- & --- \\ 
			\hline
			1d0r
			& & $5.7 \times 10^{-7}$ & $1.7 \times 10^{-3}$ & $2.29$ 
			& & $4.7 \times 10^{-6}$ & $2.1 \times 10^{-2}$ & $1.90$ 
			& & $5.3 \times 10^{-6}$ & $1.9 \times 10^{-2}$ & $1.26$ \\ 
			\thickhline
			1zwd
			& & $3.6 \times 10^{-7}$ & $2.1 \times 10^{-3}$ & $5.16$ 
			& & $6.1 \times 10^{-6}$ & $1.6 \times 10^{-2}$ & $4.59$ 
			& & --- & --- & --- \\ 
			\hline
			1d1h
			& & --- & --- & --- 
			& & --- & --- & --- 
			& & $3.4 \times 10^{-6}$ & $3.4 \times 10^{-2}$ & $0.41$ \\ 
			\hline
			1spf
			& & $6.5 \times 10^{-7}$ & $2.1 \times 10^{-3}$ & $3.84$ 
			& & $5.3 \times 10^{-6}$ & $3.1 \times 10^{-2}$ & $3.61$ 
			& & $5.6 \times 10^{-6}$ & $1.7 \times 10^{-2}$ & $0.88$ \\ 
			\thickhline
			1aml
			& & $2.1 \times 10^{-7}$ & $2.1 \times 10^{-3}$ & $8.83$ 
			& & $4.6 \times 10^{-6}$ & $3.3 \times 10^{-2}$ & $8.11$ 
			& & $5.1 \times 10^{-6}$ & $4.3 \times 10^{-2}$ & $4.32$ \\ 
			\hline
			1ba4
			& & --- & --- & --- 
			& & --- & --- & --- 
			& & $4.8 \times 10^{-6}$ & $2.3 \times 10^{-2}$ & $2.71$ \\ 
			\hline
			1c56
			& & --- & --- & --- 
			& & --- & --- & --- 
			& & --- & --- & --- \\ 
			\thickhline
		\end{tabular}
	\end{table}
	
	From these results, we observe that the values of $\overline{\text{MDE}}$ and $\overline{\text{LDE}}$ obtained by \textit{i}BP tend to be lower than those from the other methods. One plausible explanation for this behavior is that \textit{i}BP tends to favor sampling points near the center of each interval. Since sampling is performed over angular intervals but only those points that also satisfy all distance constraints are retained, many extreme points are likely discarded. As a result, the retained configurations tend to concentrate around central values, potentially yielding smaller deviations. In contrast, both \textit{i}ABP and \textit{i}TBP begin the sampling process from the interval boundaries, which can propagate minor numerical variations more directly into the deviation metrics MDE and LDE. While this tendency may introduce slightly larger deviations in some cases, the corresponding values of $\overline{\text{MDE}}$ and $\overline{\text{LDE}}$ for \textit{i}ABP and \textit{i}TBP still remain within acceptable and satisfactory bounds.
	
	Regarding the RMSD metric, the overall trend shows that the values of $\underline{\text{RMSD}}$ tend to be lower for \textit{i}ABP than for \textit{i}BP, with only a few exceptions where \textit{i}BP yielded slightly smaller values. This suggests that \textit{i}ABP is more robust in producing biologically plausible structures; solutions that, upon visual or structural inspection, may be considered closer to practical, real-world conformations. When examining the performance of \textit{i}TBP, the results are even more striking: this algorithm consistently produces structures with the smallest RMSD values among the three methods. This behavior is expected, as \textit{i}TBP integrates additional geometric information (torsion angle signs), which effectively guides the search toward conformations that more closely resemble actual protein structures, without the need for any post-processing. These findings reinforce the idea that the \textit{i}TBP strategy is the most robust among the three for applications in realistic molecular modeling scenarios.  
				
	\section{Conclusions and Future Directions}\label{sec:6}
	
	Motivated by the ideas proposed in \cite{2021_lavor_itspitibpaftdmdgp} for the \textit{i}DMDGP, we introduced two novel algorithmic frameworks for solving the \textit{i}DDGP in $\mathbb{R}^3$: the \textit{i}ABP and the \textit{i}TBP. These algorithms exploit geometric properties of torsion angles to improve the discretization and pruning strategies of the classical \textit{i}BP method. The \textit{i}ABP reformulates distance intervals as angular intervals, enabling a more structured and constraint-aware sampling process. The \textit{i}TBP further extends this approach by incorporating torsion angle information that may be known \textit{a priori}, allowing the algorithm to enforce local chirality that are critical in biomolecular contexts.
	
	We also proposed a systematic strategy to generate \textit{i}DDGP instances directly from PDB structures, establishing a consistent and biologically meaningful modeling pipeline. Our computational experiments illustrated that both \textit{i}ABP and \textit{i}TBP outperform \textit{i}BP in terms of computational efficiency and number of instances successfully solved. Furthermore, we found that the robustness of the algorithms depends strongly on the angular sampling resolution. In particular, \textit{i}BP required $13$ samples per angular interval (totaling $26$ per pair) to ensure consistent performance, whereas both \textit{i}ABP and \textit{i}TBP were effective with just $5$ samples per interval ($10$ per pair). Nevertheless, further improvements are necessary, as not all instances could be solved within the allotted time. In particular, \textit{i}TBP proved more robust in producing biologically valid solutions, with lower RMSD variance and greater consistency with known geometric constraints. These findings suggest that torsion-angle-based discretization schemes can offer significant advantages in protein structure determination, particularly under noisy or incomplete experimental conditions, and could serve as a foundation for future developments in structure inference algorithms.
	
	As further directions, we emphasize that the results obtained are highly dependent on the chosen \textit{hc} order. Since alternative orderings can be defined, investigating optimal vertex orders remains an open problem and is a subject for future research. Moreover, the use of the cross product in this study implies that many of the geometric arguments are specific to $\mathbb{R}^3$. A natural extension of this work is to explore if the proposed framework can be generalized to Euclidean spaces $\mathbb{R}^K$ with $K > 3$.
	
	\subsection*{Acknowledgments}

	\FUNDING

	\subsection*{Declaration of generative AI and AI-assisted technologies in the writing process}
	
	During the preparation of this work the author(s) used ChatGPT (OpenAI) in order to improve language and readability. After using this tool, the author(s) reviewed and edited the content as needed and take full responsibility for the content of the publication.
\end{document}